\newif\iflong 
\newif\ifshowcomments 
\newif\ifappendix 
\newif\ifhal
\newcommand{\authorrunning}[1]{}
\newcommand{\Copyright}[1]{}
\newcommand{\keywords}[1]{\def\thekeywords{#1}}
\newtheorem{theorem}{Theorem}
\newtheorem{lemma}{Lemma}
\newtheorem{corollary}{Corollary}
\newtheorem{proposition}{Proposition}
\newtheorem{claim}{Claim}
\newtheorem{definition}{Definition}
\def\Box{\hbox{\hskip 1pt \vrule width 4pt height 8pt depth 1.5pt \hskip 1pt}}
\newenvironment{proof}{\medskip\noindent\textbf{Proof.}}{{}\hfill$\Box$\\}
\newenvironment{claimproof}{\medskip\noindent\textit{Proof.}}{{}\hfill$\lhd$\\}
\newcommand{\EARLY}{\textsc{Foremost-path TGR}\xspace}
\newcommand{\REARLY}{\textsc{Ranged-Foremost-path TGR}\xspace}
\newcommand{\RNSEARLY}{\textsc{Ranged-NS-Foremost-path TGR}\xspace}
\newcommand{\FAST}{\textsc{Fastest-path TGR}\xspace}
\newcommand{\NSFAST}{\textsc{NS-Fastest-path TGR}\xspace}
\newcommand{\SHORT}{\textsc{Shortest-path TGR}\xspace}
\newcommand{\NSSHORT}{\textsc{NS-Shortest-path TGR}\xspace}
\newcommand{\PFAST}{\textsc{Periodic Fastest-path TGR}\xspace}
\newcommand{\PNSFAST}{\textsc{Periodic NS-Fastest-path TGR}\xspace}
\newcommand{\MCC}{\textsc{Multicolored Clique}\xspace}
\newcommand{\Reach}{\textsc{Reachability Graph Realization}\xspace}
\newcommand{\NReach}{\textsc{Non-Strict Reachability Graph Realization}\xspace}
\newcommand{\SAT}{\textsc{SAT}\xspace}
\newcommand{\FoTGR}{\textsc{Foremost-path TGR}\xspace}
\newcommand{\NSFoTGR}{\textsc{NS-Foremost-path TGR}\xspace}
\DeclareMathOperator{\Fo}{Foremost}
\DeclareMathOperator{\Fa}{Fastest}
\DeclareMathOperator{\Sh}{Shortest}
\DeclareMathOperator{\NSFo}{NS-Foremost}
\DeclareMathOperator{\NSFa}{NS-Fastest}
\DeclareMathOperator{\NSSh}{NS-Shortest}
\DeclareMathOperator{\Prescribed}{Prescribed}
\DeclareMathOperator{\Periodic}{Periodic}
\newcommand{\NSECompat}{NSEdgeCompat}
\newcommand{\Oh}{\mathcal{O}}
\newcommand{\tOh}{\widetilde{\mathcal{O}}}
\newcommand{\mg}{\mathcal{G}}
\newcommand{\G}{\mathcal{G}}
\newcommand{\T}{\mathcal{T}}
\newcommand{\M}{M}
\newcommand{\TG}{\mathcal{TG}}
\newcommand{\todom}[2][]{\todo[#1,color=green!50]{#2}}
\newcommand{\todomi}[1]{\todom[inline]{#1}}
\newcommand{\lvtodo}[1]{\todo[color=blue!50]{#1}}
\newcommand{\lv}[1]{\textcolor{blue}{LV: #1}}
\newcommand{\jctodo}[1]{\todo[color=red!50]{#1}}
\newcommand{\jc}[1]{\textcolor{red}{#1}}
\newcommand{\lvnew}[1]{\textcolor{blue}{#1}}
\newcommand{\todom}[2][]{}
\newcommand{\todomi}[1]{}
\newcommand{\lvtodo}[1]{}
\newcommand{\lv}[1]{}
\newcommand{\jctodo}[1]{}
\newcommand{\jc}[1]{}
\newcommand{\lvnew}[1]{#1}
\newcommand{\prob}[3]{
\smallskip

\noindent \textsc{#1}:

\noindent \textbf{Input:} #2

\noindent \textbf{Question:} #3

\smallskip
}
\authorrunning{J.~Cauvi, N.~Morawietz, and L.~Viennot}
\newcommand{\ol}[1][x]{\overline{#1}}
\title{Foremost, Fastest, Shortest: Temporal Graph Realization under Various Path Metrics%
\thanks{Supported by the French ANR, projects ANR-22-CE48-0001 (TEMPOGRAL)  and ANR-24-CE48-4377 (GODASse).}}
\author[1]{Justine Cauvi}
\affil[1]{\'Ecole Normale Supérieure de Lyon, Lyon, France\\Inria, DI ENS, Paris, France}
\author[2]{Nils Morawietz}
\affil[2]{LaBRI, Université de Bordeaux, France\\Institute of Computer Science, Friedrich Schiller University Jena,  Germany}
\author[3]{Laurent Viennot}
\affil[3]{Inria, DI ENS, Paris, France}
\date{October 2025}
\keywords{network design, temporal paths, foremost paths, fastest paths, shortest paths, non-strict paths, periodic temporal graphs}
\title{Foremost, Fastest, Shortest: Temporal Graph Realization under Various Path Metrics}
\author{Justine Cauvi}{\'Ecole Normale Supérieure de Lyon, Lyon, France\\Inria, DI ENS, Paris, France}{justine.cauvi@ens-lyon.fr}{https://orcid.org/0009-0001-5444-8877}{}
\author{Nils Morawietz}{LaBRI, Université de Bordeaux, France\\Institute of Computer Science, Friedrich Schiller University Jena,  Germany}{nils.morawietz@uni-jena.de}{https://orcid.org/0000-0002-7283-4982}{}
\author{Laurent Viennot}{Inria, DI ENS, Paris, France}{}{}{}
\keywords{network design, temporal paths, foremost paths, fastest paths, shortest paths, non-strict paths, periodic temporal graphs}
\begin{document}

\maketitle 

\ifappendix
\begin{textblock}{15}(3,3) 
	\Huge\bf Appendix (full version)
\end{textblock}
\fi

\begin{abstract}
In this work, we follow the current trend on temporal graph realization, where one is given a property~$P$ and the goal is to determine whether there is a temporal graph, that is, a graph where the edge set changes over time, with property~$P$. 
We consider the problems where as property~$P$, we are given a prescribed matrix for the duration, length, or earliest arrival time of pairwise temporal paths.
That is, we are given a matrix~$D$ and ask whether there is a temporal graph such that for any ordered pair of vertices~$(s,t)$, $D_{s,t}$ equals the duration (length, or earliest arrival time, respectively) of any temporal path from~$s$ to~$t$ minimizing that specific temporal path metric.
For shortest and earliest arrival temporal paths, we are the first to consider these problems as far as we know.
We analyze these problems for many settings like: strict and non-strict paths, periodic and non-periodic temporal graphs, and limited number of labels per edge (that is, limited occurrence number per edge over time). 
In contrast to all other path metrics, we show that for the earliest arrival times, we can achieve polynomial-time algorithms in periodic and non-periodic temporal graphs and for strict and and non-strict paths.
However, the problem becomes NP-hard when the matrix does not contain a single integer but a set or range of possible allowed values.
As we show, the problem can still be solved efficiently in this scenario, when the number of entries with more than one value is small, that is, we develop an FPT-algorithm for the number of such entries.
For the setting of fastest paths, we achieve new hardness results that answers an open question by Klobas, Mertzios, Molter, and Spirakis [Theor. Comput. Sci. '25] about the parameterized complexity of the problem with respect to the vertex cover number and significantly improves over a previous hardness result for the feedback vertex set number.
When considering shortest paths, we show that the periodic versions are polynomial-time solvable whereas the non-periodic versions become NP-hard.
\end{abstract}

\ifhal
\textbf{Keywords: }\thekeywords
\clearpage
\fi

\ifdefined\shortversion
\vfill
\noindent {Due to space constraints, proofs of results marked with $\star$ are (partially) deferred to a clearly marked appendix.}
\clearpage
\fi

\section{Introduction}
Graph realization problems have been studied since the 1960s and consists in finding a static graph that satisfies a desired property~$P$ or answering that no such graph exists. 
The earliest example of such a problem is the case of degree sequence realization, where one is given a non-decreasing sequence~$(d_1,\dots, d_n)$ of natural numbers, and one asks whether there is an undirected graph~$G$ with vertex set~$\{1,\dots,n\}$, such that vertex~$i$ has degree exactly~$d_i$.
This problem was introduced by Erd\H{o}s and Gallai~\cite{erdos1960} and has been generalized until recently (see, e.g., \cite{BarNoyCPR2020} when ranges are given for each degree).
In another early graph realization problem by Hakimi~and~Yau~\cite{hakimi1965}, one is given an~$n\times n$ distance matrix~$D$ and asks whether there is a static graph on~$n$ vertices, where for each ordered pair~$(i,j)$ of vertices, the shortest path from~$i$ to~$j$ has length exactly~$D_{i,j}$.
Since then, graph realization problems were considered in many variations and for many other desirable properties to realize.
Recently, Klobas, Mertzios, Molter and Spirakis~\cite{KMMS24} lifted graph realization problems to the realm of temporal graphs motivated by the realization problem for distance matrices on static graphs.
Here, a~\emph{temporal graph}~$\mg$ is a finite sequence~$(G_1,\dots,G_L)$ of static graphs that are all defined over the same vertex set.
Temporal graphs are a valuable tool to model and analyze the behavior of real world dynamic networks~\cite{casteigts2012time}.
In the problem introduced by Klobas et al.~\cite{KMMS24}, one asks whether our desired property is fulfilled by some temporal graph.\footnote{It is worth mentioning that Göbel, Cerdeira and Veldman~\cite{GCV91} considered a connectivity problem that can also be seen as a temporal graph realization problem.
} 
They introduced the following problem: 
\prob{\FAST}
{An~$n\times n$ distance matrix $D$.}
{Is there a temporal graph $\mg$ with $n$ vertices such that for any ordered pair~$(s,t)$ of vertices, the fastest temporal path\footnote{A (strict) temporal path can start at any time step and is allowed to traverse only a single edge per snapshot, and its duration is the difference between the starting time and the arrival time (see~\Cref{sec prelim}).} from~$s$ to~$t$ has duration~$D_{s,t}$?}
More precisely, the authors considered this problem\footnote{They analyzed the problem under the name \textsc{Simple (Periodic) Temporal Graph Realization}.}  
where only a single label per edge is allowed and where the temporal graph is periodic, that is, where the same edges repeat every~$\Delta$ time steps for some period $\Delta>0$.
 They showed that the problem is NP-hard and they exhibited an FPT algorithm parameterized by the feedback edge number of the underlying graph\footnote{The underlying graph is uniquely defined via the vertex pairs~$(s,t)$ for which~$D_{s,t}=D_{t,s} = 1$.} while they showed that it is W[1]-hard when parameterized by the feedback vertex set number.
  Erlebach, Morawietz and Wolf~\cite{EMW24} generalized the problem by allowing each edge to appear up to~$\ell$ times per period and proved that this remains NP-hard even on underlying graphs that are trees for~$\ell = 5$. 
   In~\cite{mertzios2025}, the authors studied the problem where upper bounds on the fastest paths are given and the underlying graph is a tree.
   This was further considered for directed graphs by Meusel, Müller-Hannemann and Reinhardt~\cite{MMR25}.
   Further recent papers on temporal graph realization include: designing a temporal graph for which the fastest path should not have duration more than~$\alpha$ times the real distance~\cite{mertzios2025-2}, designing a temporal graph which should have a prescribed reachability relation between the vertices~\cite{EMM25}, and designing a temporal graph for which all pairs of agents can pairwise reach each other via strict temporal paths, with one label per edge, while the degree sequence of the underlying graph is prescribed~\cite{CDM25}.
   All these problems are motivated both from a design perspective, where we aim to design a network with a desired behavior, or from a verification perspective, where we want to verify that the behavior of our network is correct or at least plausible.   
From the perspective of temporal network design problems, the field is even more vibrant (see, e.g.,~\cite{kempe2000connectivity,mertzios2013temporal,MertziosMS19,enright2021deleting,zschoche2020complexity,deligkas2022optimizing,akrida2017complexity,klobas2024complexity}).


\subparagraph{Our Results.}
In this work, we extend the previous work on \FAST to the non-strict case (that is, where arbitrary many edges per snapshot can be traversed). Furthermore, we consider the other most famous temporal path metrics $\Fo$ and $\Sh$ (formally defined in~\Cref{sec prelim}).
Roughly speaking, \EARLY requires that the earliest arrival time at~$t$, minimized over all temporal $st$-path, is equal to~$D_{s,t}$ for each vertex pair~$(s,t)$, whereas \SHORT requires that the number of edges, minimized over all temporal $st$-path, is equal to~$D_{s,t}$ for each vertex pair~$(s,t)$. 
Our main results are as follows:
\begin{enumerate}
\item In~\Cref{sec early} we show that all considered version (strict/non-strict, periodic/non-periodic) of \EARLY are polynomial-time solvable, if we are allowed to put an arbitrary number of labels per edge.  
This is surprising, as all other previously mentioned temporal graph realization problems turn out to be NP-hard (with one exception~\cite{CDM25}).
In particular, we show that all our algorithms produce a labeling with
\iflong
at most~$n$ labels per edge\lvtodo{not stated in the sequel} and 
\fi
at most~$n^2$ time labels in total, if dealing with a realizable instance. This is somehow tight as some realizable matrices do require $\Omega(n^2)$ time labels%
\iflong 
~or have edges that need at least~$\Omega(n)$ time labels.
\else .
\fi
\item To show the limitations of the tractability, we show that \EARLY becomes NP-hard if (i)~we are only allowed to assign one label per edge or (ii)~the matrix~$D$ contains for each vertex pair more than one entry and we are to choose which of these possible values we want to realize.
For the latter problem version, we present a single exponential FPT-algorithm when parameterized by the number of entries in~$D$ that have more than one possible value.
\item In~\Cref{sec fast} we consider \FAST.
Among other results, we answer open questions by Klobas et al.~\cite{KMMS24} and Erlebach et al.~\cite{EMW24} about the parameterized complexity of the problem for the vertex cover number.
We show that the problem is W[1]-hard when parameterized by this parameter plus the largest entry in the matrix.
This result improves significantly over the previous known parameterized hardness result for the feedback vertex set number; in terms of the parameter, the construction, and the length.
\item 
In~\Cref{sec short} we consider \SHORT and show that the problem for both the strict and the non-strict case are NP-hard, but becomes trivial when considering a periodic temporal graph.
\end{enumerate}
Finally, in~\Cref{sec conc}, we conclude with some open questions for future work.

\iflong
\todom{Insert a "Technical highlights" subparagraph?}
\fi

\section{Preliminaries}\label{sec prelim}

For natural numbers~$1\leq i \leq j$, we let~$[i,j]:= \{i, i+1, \dots, j\}$ and we define~$[j]:= [1,j]$.

\textbf{(Static) graphs.}
An (undirected) \emph{graph} $G=(V,E)$ is defined by its vertex set $V$ and its edge set $E\subseteq \binom{V}{2}$. Any pair $\{u,v\}\in E$ is called an \emph{edge} between vertices $u$ and $v$. We also say that $u$ and $v$ are \emph{neighbors} when $\{u,v\}\in E$. 
For a vertex set~$S\subseteq V$, we define the \emph{subgraph of~$G$ induced by~$S$} as~$G[S] := (S,E \cap \binom{S}{2})$. 
A graph $H=(V',E')$ is a \emph{subgraph} of $G$ if~$V'\subseteq V$ and~$E'\subseteq E$.
A \emph{$uv$-walk} is a sequence $P=(v_0=u,v_1,\ldots,v_k=v)$ of vertices such that $\{v_{i-1},v_i\}$ is an edge for every $i\in [k]$. We then say that $P$ is a walk from $u$ to $v$ in $G$. Such a walk is called a path if the vertices $v_0,\ldots,v_k$ are pairwise distinct.

\textbf{Temporal graphs.}
A \emph{temporal graph} is defined by a pair $\mg=(G,\lambda)$ where $G=(V,E)$ is a graph, and $\lambda:E\rightarrow 2^{\mathbb{N}_{>0}}$ is a labeling that associates to each edge $e\in E$ the set $\lambda(e)$ of (positive) times when $e$ appears. The graph $G$ is called the \emph{underlying graph} of $\mg$ and the labeling $\lambda$ is called the \emph{time labeling} of $\mg$. When $\mg$ is clear from the context, we let $n=|V|$ denote its number of vertices. A \emph{temporal edge} is a couple $(\{u,v\},\tau)$ such that $\{u,v\}\in E$ and $\tau\in \lambda(\{u,v\})$. Its \emph{appearance time} is $\tau$. Given a time label $\tau\in\mathbb{N}_{>0}$, we define the set $E_\tau=\{e\in E : \tau \in \lambda(e)\}$ of edges appearing at time $\tau$, and $G_\tau=(V,E_\tau)$ is called the \emph{snapshot} of $\mg$ at time $\tau$. The size of a temporal graph can be measured by its number $\sum_{e\in E}|\lambda(e)|$ of time labels.
A temporal graph is said to be \emph{$\Delta$-periodic} if each edge appears periodically with period $\Delta\in\mathbb{N}_{>0}$, that is $\tau\in\lambda(e)$ if and only if $\tau\mod\Delta \in\lambda(e)$. Such a temporal graph is represented by its list of temporal edges up to time $\Delta$.
%
In the following, we always assume that the vertices of a temporal graph $\G$ are numbered from $1$ to~$n$. We assume without loss of generality that its vertex set is $V=[n]$. We let $\TG$ denote the set of all such temporal graphs. 

\textbf{Temporal paths.}
A \emph{strict} (resp. \emph{non-strict}) \emph{temporal $uv$-walk} is a walk $P=(v_0=u,v_1,\ldots,v_k=v)$ in $G$ which is associated to time labels $(\tau_1,\ldots,\tau_k)$ such that $\tau_i\in\lambda(\{v_{i-1},v_i\})$ for each $i\in [k]$ and $\tau_1<\cdots <\tau_k$ (resp. $\tau_1\le\cdots \le\tau_k$). 
Equivalently, it can be defined as the sequence of temporal edges $(\{v_0,v_1\},\tau_1),\ldots,(\{v_{k-1},v_k\},\tau_k)$ satisfying $\tau_1<\cdots <\tau_k$ (resp. $\tau_1\le\cdots \le\tau_k$). 
It is said to have \emph{length} $k$, \emph{departure time} $\tau_1$, \emph{arrival time} $\tau_k$, and \emph{duration} $\tau_k-\tau_1+1$ (number of time steps spanned). 
It is called a \emph{temporal $uv$-path} if the vertices $v_0,\ldots,v_k$ are pairwise distinct. Note that a strict (resp. non-strict) temporal walk can always be transformed into a strict (resp. non-strict) temporal path by removing loops, and this can only reduce length, arrival time and duration.
By default, we consider strict temporal paths, and simply call them temporal paths. We specify non-strict for non-strict temporal paths.

\textbf{Temporal path metrics.} 
Classically, a temporal $uv$-path is said to be \emph{shortest}, \emph{foremost}, or \emph{fastest} if it has minimum length, arrival time, or duration, respectively among all temporal $uv$-paths. These notions indeed define some kind of metrics that we now formalize. A \emph{distance matrix} $D$ is any matrix of size $n\times n$ with values in $\mathbb{N}\cup\{\infty\}$ and that satisfies the following very loose notion of metric: $D_{uv}=0$ if and only if $u=v$, for all $u,v\in [n]$. It is not assumed that $D$ satisfies any other specific properties. 
\iflong
In particular, it may violate the triangle inequality.
\fi
A \emph{temporal path metric} is defined as a function $\M$ that associates a distance matrix $\M(\G)$ to any temporal graph $\G\in \TG$. Consider for example,
a \emph{cost function} $C$ that associates a positive cost in $\mathbb{N}_{>0}$ to any temporal path given as a sequence of temporal edges (independently of any temporal graph). It defines a \emph{temporal path metric} by associating to any temporal graph $\G\in \TG$ the matrix $C(\G)$ (resp. $\text{NS-}C(\G)$) such that $C(\G)_{uv}$ (resp. $\text{NS-}C(\G)_{uv}$)  is the minimum cost of a strict (resp. non-strict) temporal $uv$-path in $\G$. We define $C(\G)_{uv}:=\infty$ (resp. $\text{NS-}C(\G)_{uv}:=\infty$) if no strict (resp. non-strict) temporal $uv$-path exists and $C(\G)_{uv}:=0$ (resp. $\text{NS-}C(\G)_{uv}:=0$) if $u=v$. The foremost, fastest, and shortest notions are indeed associated to the following cost functions: arrival time, duration, and length, respectively. We let $\Fo$, $\Fa$, and $\Sh$ denote the corresponding temporal path metrics, respectively. We also let $\NSFo$, $\NSFa$, and $\NSSh$ denote the respective variants for non-strict temporal paths. For example, given a temporal graph $\G$, and two vertices $u,v$ in $\G$, $\Fo(\G)_{uv}$ is the earliest arrival time of a strict temporal $uv$-path in $\G$.

\textbf{Temporal graph realization.} 
Given an integer $n$, the temporal graph realization problem consists in finding a temporal graph with $n$ vertices that satisfies a given property. 
We assume that this property can be expressed as an input sequence $I\in\{0,1\}^*$ of bits, given that the vertices of the temporal graph are $1,\ldots,n$.
More precisely, we define a \emph{predicate} $P$ as a binary relation between the set $\TG$ of all temporal graphs and the set $\{0,1\}^*$ of all bit sequences, that is $P$ is a subset of $\TG\times \{0,1\}^*$. We then say that a temporal graph $\G$ \emph{satisfies} a sequence $I$ of bits for $P$ if $(\G,I)\in P$. We equivalently say that $P(\G,I)$ is \emph{satisfied}, or that $\G$ is a \emph{realization} of $I$ for $P$.
As a simple example, the \emph{lifetime} of a temporal graph can be tested by the predicate $P(\G,\Lambda):=\max(\cup_{e\in E} \lambda(e)) = \Lambda$ where $\G=(G,\lambda)$ and $\Lambda$ encodes an integer (that we denote also by $\Lambda$ with a slight abuse of notation). That is $P(\G,\Lambda)$ is satisfied when $\Lambda$ is the last appearance time of a temporal edge of $\G$.
Given a predicate $P$ we thus define the following (very general) problem.
\prob{$P$ Temporal Graph Realization ($P$ TGR)}
{A number $n$ and a sequence $I$ of bits.}
{Is there a temporal graph $\mg$ with $n$ vertices such that $P(\G,I)$ is satisfied?}

This paper focuses on temporal path metric realization. More precisely,
considering a temporal path metric $\M\in\{\Fo, \Fa, \Sh, \NSFo, \NSFa$, $\NSSh\}$, 
we define the \emph{$M$-temporal-path-metric} predicate, or \emph{$M$-path} for short, as $P(\G,I) := M(\G) = D$ where the sequence $I$ of bits encodes an $n\times n$ distance matrix $D$ where $n$ is the number of vertices of $\G$. For example, $\Fo$-path TGR is the following problem.
\prob{\FoTGR}
{A number $n$ and an $n\times n$ distance matrix $D$.}
{Is there a temporal graph $\mg$ with $n$ vertices such that $\Fo(\G)=D$?}

When considering an instance of $P$ TGR, we always let $n$ denote the associated number. Note that, in the case of an \textsc{$M$-path TGR} instance, the input has size $\Theta(n^2)$. For brevity, given an input sequence $I$ encoding a distance matrix $D$, a realization $\G$ of $I$ for $M$-path is simply called an \emph{$M$-realization} of $D$.
If further the metric~$M$ is clear from the context, we simply say that 
$\mg$ is a~\emph{realization} of~$D$ or that $\G$ \emph{realizes} $D$. 

We also combine $M$-path predicates with additional requirements. In particular,
when the input $I$ encodes a prescribed (static) graph $G_p$ with $n$ nodes, or a period $\Delta$, 
we define the following additional predicates, respectively:
\begin{itemize}
	\item $\Prescribed(\G,G_p) := $ the underlying graph of $\G$ is a subgraph of $G_p$,
	\item $\Periodic(\G,\Delta) := $ $\G$ is $\Delta$-periodic.
\end{itemize} 
Given two predicates $P,Q$, the problem $P$ $Q$ TGR asks whether there exists a temporal graph satisfying both $P$ and $Q$ for a given pair of inputs for $P$ and $Q$.
For example, \textsc{Periodic \FoTGR} asks, given an $n\times n$ distance matrix $D$ and a period $\Delta$, whether there exists a $\Delta$-periodic temporal graph that is a $\Fo$-realization of $D$.
We also sometimes refer to $P$ $Q$ TGR as $Q$ TGR where $P$ is required, especially if $P$ is specified in plain text without defining a formal name for it.

\section{Foremost paths}\label{sec early}

We first consider the $\Fo$-path TGR problem. Recall that, given an $n\times n$ distance matrix $D$, it consists in checking whether there exists a temporal graph $\G$ whose foremost matrix is $D$, i.e. $\Fo(\G)=D$. We also consider its non-strict variant (using $\NSFo$), and combinations with Prescribed and Periodic additional requirements.
Similar results can be obtained for latest departure using a time-reversal argument.
\iflong
More precisely, the latest-departure temporal path metric is associated to the cost function that assigns the opposite of the departure time to a temporal path. The time-reversal $\G'$ of a temporal graph $\G$ is obtained by changing each time label $\tau$ by $\Lambda+1-\tau$ where $\Lambda$ is the maximum time label. As a result any latest-departure temporal $uv$-path in $\G$ corresponds to a foremost temporal $vu$-path in $\G'$ where edges are traversed in reverse order.
\fi
We often implicitly assume that we are given an $n\times n$ distance matrix $D$.
\iflong
\else
We first show that these main variants of the problem can be solved in polynomial time.
\fi

\subsection{Polynomial time algorithms}\label{se:poltime-foremost}

\iflong
We first show that these main variants of the problem can be solved in polynomial time.
\fi

\paragraph*{Strict Foremost paths}
In the strict setting, we can indeed state the following.

\begin{theorem}\label{th:foremost}
	\FoTGR can be solved in $\Oh(n^3\log n)$ time and $\Oh(n^2)$ space. Furthermore, if dealing with a realizable instance, a realization with at most $n^2$ time labels can be computed with same complexity.
\end{theorem}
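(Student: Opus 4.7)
The plan is to construct the realization greedily, processing pairs $(s,t)$ with finite $D_{s,t}$ in non-decreasing order of $D_{s,t}$ and, for each pair not already covered, adding a single time label $\tau = D_{s,t}$ to one edge incident to $t$. Since each pair triggers at most one insertion, the output has at most $n(n-1) \le n^2$ time labels. A pair $(s,t)$ is considered already covered at the moment of its processing if the current partial graph contains an edge $\{u',t\}$ labelled $\tau$ with $D_{s,u'}\le\tau-1$. Otherwise, I would look for a \emph{safe predecessor} $u$, that is a vertex satisfying (i) $D_{s,u}\le\tau-1$, (ii) $D_{s',u}\le\tau-1\Rightarrow D_{s',t}\le\tau$ for every $s'$, and (iii) $D_{s',t}\le\tau-1\Rightarrow D_{s',u}\le\tau$ for every $s'$. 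If such a $u$ exists, I add the time label $\tau$ to the edge $\{u,t\}$; otherwise the algorithm declares $D$ not realizable.

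Correctness rests on two observations. First, if $D$ is realizable by some $\G^*$, then the penultimate vertex $u$ of any foremost $st$-path in $\G^*$ satisfies (i)–(iii): conditions (ii) and (iii) express that inserting the temporal edge $(\{u,t\},\tau)$ is consistent with $D$, which must hold in $\G^*$ itself. Hence the algorithm never fails on a realizable instance. Second, an induction on path length shows that safety condition (ii), enforced for every inserted edge, propagates transitively: for any temporal path of the constructed graph $\G$ going from $s'$ to $v'$ with arrival time $\tau_k$, chaining (ii) along the edges of the path yields $D_{s',v'}\le\tau_k$, whence $\Fo(\G)\ge D$ pointwise. Combined with the explicit witness edge added for every uncovered finite pair, this gives $\Fo(\G)=D$, so in particular pairs with $D_{s,t}=\infty$ remain unreachable. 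The base step of the induction uses $D_{u,u}=D_{t,t}=0\le\tau-1$ together with (ii) and (iii) to cover length-one paths through the newly inserted edge.

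The main obstacle is to meet the $\Oh(n^3\log n)$ time bound. Sorting the $\Oh(n^2)$ pairs costs $\Oh(n^2\log n)$, leaving an $\Oh(n\log n)$ per-pair budget covering both the coverage test and the search for a safe~$u$. I plan to achieve this by preprocessing, for each target $t$, the row $D_{\cdot,t}$ sorted by value so that the level sets $\{s':D_{s',t}\le\tau\}$ are accessible incrementally; safety conditions (ii) and (iii) then reduce to inclusion tests between such level sets, answerable in $\Oh(\log n)$ per candidate $u$ using ordered data structures. Scanning the $\Oh(n)$ candidates per pair costs $\Oh(n\log n)$, summing to $\Oh(n^3\log n)$ overall, while the coverage test uses only $\Oh(n)$ per pair. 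Space is dominated by storing $D$ and the partial labelling, each $\Oh(n^2)$.
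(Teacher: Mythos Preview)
Your correctness argument is essentially the paper's: your conditions~(ii) and~(iii) are exactly the paper's notion of \emph{$\Fo$-edge-compatibility} (Definition~\ref{def:compat}), your observation that a realization's penultimate vertex is always safe is Lemma~\ref{lem:compat-real}, and your inductive propagation of~(ii)/(iii) along any temporal path to conclude $\Fo(\G)\ge D$ is the content of Lemma~\ref{lem:compat-add}. The sorting of pairs and the coverage test are unnecessary additions---the paper processes pairs in arbitrary order and always inserts one label per pair---but they do no harm.

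The genuine gap is in the time bound. You assert that conditions~(ii) and~(iii) ``reduce to inclusion tests between level sets, answerable in $\Oh(\log n)$ per candidate $u$ using ordered data structures,'' but you do not name a data structure that supports such a test. Set inclusion between two subsets of $[n]$ is not a standard $\Oh(\log n)$ operation, and your preprocessing only sorts column~$t$, whereas condition~(ii) also involves column~$u$, which varies over~$n$ candidates. The paper resolves this differently: it observes that~(ii) fails precisely when some $s'$ satisfies $D_{s',u}<\tau<D_{s',t}$, i.e., when $\tau$ lies in the open interval $(D_{s',u},D_{s',t})$. Fixing the target~$t$ and building, for each candidate~$u$, an interval tree over the $n$ intervals $[D_{s',u}+1,D_{s',t}-1]$ (and a symmetric tree for~(iii)) lets one answer each compatibility query by a single $\Oh(\log n)$ stabbing query. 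Building all $n$ trees for a fixed~$t$ costs $\Oh(n^2\log n)$, and the $n$ pairs sharing that~$t$ each cost $\Oh(n\log n)$, giving $\Oh(n^3\log n)$ overall. Without this (or an equivalent) concrete construction, your complexity claim is unsupported.
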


Note that the above complexity can be expressed as $\tOh(N^{3/2})$ where $N=\Theta(n^2)$ is the size of the input.
We also state that the above result is somehow tight in terms of the number of time labels as follows.


\iflong
\begin{proposition}
\else
\begin{proposition} 
\fi\label{prop:lower-bound}
There exists a family $(D^n)_{n\in \mathbb{N}_{>0}}$ of $n\times n$ distance matrices that are $\Fo$-realizable and such that any realization requires $\Omega(n^2)$ time labels.
\end{proposition}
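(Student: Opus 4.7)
The plan is to exhibit a simple explicit family $(D^n)_{n\ge 1}$ where the demanded arrival times are uniformly as small as possible, thereby forcing a direct edge labelled at the earliest time step for every ordered pair of distinct vertices. Concretely, I would take $D^n$ to be the all-ones off-diagonal matrix: $D^n_{ii}=0$ and $D^n_{ij}=1$ for all $i\neq j$ in $[n]$. The realizability side is immediate: let $\G^n$ be the complete graph $K_n$ with $\lambda(\{i,j\})=\{1\}$ on every edge. Then for any ordered pair $(i,j)$ with $i\neq j$, the single temporal edge $(\{i,j\},1)$ constitutes a strict temporal $ij$-path arriving at time $1$, so $\Fo(\G^n)_{ij}\le 1$. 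Since time labels are positive integers, no earlier arrival is possible, hence $\Fo(\G^n)=D^n$ and $D^n$ is $\Fo$-realizable.

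For the lower bound, I would invoke the structural observation that any strict temporal path of length $k$ has arrival time at least $k$, because its consecutive time labels satisfy $1\le \tau_1<\tau_2<\cdots<\tau_k$ and therefore $\tau_k\ge k$. Applied to an arbitrary realization $\G$ of $D^n$, the requirement $\Fo(\G)_{ij}=1$ for every pair $i\neq j$ forces any foremost temporal $ij$-path to have length exactly $1$, i.e.\ to consist of a single temporal edge $(\{i,j\},1)$. Consequently, the underlying graph of $\G$ must contain the edge $\{i,j\}$ with $1\in\lambda(\{i,j\})$, for every unordered pair $\{i,j\}\in\binom{[n]}{2}$. Summing over edges yields at least $\binom{n}{2}=\Omega(n^2)$ time labels in $\G$, which establishes the claim.

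I do not anticipate any real obstacle here: the argument is entirely self-contained and uses only the definition of strict temporal paths together with the fact that labels are positive. The only minor thing to double-check is that we measure size as $\sum_{e\in E}|\lambda(e)|$, matching the convention fixed in \Cref{sec prelim}; under that convention each of the $\binom{n}{2}$ forced pairs $(\{i,j\},1)$ contributes exactly one unit to the count, so the $\Omega(n^2)$ bound is tight up to a constant factor and sits below the $n^2$ upper bound guaranteed by \Cref{th:foremost}.
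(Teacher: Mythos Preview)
Your argument is correct: the all-ones off-diagonal matrix is $\Fo$-realizable by $K_n$ with a single label~$1$ on every edge, and any strict realization is forced to contain each edge $\{i,j\}$ with $1\in\lambda(\{i,j\})$, yielding $\binom{n}{2}=\Omega(n^2)$ labels. The key step---that a strict temporal path of length $k$ arrives no earlier than time $k$---is valid and cleanly pins down the underlying graph.

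The paper takes a genuinely different route. It builds a star rooted at vertex~$1$ whose foremost matrix $D^n$ has $\Omega(n^2)$ \emph{pairwise distinct} finite entries, and then observes that in any $\Fo$-realization, each distinct value $D_{uv}$ must occur as the time label of the last edge of some foremost $uv$-path. Hence the number of time labels is at least the number of distinct entries. Your approach is more elementary and arguably slicker for the bare statement. The paper's construction, however, buys two extensions that yours does not: (i)~the same family works for \emph{non-strict} foremost realizations (your argument collapses there, since a path graph with all edges at time~$1$ already realizes the all-ones matrix with $n-1$ labels), and (ii)~the underlying graph is a tree with $O(n)$ edges, so the lower bound survives even when the prescribed graph is sparse. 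Both points are exploited in the remark following the proposition.
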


\begin{proof}
	Consider the temporal graph $\G^n$ with $n$ vertices, whose underlying graph is a star rooted at $1$, and where, for all $v>1$, edge $\{1,v\}$ appears at times $nv,n(v+1)+v,n(v+2)+v,\ldots,n^2+v$. Its foremost matrix $D^n=\Fo(\G^n)$ then satisfies $D^n_{uv}=nu+v$ for $u>v>1$, $D^n_{uv}=nv$ for $1<u<v$, and $D^n_{1v}=D^n_{v1}=nv$ for $1<v$. These temporal graphs thus define a family $(D^n)_{n\in \mathbb{N}_{>0}}$ of distance matrices that are foremost realizable and that have $\Omega(n^2)$ pairwise distinct entries. This implies that any foremost realization of such a matrix $D^n$ must have $\Omega(n^2)$ time labels. The reason is that any foremost realization $\G'$ of an $n\times n$ distance matrix $D$ containing $p$ pairwise distinct entries must have at least $p$ time labels. Indeed, any entry $D_{uv}$ must correspond to some foremost temporal $uv$-path in $\G'$ whose last edge appears at time $D_{uv}$.
\end{proof}

\subparagraph{Remark.} The above lower bound also applies to non-strict foremost realizations (using a similar proof with the same families of matrices and temporal graphs). It also holds if we restrict the problem to sparse prescribed graphs with $O(n)$ edges, or even trees, as long as the star is a possible prescribed graph.

\iflong
\begin{proposition}
\label{prop:lower-bound single edge}
	There exists a family $(D^n)_{n\in \mathbb{N}_{>0}}$ of $n\times n$ distance matrices that are $\Fo$-realizable and such that any realization requires at least one edge with $\Omega(n)$ time labels.
\end{proposition}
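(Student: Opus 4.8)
The plan is to exhibit a family of temporal graphs whose foremost matrices force a single edge to carry $\Omega(n)$ distinct time labels, using the same idea as in the proof of Proposition \ref{prop:lower-bound} but concentrated on one edge rather than spread over a star. First I would take the underlying graph to be a star centered at vertex $1$, and arrange that the edge $\{1,2\}$ is responsible for the foremost arrival at many different target vertices. Concretely, consider a temporal graph $\G^n$ where edge $\{1,2\}$ appears at a carefully chosen set of times $t_2 < t_3 < \cdots < t_n$, and each edge $\{2,v\}$ for $v > 2$ appears at exactly one time $t_v' > t_v$, chosen just after $t_v$, while there is no direct edge $\{1,v\}$ for $v > 2$ (or if we insist on a star we route through $2$ via a short auxiliary gadget). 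Then the foremost temporal $1v$-path for each $v > 2$ must use $\{1,2\}$ at time $t_v$ and then $\{2,v\}$ at time $t_v'$, so $D^n_{1v} = t_v'$; choosing the $t_v'$ pairwise distinct makes these $n-2$ entries pairwise distinct and, more importantly, each forces a distinct label $t_v$ on the edge $\{1,2\}$ in any realization.

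The key step is the converse direction: I must argue that in \emph{every} $\Fo$-realization $\G'$ of $D^n$, the edge $\{1,2\}$ (equivalently, the first edge on the unique shortest-in-the-underlying-graph route to each $v$) carries $\Omega(n)$ labels. The argument runs as follows. Since $D^n$ is realizable and we have designed it so that the underlying graph of any realization is forced — e.g.\ because $D^n_{uv} = D^n_{vu} = 1$ exactly for the pairs we want to be edges — every realization $\G'$ has underlying graph equal to our star (or the designated sparse graph). Then for each $v > 2$, any foremost temporal $1v$-path in $\G'$ has length $2$, hence is of the form $(1,2,v)$ with labels $\tau_1 < \tau_2$, and $\tau_2 = D^n_{1v}$. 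Now I would choose the construction so that the \emph{departure} times are also pinned down: by also prescribing, say, $D^n_{v1}$ and the entries among $\{2,\dots,n\}$ appropriately, I force $\tau_1$ to lie in a prescribed window, and by making these windows pairwise disjoint across different $v$, the labels $\tau_1$ used on $\{1,2\}$ for different targets $v$ are pairwise distinct. Hence $|\lambda'(\{1,2\})| \ge n-2 = \Omega(n)$.

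The main obstacle I anticipate is exactly this pinning-down of the first label $\tau_1$: a realization is free to insert extra labels and extra paths, so merely prescribing foremost \emph{arrival} times does not obviously constrain which label of $\{1,2\}$ is used. The cleanest fix is a "gadget chain" rather than a star: route from $1$ to $v$ through a path $1, a_1, a_2, \dots$ of fixed length, forcing each intermediate hop to occur in a tight time interval, so that the arrival-time constraints at the intermediate vertices (which are also entries of $D^n$) cascade and force the label on the first edge. Alternatively, and perhaps more simply, one can adapt the matrix $D^n$ from Proposition \ref{prop:lower-bound} directly: there, edge $\{1,v\}$ already needed $\Omega(n)$ labels because it had to witness the foremost arrival at $v$ \emph{and} serve as an intermediate edge on foremost paths to all $u > v$ with prescribed arrival $D^n_{uv} = nu+v$; one checks that in that construction the edge $\{1,2\}$ is used as the second edge on foremost $u\!\to\!2$ paths for every $u > 2$, each requiring a distinct appearance time, giving the bound. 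I would verify that this already-constructed family works and present it with a short argument rather than building a new gadget, keeping the proof parallel to that of Proposition \ref{prop:lower-bound}.
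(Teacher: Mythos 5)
Your proposal has a genuine gap, and it is exactly the obstacle you yourself flag: in a $\Fo$-realization nothing prevents the adversary from adding extra edges and extra labels, so prescribing foremost arrival times does not pin down \emph{which} edge carries the label witnessing a given entry. Your claim that the underlying graph of every realization is forced ``because $D^n_{uv}=D^n_{vu}=1$ exactly for the pairs we want to be edges'' is false for foremost paths: an entry of value $1$ forces an edge with label $1$, but the converse fails, since a realization may contain any additional edge $(\{u,v\},\tau)$ as long as it is $\Fo$-edge-compatible with $D$ (this freedom is precisely what makes \FoTGR polynomial-time solvable in \Cref{th:foremost}). Your fallback of reusing the family from \Cref{prop:lower-bound} fails for the same reason: for the matrix $D^n$ there, the temporal edge $(\{u,2\},nu+2)$ is itself $\Fo$-edge-compatible with $D^n$ (one checks $EdgeCompat$ directly), so a realization can satisfy the entry $D^n_{u2}=nu+2$ by a direct edge rather than routing through $\{1,2\}$; the $\Omega(n^2)$ distinct entries can then be spread over $\Omega(n^2)$ distinct edges, and no single edge is forced to carry $\Omega(n)$ labels. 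The ``gadget chain'' idea is not carried out and faces the same difficulty at every intermediate hop.

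The paper sidesteps this entirely by a reduction from \Reach: Erlebach et al.\ [EMM25, Theorem~4] exhibit directed graphs $G^n$ such that every temporal graph with strict reachability graph $G^n$ has an edge with $\Omega(n)$ labels. Taking any fixed temporal graph $\mg^n$ realizing $G^n$ and setting $D^n:=\Fo(\mg^n)$, one observes that $D^n_{u,v}<\infty$ iff $(u,v)$ is an arc of $G^n$, so \emph{every} $\Fo$-realization of $D^n$ has reachability graph $G^n$ and hence inherits the single-edge lower bound. The lower bound is thus imported from the finiteness pattern of the matrix alone, which is invariant under the realizer's freedom to add compatible edges --- the very freedom that defeats your direct construction. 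To repair your approach you would need either to invoke that reachability result or to prove an analogous concentration lemma from scratch, which is a substantial missing piece.
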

\fi
\iflong
\begin{proof}
We obtain the results by deriving a similar result from the following problem.

	\prob{\Reach}
{A directed graph~$G=(V,A)$.}
{Is there a undirected temporal graph~$\mg$ with \emph{strict reachability graph} equal to~$G$, that is, is there for each pair of distinct vertices~$(u,v)$ a strict temporal path from~$u$ to~$v$ in~$\mg$ if and only if $(u,v)$ is an arc of~$G$?}

Erlebach et al.~\cite[Theorem~4]{EMM25} presented a family $(G^n)_{n\in \mathbb{N}_{>0}}$ of directed graphs on~$n$ vertices, for which~\Reach is a yes-instance and where each temporal graph~$\mg$ with reachability graph~$G^n$ has at least one edge with $\Omega(n)$~labels. 
Based on this family of graphs, we define our family of distance matrices.
For each~$n$, let~$\mg^n$ be an arbitrary but fixed temporal graph with reachability graph~$G^n$.
We define~$D^n$ to be the foremost matrix of~$\mg^n$.
Clearly, $D^n$ is $\Fo$-realizable (namely, by~$\mg^n$).
It thus remains to show that each temporal graph~$\mg'$ with foremost matrix~$D^n$ has at least one edge with at least~$\Omega(n)$ labels.
To this end, we show that the reachability graph of~$\mg'$ equals~$G^n$.
This then implies the statement due to the properties of~$G_n$. 
By definition of~$D^n$, $D^n_{u,v} < \infty$ if and only if~$(u,v) \in A$.
Thus, the reachability graph of~$\mg'$ is~$G^n$, as the foremost matrix of~$\mg'$ equals~$D^n$.
\end{proof}
\fi

\medskip
We now present a simple algorithm for computing a $\Fo$-realization that leads to $\Oh(n^4)$ time complexity.
It is based on the observation that any realization $\G$ of a distance matrix $D$ for $\Fo$-path must satisfy the following compatibility property.

\begin{definition}[Edge compatibility]\label{def:compat}
	Given an $n\times n$ distance matrix $D$, a temporal edge $(\{v,w\},\tau)$ is said to be \emph{$\Fo$-edge-compatible} with $D$ if it satisfies:
	$EdgeCompat(D,\{v,w\},\tau) := \forall x\in [n], 
	D_{xv}<\tau\Longrightarrow D_{xw}\leq \tau 
	\text{ and } D_{xw}< \tau \Longrightarrow D_{xv}\le \tau$.
\end{definition}

The intuition behind is the following. If an edge $\{v,w\}$ appears at time $\tau$ in a temporal graph $\G$, then any foremost temporal $xv$-path with arrival time less than $\tau$ can be extended by the temporal edge $(\{v,w\},\tau)$, implying that the foremost arrival time at $w$ is at most $\tau$. Moreover, if $\G$ is a realization of $D$, i.e. $D=\Fo(\G)$, and $D_{xv}<\tau$, then we must have $D_{xw}\le \tau$. By symmetry of edges, we must also have $D_{xw}< \tau \Longrightarrow D_{xv}\le \tau$. 
Note that, given an $n\times n$ distance matrix $D$ and a temporal edge $(\{v,w\},\tau)$, the property $EdgeCompat(D,\{v,w\},\tau)$ can easily be tested in $\Oh(n)$ time.

\smallskip
The algorithm consists in checking that for each entry $\tau=D_{uw}$ of $D$, there exists a vertex $v$ such that $D_{uv}<\tau$ and the temporal edge $(\{v,w\},\tau)$ is $\Fo$-edge-compatible with $D$. If this is the case, such a temporal edge $(\{v,w\},\tau)$ is added (see Algorithm~\ref{alg:foremost}). This condition is indeed necessary as a foremost temporal $uw$-path in a realization of $D$ must end with such an edge. We will show that this condition is also sufficient, and leads to a construction of a realization with at most $n^2$ time labels.

\begin{algorithm}[ht]
	\caption{\FoTGR}\label{alg:foremost}
	\begin{algorithmic}[1]
	\Require A number $n$ and an $n\times n$ distance matrix $D$
	\Ensure YES if there exists a temporal graph $(G,\lambda)$ that is a $\Fo$-realization of $D$, NO otherwise
	
	\State Let $C$ be the set of all couples $(u,w)$ such that $u\neq w$ and $D_{uw}<\infty$
	\State Let $\lambda$ be an empty labeling, i.e. $\lambda(\{u,v\})=\emptyset$ for all $u,v\in [n]$ 
	\State Return \textsc{Exists($D,C,\lambda$)}
	
	\smallskip
	
	\Procedure{Exists}{$D,C,\lambda$}
		\Comment{Given an $n\times n$ distance matrix $D$, $C\subset[n]\times[n]$ and a labeling $\lambda$}
		
		\ForAll{$(u,w)\in C$}
				\If{$\exists v\in [n]$ such that $D_{uv}<D_{uw}$ and $EdgeCompat(D,\{v,w\},D_{uw})$}\label{lin:test-inf}
					\State Add $D_{uw}$ to $\lambda(\{v,w\})$
				\Else 
					\State Return NO 
				\EndIf
		\EndFor
		\State Return YES
	\EndProcedure
	
	\end{algorithmic}
	\end{algorithm}
	

The correctness of the algorithm comes with defining how a temporal graph can partially realize a matrix as follows.

\begin{definition}
	A temporal graph $\G$ is said to be \emph{$\Fo$-compatible} with $D$ if $D\le \Fo(\G)$, i.e. $D_{uv}\le \Fo(\G)_{uv}$ for all $u,v\in [n]$, and all temporal edges of $\G$ are $\Fo$-edge-compatible with $D$.
\end{definition}

First note that any $\Fo$-realization of $D$ must be $\Fo$-compatible with $D$.

\begin{lemma}\label{lem:compat}
	If $\G$ is a $\Fo$-realization of $D$, then $\G$ is $\Fo$-compatible with $D$.
\end{lemma}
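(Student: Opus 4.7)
The plan is to unpack the two conditions in the definition of $\Fo$-compatibility and check them separately. Since the assumption is $\Fo(\G)=D$, the inequality $D\le \Fo(\G)$ holds with equality, so the first condition is immediate. All the real work goes into verifying that every temporal edge of $\G$ satisfies $EdgeCompat(D,\{v,w\},\tau)$.

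For the second condition, I would fix an arbitrary temporal edge $(\{v,w\},\tau)$ of $\G$ and an arbitrary vertex $x\in[n]$. Assume $D_{xv}<\tau$; the task is to derive $D_{xw}\le \tau$. If $x=w$, then $D_{xw}=0\le \tau$ by the distance matrix convention, so we can assume $x\ne w$. Since $D=\Fo(\G)$, the entry $D_{xv}$ equals the arrival time of some strict temporal $xv$-path $P$ in $\G$. Because the arrival time of $P$ is strictly smaller than $\tau$, the sequence obtained by appending the temporal edge $(\{v,w\},\tau)$ to $P$ is still a valid strict temporal walk (the strict inequality on time labels is preserved), and it ends at $w$ at time $\tau$. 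Using the remark from the preliminaries that any strict temporal walk can be shortcut into a strict temporal path without increasing arrival time, we obtain a strict temporal $xw$-path with arrival time at most $\tau$. Hence $D_{xw}=\Fo(\G)_{xw}\le \tau$. The symmetric implication $D_{xw}<\tau \Longrightarrow D_{xv}\le \tau$ follows by exchanging the roles of $v$ and $w$, using that $\{v,w\}$ is an undirected edge.

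I do not expect a real obstacle here: the argument is essentially a formalization of the intuition already given after Definition~\ref{def:compat}. The only minor subtlety is ensuring that the concatenation of $P$ with the temporal edge $(\{v,w\},\tau)$ yields a valid strict temporal walk, which relies precisely on the strict inequality $D_{xv}<\tau$, and then shortcutting to a temporal path without losing the arrival time bound, which is guaranteed by the preliminaries.
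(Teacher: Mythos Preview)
Your proposal is correct and follows essentially the same approach as the paper: both verify $D\le \Fo(\G)$ trivially from $\Fo(\G)=D$, and both establish edge-compatibility by extending a foremost temporal $xv$-path with the temporal edge $(\{v,w\},\tau)$ to obtain a temporal $xw$-walk arriving at time $\tau$. The paper phrases the second part as a proof by contradiction while you argue directly, and you are slightly more careful in handling the degenerate case $x=w$ and in invoking the walk-to-path shortcut, but the underlying argument is identical.
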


\begin{proof}
	First, we clearly have $D\le\Fo(\G)$ since $\G$ is a $\Fo$-realization of $D$.
	Second, suppose for the sake of contradiction that some temporal edge $(\{v,w\},\tau)$ is not $\Fo$-edge-compatible with $D$. That is, without loss of generality, there exists a vertex $x$ such that $D_{xv}<\tau$ and $D_{xw}>\tau$.
	Any foremost temporal $xv$-path in $\G$ must arrive in $v$ at time $D_{xv}$.
	However, this temporal path can be extended with $(\{v,w\},\tau)$, yielding a temporal $xw$-walk 
	arriving at time $\tau<D_{xw}$, contradicting the fact that $\G$ is a $\Fo$-realization of~$D$. 
\end{proof}

Note also that an $n$-vertex empty temporal graph, i.e. without any time label,  is always $D$ compatible as it does not contain any temporal edge. Moreover,  $\Fo$-compatibility is preserved by addition of a $\Fo$-edge-compatible temporal edge as stated below.

\begin{lemma}\label{lem:compat-add}
	If a temporal graph $\G=(G,\lambda)$ is $\Fo$-compatible with $D$ and a temporal edge $(\{v,w\},\tau)$ is $\Fo$-edge-compatible with $D$, then the temporal graph $\G'$ obtained from $\G$ by adding label $\tau$ to edge $\{v,w\}$ is also $\Fo$-compatible.
\end{lemma}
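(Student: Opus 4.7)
The plan has two ingredients. First, every temporal edge of $\G'$ must be shown to be $\Fo$-edge-compatible with $D$. This is immediate: the temporal edge set of $\G'$ is that of $\G$ together with $(\{v,w\},\tau)$, and $\Fo$-edge-compatibility is assumed both for every edge of $\G$ (via the $\Fo$-compatibility of $\G$) and for the new edge (by hypothesis). The real content is therefore to establish $D\le\Fo(\G')$.

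For this inequality I would argue by contradiction using a minimality trick. Suppose some pair $(x,y)$ satisfies $\Fo(\G')_{xy}<D_{xy}$, and among all such \emph{bad} pairs choose one minimizing $\Fo(\G')_{xy}$. Since $D_{xx}=0=\Fo(\G')_{xx}$, we have $x\ne y$, so there exists a foremost strict temporal $xy$-path $P'=(x=v_0,v_1,\ldots,v_k=y)$ in $\G'$, with time labels $\tau_1<\cdots<\tau_k=\Fo(\G')_{xy}$ and $k\ge 1$.

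Now focus on the last edge $(\{v_{k-1},y\},\tau_k)$ of $P'$. When $k>1$, the prefix of $P'$ ending at $v_{k-1}$ yields $\Fo(\G')_{xv_{k-1}}\le\tau_{k-1}<\tau_k$; when $k=1$, we have $v_{k-1}=x$ and $\Fo(\G')_{xx}=0<\tau_k$. In either case $\Fo(\G')_{xv_{k-1}}<\Fo(\G')_{xy}$, so by minimality the pair $(x,v_{k-1})$ is not bad, hence $D_{xv_{k-1}}\le\Fo(\G')_{xv_{k-1}}<\tau_k$. Since $(\{v_{k-1},y\},\tau_k)$ is a temporal edge of $\G'$, it is $\Fo$-edge-compatible with $D$ by the first ingredient, and applying the compatibility implication at the test vertex $x$ gives $D_{xy}\le\tau_k=\Fo(\G')_{xy}$, contradicting the bad-pair assumption.

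The subtlety I would flag is that the argument should \emph{not} branch on whether $P'$ uses the newly added edge. A naive attempt to reroute a too-fast path of $\G'$ through $\G$ when it crosses $(\{v,w\},\tau)$ runs into trouble because $\Fo$-compatibility of $\G$ only gives $D\le\Fo(\G)$ (an upper bound on $D$, not a lower bound on $\G$'s foremost arrivals), so $\G$ may offer no suitable replacement subpath. The minimality trick sidesteps this completely by pushing the contradiction onto edge-compatibility of the last edge of $P'$, which holds uniformly for old and new edges alike.
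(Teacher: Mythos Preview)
Your proof is correct. Both your argument and the paper's reduce the claim $D\le\Fo(\G')$ to exhibiting a temporal edge of $\G'$ that violates $\Fo$-edge-compatibility, and both observe that this contradicts the hypotheses regardless of whether that edge is old or new. The only difference is how the violating edge is located: the paper fixes an arbitrary bad pair $(x,y)$, takes a foremost $xy$-path $P$ in $\G'$, and scans $P$ for the \emph{first} edge at which the arrival time along $P$ drops from $\ge D_{x\cdot}$ to $<D_{x\cdot}$; you instead choose $(x,y)$ with $\Fo(\G')_{xy}$ minimal among bad pairs and then use the \emph{last} edge of a foremost path. Your global-minimality trick cleanly yields $D_{xv_{k-1}}<\tau_k$ in one line, while the paper's first-crossing argument avoids quantifying over all bad pairs. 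Either way the same edge-compatibility implication delivers the contradiction, so these are minor stylistic variants of the same proof.
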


\begin{proof}
	We just need to prove $D\le \Fo(\G')$. Suppose for the sake of contradiction that there exists $x\not=y$ such that $D_{xy}>\Fo(\G')_{xy}$, i.e., $D_{xy}$ is greater than the arrival time of a foremost temporal $xy$-path $P$ in $\G'$. Consider the first temporal edge $(\{z,z'\},\sigma')$ of $P$ such that $P$ arrives in $z$ at time $\sigma\ge D_{xz}$ but arrives in $z'$ before $D_{xz'}$, i.e. $\sigma'<D_{xz'}$. Such an edge must exist since $P$ arrives in $y$ before $D_{xy}$ and $D_{xx}=0$. The (strict) temporal path definition implies $\sigma<\sigma'$ which thus yields $D_{xz}<\sigma'$. As $D_{xz'}>\sigma'$, $(\{z,z'\},\sigma')$ cannot be $\Fo$-edge-compatible with $D$.
	This contradicts either 
	the $\Fo$-edge-compatibility of $(\{v,w\},\tau)$ if $(\{z,z'\},\sigma')=(\{v,w\},\tau)$, or the $\Fo$-compatibility of $\G$ otherwise.
\end{proof}

\begin{lemma}\label{lem:compat-real}
	If $\G$ is a $\Fo$-realization of $D$, then for any entry $\tau=D_{uw}$ with $u\not= w$, there exists a vertex $v$ such that $D_{uv}<\tau$ and $(\{v,w\},\tau)$ is $\Fo$-edge-compatible with~$D$.
\end{lemma}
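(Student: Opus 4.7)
The plan is to extract the required vertex $v$ directly from the last temporal edge of a foremost $uw$-path. Since $\G$ is a $\Fo$-realization of $D$, we have $\tau = D_{uw} = \Fo(\G)_{uw}$, and this value is attained (in particular it is finite, since otherwise no temporal $uw$-path exists and the statement is vacuous in a bad way; but $\tau$ being an attained foremost arrival time means a foremost temporal $uw$-path exists). Fix such a foremost temporal $uw$-path $P$ in $\G$, with last temporal edge $(\{v,w\},\tau)$ for some vertex $v \neq w$.

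The first step is to establish $D_{uv} < \tau$. The subwalk of $P$ from $u$ up to $v$ is a temporal $uv$-walk whose arrival time $\sigma$ at $v$ satisfies $\sigma < \tau$, by the strict inequality in the definition of strict temporal paths applied to the last edge $(\{v,w\},\tau)$. This temporal walk can be shortened to a temporal $uv$-path with arrival time at most $\sigma$, so $\Fo(\G)_{uv} \leq \sigma < \tau$. Since $\G$ realizes $D$, we conclude $D_{uv} = \Fo(\G)_{uv} < \tau$, which in particular forces $u \neq v$.

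The second step is to establish the $\Fo$-edge-compatibility of $(\{v,w\},\tau)$ with $D$. This is immediate from Lemma~\ref{lem:compat}: since $\G$ is a $\Fo$-realization of $D$, $\G$ is $\Fo$-compatible with $D$, which by definition requires every temporal edge of $\G$ to be $\Fo$-edge-compatible with $D$. The temporal edge $(\{v,w\},\tau)$ belongs to $\G$ (it is the last edge of $P$), so $EdgeCompat(D,\{v,w\},\tau)$ holds. Combining the two steps yields the required vertex $v$, and there is no real obstacle: the argument is a direct unpacking of the foremost path, with the compatibility part immediately delegated to the previously established Lemma~\ref{lem:compat}.
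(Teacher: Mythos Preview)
Your proof is correct and follows exactly the same approach as the paper: take the last temporal edge $(\{v,w\},\tau)$ of a foremost $uw$-path, use strictness to get $D_{uv}<\tau$, and invoke Lemma~\ref{lem:compat} for $\Fo$-edge-compatibility. One small slip: the remark ``which in particular forces $u\neq v$'' is false (if $P$ has length~$1$ then $v=u$ and $D_{uv}=0<\tau$ still holds), but this aside is irrelevant to the argument.
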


\begin{proof}
It suffices to consider the last temporal edge $(\{v,w\},\tau)$ of a foremost temporal $uw$-path in an $\Fo$-realization $\G$ of $D$. It must satisfy $\tau=D_{uw}$ since $\G$ is a realization of $D$. Since it is a strict temporal path, it arrives in $v$ before $\tau$, implying $D_{uv}<\tau$. Moreover, $(\{v,w\},\tau)$ is $\Fo$-edge-compatible with $D$ by Lemma~\ref{lem:compat}.
\end{proof}

\begin{proof}[Proof of Theorem~\ref{th:foremost}]
	If there exists a $\Fo$-realization of $D$, the algorithm must find a suitable temporal edge $(\{v,w\},D_{uw})$ for each pair $(u,w)$ by Lemma~\ref{lem:compat-real}. It thus returns NO, only when no such realization exists.
	Let $K_n$ denote the complete graph with vertex set $[n]$.
	Lemma~\ref{lem:compat-add} implies that Algorithm~\ref{alg:foremost} preserves the invariant that $(K_n,\lambda)$ is $\Fo$-compatible with $D$. 
	If the algorithm returns YES, the constructed temporal graph $\G=(K_n,\lambda)$ is thus $\Fo$-compatible with $D$, implying $D\le\Fo(\G)$. We now prove that we indeed must have $D=\Fo(\G)$. Suppose for the sake of contradiction that there are pairs $(u,w)$ satisfying $D_{uw}<\Fo(\G)_{uw}$. Consider such a pair $(u,w)$ such that $D_{uw}$ is minimum. When this pair was considered, the algorithm added to $\G$ a temporal edge $(\{v,w\}, D_{uw})$ satisfying $D_{uv}<D_{uw}$.
	By the choice of $(u,w)$, we have $D_{uv}=\Fo(\G)_{uv}$. Now, if we extend a foremost temporal $uv$-path in $\G$ with $(\{v,w\}, D_{uw})$, we obtain a temporal $uw$-walk arriving at time $D_{uw}$ in contradiction with $D_{uw}<\Fo(\G)_{uw}$. This concludes the proof of correctness of Algorithm~\ref{alg:foremost}.\lvtodo{Should we put this part of the proof in a Lemma: $D$ is $\Fo$-realizable iff $\forall (u,w), \exists v$ s.t. $D_{uv}<D_{uw}$ and $EdgeCompat(D,\{v,w\},D_{uw})$. This would be usefull when proving the FPT algorithm.}

	Its time complexity is clearly $\Oh(n^4)$ as for each of the $n^2$ pairs $(u,w)$, we consider at most $n$ vertices $v$ and the test for the $\Fo$-edge-compatibility of $(\{v,w\},D_{uw})$ takes $\Oh(n)$ time. 
	To obtain $\Oh(n^3\log n)$, we use an interval tree data-structure (see, e.g., ~\cite{CormenLRS2009}). It can store $n$ intervals and querying whether a value $\tau$ is in one of these intervals can be answered in $O(\log n)$ time. It uses $O(n)$ space and can be constructed in $O(n\log n)$ time.
	To benefit from such a data-structure, we consider all pairs $(u,w)$ with fixed $w$ consecutively. Before processing them, we compute for each vertex $v$ two interval trees $T_v$ and $T_v'$ where $T_v$ (resp. $T_v'$) contains the $n$ intervals $[D_{xv}+1,D_{xw}-1]$ (resp. $[D_{xw}+1,D_{xv}-1]$) for $x\in [n]$, ignoring empty intervals. The $\Fo$-edge-compatibility of a temporal edge $(\{v,w\},\tau)$ can then be tested in $O(\log n)$ time by checking that neither $T_v$ nor $T_v'$ has an interval containing $\tau$. The reason is that any vertex $x$ violating $D_{xv}< \tau \Longrightarrow D_{xw}\le \tau$ satisfies $D_{xv}<\tau<D_{xw}$ in which case $\tau$ belongs to the interval of $T_v$ associated to $x$. Analogously, any vertex $x$ violating $D_{xw}< \tau \Longrightarrow D_{xv}\le \tau$ is associated to an interval of $T_v'$ that contains $\tau$.
	Constructing the interval trees takes $O(n^2\log n)$ time while processing each $(u,w)$ pair now takes $O(n\log n)$ time as it mainly consists in $n$ $\Fo$-edge-compatibility tests. The overall complexity is thus $O(n^3\log n)$ time using $O(n^2)$ space.  

	Note that the algorithm adds a time label to an edge $\{v,w\}$ at most once for each vertex $u$, when considering either $(u,w)$ or $(u,v)$, depending on whether $D_{uv}<D_{uw}$ or $D_{uw}<D_{uv}$. The $\Fo$-realization computed by Algorithm~\ref{alg:foremost} thus has at most $n$ time labels per edge, and at most $n^2$ time labels in total.
\end{proof}



\paragraph*{Non-strict foremost paths}

We have a similar result for the non-strict case.
\iflong
\begin{theorem}
\else
\begin{theorem}[$\star$]
\fi
\label{th:NS-foremost}
	\NSFoTGR can be solved in $\Oh(n^3\log n)$ time and $\Oh(n^2)$ space. Furthermore, if dealing with a realizable instance, a realization with at most $n^2$ time labels can be computed with same complexity.
\end{theorem}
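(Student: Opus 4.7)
The plan is to mirror the proof of~\Cref{th:foremost}, adapting the notion of edge compatibility to accommodate non-strict temporal paths, where consecutive time labels may coincide. Concretely, I would define $\NSECompat(D,\{v,w\},\tau)$ exactly as in~\Cref{def:compat} but replace every strict inequality by its non-strict counterpart, i.e., $\forall x\in[n],\ D_{xv}\le\tau\Longrightarrow D_{xw}\le\tau$ and $D_{xw}\le\tau\Longrightarrow D_{xv}\le\tau$. The justification is that an edge appearing at time~$\tau$ can now extend a non-strict temporal path already arriving at one of its endpoints at time exactly~$\tau$, so the propagation of upper bounds on arrival times must include this borderline case.

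Next, I would re-establish the three structural lemmas in this setting. The analogue of~\Cref{lem:compat} is immediate from the same path-extension argument, allowing the last edge to have label equal to the arrival time at its other endpoint. The analogue of~\Cref{lem:compat-real} needs only to be weakened to $D_{uv}\le\tau$ (instead of $D_{uv}<\tau$) and to permit $v=u$, which corresponds to a non-strict foremost path consisting of the single edge $(\{u,w\},\tau)$. The step I expect to require the most care is the non-strict analogue of~\Cref{lem:compat-add}: given a hypothetical walk $P$ from $x$ to $y$ in $\G'$ arriving strictly before $D_{xy}$, I would consider the first edge $(\{z,z'\},\sigma')$ along $P$ for which $\sigma'<D_{xz'}$. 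Either $z=x$ (so $D_{xz}=0\le\sigma'$ trivially), or $P$ reached $z$ earlier at some time $\sigma\ge D_{xz}$ with $\sigma\le\sigma'$ by the non-strict temporal order, giving in both cases $D_{xz}\le\sigma'$; the non-strict edge-compatibility of the traversed edge then forces $D_{xz'}\le\sigma'$, contradicting the choice of $(\{z,z'\},\sigma')$.

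Equipped with these lemmas, the algorithm itself is a direct rewrite of~\Cref{alg:foremost}: for every pair $(u,w)$ with $\tau=D_{uw}<\infty$, search for a vertex $v\in[n]\setminus\{w\}$ such that $D_{uv}\le\tau$ and $\NSECompat(D,\{v,w\},\tau)$ holds (the choice $v=u$ being an always-admissible candidate), add the temporal edge $(\{v,w\},\tau)$ if such a~$v$ is found, and return NO otherwise. Correctness of the NO answer comes from the non-strict version of~\Cref{lem:compat-real}. Correctness of the YES answer follows by maintaining the invariant that the partial temporal graph is $\NSFo$-compatible with~$D$ via the non-strict version of~\Cref{lem:compat-add}, and then by the same minimum-counterexample argument over pairs ordered by increasing $D_{uw}$: extending a non-strict foremost $uv$-walk by the freshly added edge yields a non-strict $uw$-walk arriving exactly at time~$D_{uw}$, which forces equality $\NSFo(\G)=D$.

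Finally, the complexity analysis carries over essentially verbatim: a naive implementation runs in $\Oh(n^4)$ time, and the improvement to $\Oh(n^3\log n)$ time in $\Oh(n^2)$ space uses the same interval-tree trick, where for fixed $w$ and each $v\in[n]$ one stores the (non-empty) intervals $[D_{xv},D_{xw}-1]$ and $[D_{xw},D_{xv}-1]$ for $x\in[n]$ so that a value~$\tau$ violating $\NSECompat(D,\{v,w\},\tau)$ is detected by one stabbing query in $\Oh(\log n)$ time. The bound of $n^2$ time labels follows from the same accounting: each edge $\{v,w\}$ receives at most one label per vertex~$u$, when processing either $(u,w)$ or $(u,v)$.
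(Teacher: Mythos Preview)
Your overall strategy matches the paper's: replace $EdgeCompat$ by $\NSECompat$ and rerun \Cref{alg:foremost}. However, there is a genuine gap in one step that breaks correctness.

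You propose to weaken the test at Line~\ref{lin:test-inf} from $D_{uv}<D_{uw}$ to $D_{uv}\le D_{uw}$, and correspondingly to state the non-strict analogue of \Cref{lem:compat-real} with $D_{uv}\le\tau$. This weakening is both unnecessary and harmful. The minimum-counterexample argument you invoke relies on strict inequality: if the pair $(u,w)$ chosen has minimum $D_{uw}$ among unrealized pairs, you need $D_{uv}<D_{uw}$ to conclude that $(u,v)$ is realized. With $D_{uv}=D_{uw}$ the induction stalls, and the algorithm can in fact output a wrong graph on a realizable instance. Concretely, take vertices $u,a,b,p,q$ with $D_{ua}=D_{ub}=5$, $D_{up}=D_{uq}=D_{pu}=D_{qu}=D_{pq}=D_{qp}=3$, $D_{au}=D_{bu}=10$, and all remaining off-diagonal entries equal to~$5$; this $D$ is $\NSFo$-realizable. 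Yet your algorithm may, for every pair, pick the ``cyclic'' witness ($v=b$ for $(u,a)$, $v=a$ for $(u,b)$, $v=q$ for $(a,p)$, $v=b$ for $(p,a)$, etc.), producing a temporal graph in which $\{a,b\}$ is never connected to $\{u,p,q\}$ by any label~$\le 5$, so $\NSFo(\G)_{ua}>5$.

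The paper keeps the \emph{strict} inequality $D_{uv}<\tau$ and proves that such a $v$ always exists in a $\NSFo$-realization: on a foremost non-strict $uw$-path, look at the \emph{first} temporal edge carrying label $\tau=D_{uw}$; its tail $v$ is reached strictly before~$\tau$, and one then argues that the shortcut temporal edge $(\{v,w\},\tau)$ is $\NSFo$-edge-compatible with~$D$ (even though it need not be present in the realization). With this lemma in place, the algorithm and the minimum-counterexample proof go through verbatim from the strict case. Your remark about allowing $v=u$ is a non-change: $v=u$ already satisfies $D_{uu}=0<\tau$ in \Cref{alg:foremost}.
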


A slight modification of Algorithm~\ref{alg:foremost} suffices, replacing $EdgeCompat(D,\{v,w\},\tau)$ with:
$$\NSECompat(D,\{v,w\},\tau) :=\forall x\in [n], D_{xv}\le \tau\Longleftrightarrow D_{xw}\leq \tau.$$

\iflong
Following the above definition, we say that a temporal edge $(\{v,w\},\tau)$ is \emph{$\NSFo$-edge-compatible} with $D$ if it satisfies $\NSECompat(D,\{v,w\},\tau)$. Similarly, a temporal graph $\G$ is said to be \emph{$\NSFo$-compatible} with $D$ if $D\le \NSFo(\G)$ and all temporal edges of $\G$ are $\NSFo$-edge-compatible. The following variations of Lemmas~\ref{lem:compat} and~\ref{lem:compat-add} have almost identical proofs which are omitted.

\begin{lemma}\label{lem:NS-compat}
	If $\G$ is a $\NSFo$-realization of $D$, then $\G$ is $\NSFo$-compatible with $D$.
\end{lemma}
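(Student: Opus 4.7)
The plan is to imitate the proof of Lemma~\ref{lem:compat}, making the necessary adjustments for the non-strict setting. Since $\G$ is an $\NSFo$-realization of $D$, we have $\NSFo(\G) = D$, which immediately yields $D \le \NSFo(\G)$. So the only real content is verifying that every temporal edge of $\G$ satisfies $\NSECompat(D,\cdot,\cdot)$.

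I would argue by contradiction. Assume some temporal edge $(\{v,w\},\tau)$ of $\G$ is not $\NSFo$-edge-compatible with $D$. Then, possibly after swapping the roles of $v$ and $w$ (the condition is symmetric in $v$ and $w$), there exists a vertex $x$ such that $D_{xv}\le\tau$ while $D_{xw}>\tau$. First consider the generic case $x\neq v$. Since $D=\NSFo(\G)$, there is a non-strict temporal $xv$-path $P$ in $\G$ arriving at time $D_{xv}\le\tau$. Because the last time label on $P$ is at most $\tau$, we can append the temporal edge $(\{v,w\},\tau)$ to $P$ and still satisfy the non-decreasing (non-strict) condition on labels. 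This gives a non-strict temporal $xw$-walk that arrives at time $\tau$, which in turn can be shortcut into a non-strict temporal $xw$-path arriving no later than $\tau$. Hence $\NSFo(\G)_{xw}\le\tau<D_{xw}$, contradicting $\NSFo(\G)=D$.

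The only remaining case is $x=v$, where $D_{xv}=0\le\tau$ and we need $D_{vw}\le\tau$. But the single temporal edge $(\{v,w\},\tau)$ is itself a non-strict temporal $vw$-path with arrival time $\tau$, so $\NSFo(\G)_{vw}\le\tau$, giving $D_{vw}\le\tau$ and again a contradiction with $D_{xw}=D_{vw}>\tau$. This completes the contradiction in every case and therefore proves that every temporal edge of $\G$ is $\NSFo$-edge-compatible with $D$.

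I do not anticipate a real obstacle: the argument is essentially identical to the strict case, with the strict inequality $\tau_1<\cdots<\tau_k$ replaced by its non-strict counterpart. The only point to be slightly careful about is that the compatibility condition in the non-strict setting uses non-strict inequalities on both sides of the implication (in fact a biconditional), so the extension step requires only $D_{xv}\le\tau$ rather than $D_{xv}<\tau$; this matches exactly what the non-strict concatenation of a temporal path with $(\{v,w\},\tau)$ allows.
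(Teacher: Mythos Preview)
Your proposal is correct and follows exactly the approach the paper intends: the paper explicitly omits the proof, stating it is ``almost identical'' to that of Lemma~\ref{lem:compat}, and your argument is precisely that adaptation, with the strict inequalities relaxed to non-strict ones. Your extra care in separating out the case $x=v$ is harmless and arguably makes the argument a bit cleaner.
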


\begin{lemma}\label{lem:NS-compat-add}
	If a temporal graph $\G=(G,\lambda)$ is $\NSFo$-compatible with $D$ and a temporal edge $(\{v,w\},\tau)$ is $\NSFo$-edge-compatible with $D$, then the temporal graph $\G'$ obtained from $\G$ by adding label $\tau$ to edge $\{v,w\}$ is also $\NSFo$-compatible.
\end{lemma}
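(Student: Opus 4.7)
The plan is to mirror the proof of Lemma~\ref{lem:compat-add} for the strict case, adapting it to the stronger biconditional form of non-strict edge compatibility. Let $\G'$ be obtained from $\G$ by adding label $\tau$ to the edge $\{v,w\}$. The set of temporal edges of $\G'$ is the set of temporal edges of $\G$ together with $(\{v,w\},\tau)$, so every temporal edge of $\G'$ is $\NSFo$-edge-compatible with $D$ (by the hypotheses on $\G$ and on the new edge). Thus the only nontrivial thing to check is $D \le \NSFo(\G')$.

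I would argue this by contradiction, paralleling the strict-case proof. Suppose there exist distinct vertices $x, y$ with $\NSFo(\G')_{xy} < D_{xy}$, and fix a non-strict temporal $xy$-path $P$ in $\G'$ with arrival time $\NSFo(\G')_{xy}$. Classify each intermediate vertex $u$ along $P$ by whether $P$ reaches $u$ at time $\ge D_{xu}$ (``good'') or at time $< D_{xu}$ (``bad''). The source $x$ is good since $D_{xx}=0$ and $P$ is at $x$ at time $0$ (before its first temporal edge), while $y$ is bad by assumption. Hence $P$ contains a first edge $(\{z,z'\},\sigma')$ whose tail $z$ is good with arrival time $\sigma$, but whose head $z'$ is bad with arrival time $\sigma'$; that is, $\sigma \ge D_{xz}$ and $\sigma' < D_{xz'}$.

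Because $P$ is a \emph{non-strict} temporal path, consecutive labels satisfy $\sigma \le \sigma'$, so $D_{xz} \le \sigma \le \sigma'$. Now apply the $\NSFo$-edge-compatibility of $(\{z,z'\},\sigma')$, which is either a temporal edge of $\G$ or the newly added edge $(\{v,w\},\tau)$ and is $\NSFo$-edge-compatible with $D$ in either case. Specializing $\NSECompat(D,\{z,z'\},\sigma')$ at $x$ yields $D_{xz} \le \sigma' \Longleftrightarrow D_{xz'} \le \sigma'$, which combined with $D_{xz} \le \sigma'$ forces $D_{xz'} \le \sigma'$, contradicting $\sigma' < D_{xz'}$.

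The only genuine subtlety compared with Lemma~\ref{lem:compat-add} is that in the strict setting one gets a strict inequality $\sigma < \sigma'$ from the path structure, which is why the strict edge-compatibility predicate uses one-sided implications guarded by $<$. In the non-strict setting that extra slack is lost, and the stronger biconditional on $\le$ in $\NSECompat$ is precisely what compensates; once this correspondence is in place, the argument goes through with essentially no new difficulty, so I do not anticipate a real obstacle beyond keeping the base case $z = x$ of the ``first bad edge'' argument explicit.
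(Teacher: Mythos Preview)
Your proof is correct and follows essentially the same approach as the paper, which omits the proof as ``almost identical'' to that of Lemma~\ref{lem:compat-add}. Your explicit remark that the weak inequality $\sigma\le\sigma'$ from non-strictness is compensated by the biconditional with $\le$ in $\NSECompat$ is exactly the adaptation the paper has in mind.
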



The main difference with the strict setting concerns Lemma~\ref{lem:compat-real} which has the following variation.

\begin{lemma}\label{lem:NS-compat-real}
	If $\G$ is a $\NSFo$-realization of $D$, then for any entry $\tau=D_{uw}$ with $u\not= w$, there exists a vertex $v$ such that $D_{uv}<\tau$ and $(\{v,w\},\tau)$ is $\NSFo$-edge-compatible with $D$.
\end{lemma}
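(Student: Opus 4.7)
\textbf{Proof plan for Lemma~\ref{lem:NS-compat-real}.} The proof mimics the strict case (Lemma~\ref{lem:compat-real}) but must account for the fact that, in a non-strict temporal path, several consecutive edges at the end may share the same time label $\tau$. Simply taking the last edge of a foremost $uw$-path no longer guarantees that the previous vertex is reached at a time strictly less than $\tau$. The plan is therefore to walk back along a foremost path until we pass below time $\tau$, and to use the tail of the path to certify edge-compatibility of a (possibly non-existing in $\G$) temporal edge between the chosen vertex and $w$.

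Concretely, fix a foremost non-strict temporal $uw$-path $P = (v_0 = u, v_1, \ldots, v_k = w)$ in $\G$ with times $\tau_1 \le \tau_2 \le \cdots \le \tau_k = \tau$. Setting $\tau_0 := 0$, let $i^* := \max\{\, i \in \{0, \ldots, k\} : \tau_i < \tau \,\}$. Since $\tau_k = \tau$ we have $i^* \le k-1$, so $v_{i^*} \ne w$. The prefix of $P$ up to $v_{i^*}$ is a non-strict temporal $u v_{i^*}$-path arriving at time $\tau_{i^*} < \tau$ (and the case $i^* = 0$ reduces to $D_{uu} = 0 < \tau$ because $u \ne w$ forces $\tau \ge 1$). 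Thus $D_{u v_{i^*}} \le \tau_{i^*} < \tau$. It remains to verify that $(\{v_{i^*}, w\}, \tau)$ is $\NSFo$-edge-compatible with $D$, i.e. $D_{x v_{i^*}} \le \tau \iff D_{x w} \le \tau$ for every $x \in [n]$.

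The key observation is that the suffix of $P$ from $v_{i^*}$ to $w$ uses only edges labeled $\tau$ and so can be traversed in either direction as a valid non-strict temporal walk from any starting time at most $\tau$. For the forward direction, if $D_{x v_{i^*}} \le \tau$, pick a non-strict temporal $x v_{i^*}$-path in $\G$ arriving by time $\tau$ and append the suffix $v_{i^*}, v_{i^*+1}, \ldots, w$ of $P$; this yields a non-strict temporal $xw$-walk arriving at time $\tau$, and after loop removal a non-strict temporal $xw$-path, giving $D_{xw} \le \NSFo(\G)_{xw} \le \tau$. Conversely, if $D_{xw} \le \tau$, pick a non-strict temporal $xw$-path in $\G$ arriving by time $\tau$ and append the reversed suffix $w, v_{k-1}, \ldots, v_{i^*}$, whose edges all have label $\tau$; this gives a non-strict temporal $x v_{i^*}$-walk arriving at $\tau$, hence $D_{x v_{i^*}} \le \tau$.

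The only delicate point, and the one that distinguishes this from the strict setting, is the choice of $i^*$ and the appeal to the fact that the entire suffix $\tau_{i^*+1} = \cdots = \tau_k = \tau$ consists of edges with the same time label, which is exactly what fails in the strict case (there $i^* = k-1$ is automatic). Once $i^*$ is fixed, the bidirectional reuse of this suffix handles edge-compatibility cleanly, so no case analysis or extra hypothesis on $\G$ beyond it being an $\NSFo$-realization is needed.
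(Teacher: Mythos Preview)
Your proposal is correct and follows essentially the same approach as the paper: both pick a foremost non-strict $uw$-path, take $v$ to be the last vertex reached strictly before time $\tau$ (your $v_{i^*}$), and use the fact that the remaining suffix consists entirely of edges at time $\tau$, so it can be appended (in either direction) to any path arriving by time $\tau$ to establish the biconditional in $\NSECompat$. The paper compresses the two directions into a ``without loss of generality'' while you spell both out, but the argument is the same.
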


\begin{proof}
It suffices to consider a non-strict foremost temporal $uw$-path $P$. 
The path~$P$ must end with one or more temporal edges with time label $\tau=D_{uw}$ since $\G$ is a $\NSFo$-realization of $D$. Consider the first such temporal edge $(\{v,w'\}, \tau)$.
This choice implies that $P$ arrives in $v$ before $\tau$, implying $D_{uv}<\tau$.
We now show that $(\{v,w\}, \tau)$ is $\NSFo$-edge-compatible with $D$.
For the sake of contradiction, suppose without loss of generality that there exists a vertex $x$ such that $D_{xv}\le \tau$ and $D_{xw}>\tau$. Consider a non-strict foremost temporal $xv$-path $P'$ in $\G$. 
The path~$P'$ must arrive in $v$ at time $D_{xv}$ since $\G$ is a $\NSFo$-realization of $D$. As $D_{xv}\le \tau$, it can be extended by the suffix of $P$ consisting of temporal edges appearing at time $\tau$. This leads to a non-strict temporal $xw$-walk 
arriving at time $\tau$. This implies $D_{xw}=\NSFo(\G)_{xw}\le \tau$, which contradicts $D_{xw}>\tau$.
\end{proof}

Note that the temporal edge $(\{v,w\},\tau)$ is not necessarily in $\G$, it can be a shortcut for a non-strict temporal path where all edges are traversed at time $\tau$.
In comparison, the temporal edge proposed in the proof of Lemma~\ref{lem:compat-real} is in the considered realization.

The rest of the proof of Theorem~\ref{th:NS-foremost} is almost identical to that of Theorem~\ref{th:foremost} and is omitted.
\else
The change of $<$ for $\le$ accounts for considering non-strict temporal paths rather than strict ones. This modification requires a different version of Lemma~\ref{lem:compat-real} with a significantly different proof where 
a non-strict temporal $vw$-path whose edges are traversed at same time $\tau$ in a realization is replaced by a single temporal edge $(\{v,w\},\tau)$ (see Lemma~3.12 
in the appendix).
\fi

\iflong
\paragraph*{Foremost paths in a periodic temporal graph}

We now consider the variant where the realization is required to be periodic. Recall that \textsc{Periodic \FoTGR} is given a period $\Delta$ in addition to $D$, and asks whether there exists a $\Delta$-periodic temporal graph $\G$ such that $\Fo(G)=D$.
We obtain similarly the following.

\begin{theorem}\label{th:Period-foremost}
	\textsc{Periodic \FoTGR} can be solved in $\Oh(n^3\log n)$ time and $\Oh(n^2)$ space. Furthermore, if dealing with a realizable instance, a realization with at most $n^2$ time labels within any period can be computed with same complexity.
\end{theorem}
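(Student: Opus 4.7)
The plan is to adapt Algorithm~\ref{alg:foremost} and its correctness proof to the $\Delta$-periodic setting. In such a setting, labels are stored as elements of $[\Delta]$ and an edge $\{v,w\}$ carrying the label $\tau_0 \in [\Delta]$ appears at every time in $\{\tau_0 + k\Delta : k \in \N\}$. I would introduce the notion of \emph{$\Delta$-periodic edge-compatibility}: $\tau_0$ is compatible with $D$ on $\{v,w\}$ if for every $x \in [n]$ and every $k \geq 0$,
\[D_{xv} < \tau_0 + k\Delta \Longrightarrow D_{xw} \leq \tau_0 + k\Delta,\]
together with the symmetric implication. For a fixed triple $(v,w,x)$ with $D_{xv} < D_{xw}$, this amounts to requiring that no appearance $\tau_0 + k\Delta$ lies in $[D_{xv}+1, D_{xw}-1]$; equivalently, $\tau_0$ must avoid a contiguous (possibly wrap-around) interval of forbidden residues modulo $\Delta$ of length $D_{xw} - D_{xv} - 1$, and the whole of $[\Delta]$ is forbidden whenever $D_{xw} - D_{xv} - 1 \geq \Delta$.

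Next I would prove the periodic analogs of Lemmas~\ref{lem:compat}, \ref{lem:compat-add} and \ref{lem:compat-real}: every label of a $\Delta$-periodic $\Fo$-realization of $D$ is $\Delta$-periodic edge-compatible with $D$; adding such a label preserves $\Fo$-compatibility; and for each entry $\tau = D_{uw}$ with $u \neq w$, there is a vertex $v$ with $D_{uv} < \tau$ such that the label $\tau_0 := ((D_{uw}-1) \bmod \Delta) + 1$ on $\{v,w\}$ is $\Delta$-periodic edge-compatible with $D$. These arguments are essentially copies of the non-periodic proofs, where each path-extension step is replayed at the specific appearance time $\tau_0 + k\Delta$ of the chosen edge. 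The modified algorithm then iterates over each pair $(u,w)$ with $u \neq w$ and $D_{uw} < \infty$, computes $\tau_0 = ((D_{uw}-1) \bmod \Delta) + 1$, and searches for a vertex $v$ with $D_{uv} < D_{uw}$ such that $\tau_0$ is $\Delta$-periodic edge-compatible with $D$ on $\{v,w\}$; if no such $v$ exists it returns NO, otherwise it adds $\tau_0$ to $\lambda(\{v,w\})$. Correctness follows verbatim from the proof of Theorem~\ref{th:foremost}, with the constructed temporal edge $(\{v,w\}, D_{uw})$ now coming from the periodic appearance of the label $\tau_0$ at time $D_{uw}$. Since at most one label is added per pair $(u,w)$, the resulting periodic labeling uses at most $n^2$ labels per period.

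For the complexity analysis I would reuse the interval-tree approach of Theorem~\ref{th:foremost}. For a fixed $w$ and each vertex $v$, I would build an interval tree over $[\Delta]$ containing the forbidden residue intervals derived from every $x$ (a wrap-around interval is split into at most two sub-intervals of $[\Delta]$, and the whole of $[\Delta]$ is marked as forbidden by a separate flag whenever $|D_{xv} - D_{xw}| \geq \Delta+1$ to avoid inserting $\Delta$ unit intervals). Each $\Delta$-periodic edge-compatibility test then reduces to one $O(\log n)$ query at the value $\tau_0$, so the total running time stays at $\Oh(n^3 \log n)$ in $\Oh(n^2)$ space. The main obstacle is precisely this bookkeeping of wrap-around intervals and of the degenerate long-interval case; once this is handled, the rest of the argument is a direct translation of the proof of Theorem~\ref{th:foremost}.
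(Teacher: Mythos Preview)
Your proposal is correct and follows essentially the same approach as the paper: both define periodic edge-compatibility as requiring the ordinary $EdgeCompat$ condition at every periodic appearance $\tau_0+k\Delta$, then replay the proof of Theorem~\ref{th:foremost} verbatim. Your treatment of the $\Oh(n^3\log n)$ bound via residue intervals (with wrap-around splitting and a long-interval flag) is in fact more explicit than the paper, which simply observes that for each $x$ only one critical index $i=\lfloor (D_{xv}-\tau)/\Delta+1\rfloor$ needs to be tested and then asserts that ``the complexity thus remains the same.''
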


The proof is almost identical to that of Theorem~\ref{th:foremost} using the following notion of edge-compatibility:
$$
PeriodEdgeCompat(D,\{v,w\},\tau) := \forall i\in\mathbb{Z}, EdgeCompat(D,\{v,w\},\tau+i\Delta).
$$
%
Note that the above condition is restrictive only for $\tau+i\Delta > 0$ as $EdgeCompat(D,\{v,w\},\tau')$ is always satisfied for $\tau'\le 0$.

The correctness of the resulting variant of Algorithm~\ref{alg:foremost} mainly comes from the following observation. Any $\Delta$-periodic realization must satisfy the above condition for each temporal edge $(\{v,w\},\tau)$ as each periodic appearance of $\{v,w\}$ at $\tau+i\Delta$ must be $\Fo$-edge-compatible by Lemma~\ref{lem:compat}.

Note that the above condition can again be performed in $O(n)$ time by checking both $PeriodECDir(D,v,w,\tau)$ and $PeriodECDir(D,w,v,\tau)$ where $PeriodECDir(D,v,w,\tau,\Delta) := \forall x\in [n], \forall i\in\mathbb{Z}, D_{xv} < i\Delta+\tau  \Longrightarrow D_{xw}\le i\Delta +\tau$. In the latter test, it is sufficient to test for each $x\in [n]$ only the first index $i=\lfloor \frac{D_{xv}-\tau}{\Delta} + 1 \rfloor$ satisfying $D_{xv} < i\Delta+\tau $. The complexity thus remains the same.

\paragraph*{Foremost paths with prescribed graph}

We now consider the variant in which a prescribed graph
$G_p=([n],E_p)$ is additionally given as input, and the realization is required to have a subgraph of $G_p$ as underlying graph. We let $N_p(v)=\{w:\exists \{v,w\}\in E_p\}$ denote the set of neighbors of any vertex $v\in [n]$ in $G_p$.
Recall that the problem is formally defined as follows.
\prob{Prescribed \FoTGR}
{A number $n$, an $n\times n$ distance matrix $D$, a prescribed graph $G_p=([n],E_p)$.}
{Is there a temporal graph $\mg=(G,\lambda)$ with $n$ vertices such that $\Fo(\G)$ equals $D$ and $G$ is a subgraph of $G_p$?}

We obtain similarly the following.

\begin{theorem}\label{th:Pr-foremost}
	\textsc{Prescribed \FoTGR} can be solved in $\Oh(n^3\log n)$ time and $\Oh(n^2)$ space. Furthermore, if dealing with a realizable instance, a realization with at most $n^2$ time labels can be computed with same complexity.
\end{theorem}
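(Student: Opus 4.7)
The plan is to adapt Algorithm~\ref{alg:foremost} by restricting, for each pair $(u,w)$, the search for a witness neighbor $v$ to those satisfying $\{v,w\} \in E_p$; equivalently, the edge-compatibility test on line~\ref{lin:test-inf} is strengthened to
$$PrescribedEdgeCompat(D, G_p, \{v,w\}, \tau) := (\{v,w\} \in E_p) \wedge EdgeCompat(D, \{v,w\}, \tau).$$
This guarantees that only edges of $G_p$ ever receive a label, so the underlying graph of the constructed labeling is automatically a subgraph of $G_p$, and the prescribed constraint is satisfied by construction.

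For correctness, Lemma~\ref{lem:compat} and Lemma~\ref{lem:compat-add} transfer verbatim (the additional requirement $\{v,w\}\in E_p$ only restricts which temporal edges the algorithm inserts and does not weaken any of their conclusions). The lemma to adapt is Lemma~\ref{lem:compat-real}: if $\G$ is a $\Fo$-realization of $D$ whose underlying graph is a subgraph of $G_p$, then for any entry $\tau = D_{uw}$ with $u \neq w$ there exists $v \in N_p(w)$ with $D_{uv} < \tau$ such that $(\{v,w\}, \tau)$ is $\Fo$-edge-compatible with $D$. The proof is unchanged: the last temporal edge of a foremost $uw$-path in $\G$ already uses an edge of $G_p$ by the hypothesis on the underlying graph, is strict so arrives at $v$ before $\tau$, and is $\Fo$-edge-compatible by Lemma~\ref{lem:compat}. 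The soundness direction from the proof of Theorem~\ref{th:foremost} — the minimality induction on a hypothetical pair $(u,w)$ with $D_{uw} < \Fo(\G)_{uw}$ — then carries over identically, since the witness $v$ selected by the algorithm still enables extension of a foremost $uv$-path by the newly inserted label.

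For the complexity, the interval-tree preprocessing of Theorem~\ref{th:foremost} applies with a minor twist: for each fixed $w$ we only build the two interval trees $T_v, T_v'$ for $v \in N_p(w)$, and when processing each pair $(u,w)$ we only scan these neighbors. The total work is $\Oh\bigl(\sum_w |N_p(w)|\cdot n \log n\bigr) = \Oh(|E_p|\cdot n \log n) = \Oh(n^3\log n)$ time within $\Oh(n^2)$ space, matching the stated bound. Each edge $\{v,w\}$ is still labeled at most once per pair $(u,w)$ or $(u,v)$ (depending on whether $D_{uv} < D_{uw}$ or the reverse), preserving the $n^2$ bound on the total number of labels.

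I do not foresee any substantive obstacle: restricting to $E_p$ only removes candidate edges from the search, and none of the structural arguments of Section~\ref{se:poltime-foremost} depend on the availability of the complete graph. The main care point is to state the prescribed variant of Lemma~\ref{lem:compat-real} explicitly so that the negative answer of the algorithm is justified, and to verify that the interval-tree bookkeeping degrades gracefully with $|N_p(w)|$ instead of $n$, which it does.
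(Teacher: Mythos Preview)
Your proposal is correct and follows essentially the same approach as the paper: the paper also defines $PrescribedEdgeCompat(D,\{v,w\},\tau) := EdgeCompat(D,\{v,w\},\tau) \text{ and } \{v,w\}\in E_p$ and states that the proof is almost identical to that of Theorem~\ref{th:foremost}. Your write-up in fact spells out more of the details (the prescribed variant of Lemma~\ref{lem:compat-real} and the refined interval-tree bookkeeping over $N_p(w)$) than the paper's terse one-line justification, but the underlying argument is the same.
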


Again, the proof is almost identical to that of Theorem~\ref{th:foremost} using the following notion of edge-compatibility that takes into account that edges must be in the prescribed graph:
$$
PrescribedEdgeCompat(D,\{v,w\},\tau) := EdgeCompat(D,\{v,w\},\tau) \text{ and } \{u,w\}\in E_p.
$$

\else
\paragraph*{Periodic temporal graph and prescribed graph}

It is straightforward to generalize \Cref{th:foremost} to \textsc{Periodic \FoTGR}  and \textsc{Prescribed \FoTGR} with appropriate definitions of edge compatibility (see Therems~3.13 and~3.14 in the appendix).
\fi

\paragraph*{Non-strict foremost paths with a prescribed graph}

\iflong
We now consider the non-strict version of the prescribed foremost realization problem.
\else
In this setting, a prescribed graph
$G_p=([n],E_p)$ is additionally given as input, and the realization is required to have a subgraph of $G_p$ as underlying graph. We let $N_p(v)=\{w:\exists \{v,w\}\in E_p\}$ denote the set of neighbors of any vertex $v\in [n]$ in $G_p$.
\fi
%
\iflong
Recall that in the adaptation of Algorithm~\ref{alg:foremost} for the non-strict case, we were using the fact that, in a $\NSFo$-realization $\mg$ of $D$, a non-strict temporal path where all edges are traversed at time $\tau$ from a node $v$ to a node $w$ can be replaced by a single temporal edge $(\{v,w\},\tau)$ (see Lemma~\ref{lem:NS-compat-real}). The main difficulty here is that the shortcut temporal edge might not be part of the prescribed graph. 
\else
The main idea is again to add temporal edges $(\{v,w\},\tau)$ that satisfy $NSEdgeCompat(D,\{v,w\},\tau)$ and such that $\{v,w\}$ is present in the prescribed graph. But conversely to the non-strict setting considered in \Cref{th:NS-foremost}, it is not possible to replace an instantaneous temporal $vw$-path (whose edges are traversed at the same time) in a realization by a single appearance of $\{v,w\}$ as this edge might not be in the prescribed graph.  
\fi
Indeed, the condition $D_{uv} < D_{uw}$ at Line~\ref{lin:test-inf} of \Cref{alg:foremost} now becomes problematic as the prescribed graph may impose to add a temporal edge $(\{v,w\},D_{uw})$ such that $D_{uv}=D_{uw}$ to fulfill an entry $D_{uw}$. Moreover, the order in which we can fulfill entries $D_{uw_1}=\cdots=D_{uw_p}$ in this manner may depend on the prescribed graph. A naive solution would be to let each edge $\{v,w\}\in E_p$ appear at all times $\tau\notin\{0,\infty\}$ appearing in $D$ that satisfy $NSEdgeCompat(D,\{v,w\},\tau)$. It would then suffice to check if the resulting temporal graph is an $\NSFo$-realization of $D$. However, this would result in a poor complexity and possibly $\Theta(n^4)$ time labels overall.
We can still solve the problem with a better complexity and a tight number of time labels as stated below.

\iflong
\begin{theorem}
\else
\begin{theorem}[$\star$]
\fi
\label{th:Pr-NS-foremost}
	\textsc{Prescribed \NSFoTGR} can be solved in $\Oh(n^2m)$ time and $\Oh(n^2)$ space, where $m$ is the number of edges of the prescribed graph. Furthermore, if dealing with a realizable instance, a realization with at most $n^2$ time labels can be computed with same complexity.
\end{theorem}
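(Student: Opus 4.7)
The plan is to adapt the non-strict foremost algorithm (Theorem~\ref{th:NS-foremost}) to respect the prescribed graph $G_p$. The key obstacle highlighted in the text is that the final segment of a non-strict foremost temporal path ending at time $\tau$ may consist of several edges all traversed at $\tau$, and the shortcut single-edge replacement that worked in Lemma~\ref{lem:NS-compat-real} is unavailable since a shortcut need not belong to $G_p$. This creates a dependency order among vertices first reached at the same time, which a naive pair-by-pair iteration cannot handle.

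I would process each source $u\in [n]$ independently. For a fixed $u$, I sort the targets by $D_{u\cdot}$ and iterate over the (at most $n$) distinct time levels in increasing order. At each level $\tau$, I run a BFS inside $G_p$, with the visited set pre-initialised to $R_u^{<\tau}:=\{v : D_{uv}<\tau\}$; an edge $\{v,w\}$ from a visited $v$ to an unvisited $w$ is followed only if $D_{uw}=\tau$ and $\NSECompat(D,\{v,w\},\tau)$ holds. Each newly visited $w\in R_u^{=\tau}:=\{w : D_{uw}=\tau\}$ is assigned its BFS-parent $v$, and the label $\tau$ is added to $\{v,w\}$; if some $w\in R_u^{=\tau}$ is missed, the algorithm returns NO. Crucially, the BFS order inside a single level naturally sequences the case $D_{uv}=D_{uw}=\tau$: $v$ is visited before $w$, so the tree edges form a chain all labelled $\tau$ that yields the desired non-strict $uw$-walk arriving at $\tau$.

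Correctness has two directions. For soundness, an analogue of Lemma~\ref{lem:NS-compat-add} using $\NSECompat$ ensures $D\le \NSFo(\mg)$, and the BFS exhibits for every pair $(u,w)$ with $D_{uw}<\infty$ an explicit non-strict $uw$-path in $\mg$ arriving at time $D_{uw}$, so $\NSFo(\mg)=D$. For completeness, if $D$ admits a prescribed $\NSFo$-realization $\mg^\star$, then by Lemma~\ref{lem:NS-compat} every temporal edge of $\mg^\star$ is NS-edge-compatible with $D$; hence $\mg^\star$ itself provides, for every $u$ and $\tau$, a path in the compatibility subgraph from $R_u^{<\tau}$ to each $w\in R_u^{=\tau}$, a path our BFS must also discover.

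For the complexity, note that for each edge $\{v,w\}\in E_p$ and each source $u$, the edge can only serve as a tree edge of the BFS at the level $\tau=\max(D_{uv},D_{uw})$: earlier, one endpoint is not yet reached; later, both endpoints are already visited and the edge is skipped. Hence a single precomputed bit $\NSECompat(D,\{v,w\},\max(D_{uv},D_{uw}))$ per edge suffices, costing $O(n)$ per edge and $O(nm)$ per source. The at most $n$ BFSes per source then inspect $O(m)$ edges each with $O(1)$ lookups, giving $O(nm)$ per source and $O(n^2m)$ overall. The per-source working memory of $O(m)$ is reused across sources, yielding $O(n^2)$ total space. Since at most one new label is added per vertex of $R_u^{=\tau}$, the total label count is bounded by $\sum_u\sum_\tau|R_u^{=\tau}|\le n^2$. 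The main conceptual obstacle is justifying this precomputation shortcut, which relies on pre-marking all starters as visited and on a constant-time guard $D_{uw}\le \tau$ preventing the BFS from venturing outside $R_u^{\le\tau}$.
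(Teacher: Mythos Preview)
Your proposal is correct and follows essentially the same approach as the paper: both run a BFS restricted to $\NSECompat$-compatible edges of $G_p$ within each time level to handle the chain of simultaneous edges at the end of a non-strict foremost path (the paper's Lemma~\ref{lem:NS-compat-real-bis}), with the only differences being that you iterate source-first rather than level-first and that you precompute the single relevant compatibility bit per (edge, source) pair instead of testing on the fly. These reorganizations are harmless and yield the same $\Oh(n^2m)$ time, $\Oh(n^2)$ space, and $n^2$-label bounds.
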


\iflong
\begin{algorithm}[ht]
	\caption{Prescribed \NSFoTGR}\label{alg:pres-NS-foremost}
	\begin{algorithmic}[1]
	\Require A number $n$, an $n\times n$ distance matrix $D$, a prescribed graph $G_p=([n],E_p)$
	\Ensure YES if there exists a temporal graph $(G_p,\lambda)$ that is a $\NSFo$-realization of $D$
	, NO otherwise
	
	\State Let $d_1,\ldots,d_p$ be 
	an enumeration of the set $\{D_{uw} : u\not=w \text{ and } D_{uw}\not=\infty \}$
	\State Let $\lambda$ be an empty labeling, i.e. $\lambda(\{u,v\})=\emptyset$ for all $\{u,v\}\in E_p$ 	
	\State $Mark_{uw}:=False$ for all $u,w\in [n]$
	\ForAll{$i=1,\dots,p$}
		\State Let $C_i$ be the set of all couples $(u,w)$ such that $D_{uw}=d_i$
		\ForAll{$(u,w)\in C_i$} \label{lin:uw-main}
			\If{$\neg$ $Mark_{uw}$}
				\If{$\exists v\in N_p(w)$ such that $D_{uv}<D_{uw}$ and $\NSECompat(D,\{v,w\},d_i)$}
					\State $Mark_{uw}:=True$ \label{lin:mark1}
					\State Add $d_i$ to $\lambda(\{v,w\})$
					\State $BFS(u,w,d_i)$
				\EndIf
			\EndIf
		\EndFor
		\If{$\exists (u,w)\in C_i$ such that $\neg Mark_{uw}$}
			\State Return NO
		\EndIf
	\EndFor	
	\State Return YES

	\smallskip
	
	\Procedure{BFS}{$u,w,d_i$}
		\Comment{Given two nodes $u,w$ and the time $d_i=D_{uw}$}
		
		\State Add $w$ to queue $Q$
		\While{$Q\neq \emptyset$}
			\State Let $v:=pop(Q)$ \label{lin:pop}
			\ForAll{$w'\in N_p(v)$ such that $D_{uw'}=d_i$ and $\neg Mark_{uw'}$}
				\If{$\NSECompat(D,\{v,w'\},d_i)$}
					\State $Mark_{uw'}:=True$ \label{lin:mark2}
					\State Add $d_i$ to $\lambda(\{v,w'\})$
					\State Add $w'$ to $Q$
				\EndIf
			\EndFor		
		\EndWhile
		
	\EndProcedure
	
	\end{algorithmic}
\end{algorithm}
\else
\fi

The result is a consequence of 
\iflong
Algorithm~\ref{alg:pres-NS-foremost} which scan the set $\{d_1,\ldots,d_p\}$ of entries of $D$ excluding $0$ and $\infty$ (in any order).
\else
the following algorithm. It scans the set $\{d_1,\ldots,d_p\}$ of entries of $D$ excluding $0$ and $\infty$ (in any order). For each $d_i$, it checks the set $C_i$ of all couples $(u,w)$ such that $D_{uw}=d_i$. If there exists a vertex $v\in N_p(w)$ such that $D_{uv}<D_{uw}$ and $\NSECompat(D,\{v,w\},d_i)$ is satisfied, it adds label $D_{uw}$ to $\{v,w\}$, similarly to \Cref{alg:foremost}. In addition, it starts a BFS like procedure to find other couples $(u,w')\in C_i$ that can be reached at time $d_i$ through $w$.
See Algorithm~2 
in the appendix for more details.
\fi

\iflong
 The set $C_i$ of all couples $(u,w)$ with same value $D_{uw}=d_i$ is processed as a whole. As with the prescribed non-strict variant of Algorithm~\ref{alg:foremost}, we add label $d_i$ to an edge $\{v,w\}$ if we find a neighbor $v$ of $w$ in $G_p$ such that the edge with time label $D_{uw}$ is edge-compatible with $D$ according to $\NSECompat$ and $u$ can reach $v$ before $w$ according to $D$, i.e., $D_{uv} < D_{uw}$. However, this is not sufficient as some pairs $(u,w)$ might need to use an edge $\{w',w\}$ at time $d_i$ where $(u,w')$ is another couple of $C_i$. For that purpose, for each couple $(u,w)$ where a compatible edge $\{v,w\}$ with $D_{uv} < D_{uw}$ was found, we start a BFS like search to scan and mark other couples $(u,w')$ that can be reached at time $d_i$ through $w$ based on the following lemma.

\begin{lemma}\label{lem:NS-compat-real-bis}
	If $\G$ is a $\NSFo$-realization of $D$, then for any entry $\tau=D_{uw}$ with $u\not= w$, there exists vertices $v_1,\ldots,v_k$ with $k\ge 2$ and $v_k=w$ such that $D_{uv_1}<\tau$ and such that the temporal edges $(\{v_i,v_{i+1}\},\tau)$ for $i\in [k-1]$ are all $\NSFo$-edge-compatible with $D$ and are all in $\G$. 
\end{lemma}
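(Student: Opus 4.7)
The plan is to exploit the structure of a non-strict foremost temporal $uw$-path in the realization $\G$ itself, rather than building a shortcut as in the proof of Lemma~\ref{lem:NS-compat-real}. Because the prescribed graph can forbid the shortcut edge $\{v,w\}$, I need the certifying edges to literally appear in $\G$.

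First, I would fix a non-strict foremost temporal $uw$-path $P$ in $\G$. Since $\G$ is a $\NSFo$-realization of $D$, the arrival time of $P$ is exactly $\tau = D_{uw}$, so its final temporal edge carries the label $\tau$. Let $v_1, v_2, \ldots, v_k = w$ be the maximal suffix of vertices of $P$ such that every edge $(\{v_i,v_{i+1}\},\tau)$ on it has time label exactly $\tau$ (so $k \geq 2$ since the last edge of $P$ carries label $\tau$). By construction, these temporal edges all lie in $\G$, which gives the ``in $\G$'' requirement for free.

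Next, I would verify $D_{uv_1} < \tau$. Two cases: either $v_1 = u$, in which case $D_{uv_1} = 0 < \tau$ (recalling that time labels are positive and $u \neq w$); or $v_1 \neq u$, in which case the maximality of the suffix forces the temporal edge of $P$ entering $v_1$ to carry a label $\tau' < \tau$. Since $\G$ realizes $D$, the prefix of $P$ up to $v_1$ witnesses a non-strict temporal $uv_1$-walk arriving at time $\tau' < \tau$, so $D_{uv_1} \leq \NSFo(\G)_{uv_1} \leq \tau' < \tau$.

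Finally, the $\NSFo$-edge-compatibility of each $(\{v_i,v_{i+1}\},\tau)$ with $D$ is a direct consequence of Lemma~\ref{lem:NS-compat}: in a $\NSFo$-realization, every temporal edge is edge-compatible. The only real subtlety in the whole argument is the edge case where the entire path $P$ is ``instantaneous'' at time $\tau$, which is why I single out $v_1 = u$ above; once that is handled, everything else is a clean reuse of previous lemmas.
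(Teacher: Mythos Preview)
Your proof is correct and follows essentially the same approach as the paper: take a non-strict foremost temporal $uw$-path, extract its maximal suffix of edges all labeled $\tau$, and invoke Lemma~\ref{lem:NS-compat} for edge-compatibility. The paper's proof is terser and leaves the verification of $D_{uv_1}<\tau$ implicit, whereas you spell out both the maximality argument and the degenerate case $v_1=u$; this extra care is fine and does not constitute a different route.
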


\begin{proof}
	It suffices to consider a non-strict foremost temporal $uw$-path in $\G$ and its longest suffix $(\{v_1,v_2\},\tau),\dots,(\{v_{k-1},w\},\tau)$ of temporal edges with time label $\tau=D_{uw}$. All these temporal edges are $\NSFo$-edge-compatible by Lemma~\ref{lem:NS-compat}.
\end{proof}

\begin{proof}
	We now prove the correctness of Algorithm~\ref{alg:pres-NS-foremost}.
	First, suppose that there exists a $\NSFo$-realization $\G$ such that its underlying graph is a subgraph of $G_p$. We prove that all couples $(u,w)$ satisfying $0<D_{uw}<\infty$ are marked. Consider any couple $(u,w)\in C_i$ and let us prove that it will be marked, i.e. $Mark_{uw}$ will be set to $True$. 
	Applying Lemma~\ref{lem:NS-compat-real-bis} with $\tau=d_i=D_{uw}$, there exists vertices $v_1,\ldots,v_k$ with $v_k=w$ and $D_{uv_1}<d_i$ such that each temporal edge $(\{v_j,v_{j+1}\},d_i)$ is in $\G$ and is $\NSFo$-edge-compatible with $D$. We have the following properties:
	\begin{itemize} 
		\item[(i)] All edges $\{v_j,v_{j+1}\}$ for $j\in [k-1]$ are in $E_p$ since the underlying graph of $\G$ is a subgraph of $G_p$.
		\item[(ii)] $D_{uv_j}\le d_i$ for all $j\in \{2,\ldots,k\}$ since $D_{uv_1}=\NSFo(\G)_{uv_1}< d_i$ and a non-strict foremost temporal $uv_1$-path can be extended by $(\{v_j,v_{j+1}\},d_i)$ for $j\in [k-1]$, leading to $d_i\ge \NSFo(\G)_{uv_j}=D_{uv_j}$ for $1<j\le k$.
	\end{itemize}
	Let $j$ be the last index in $[k]$ such that $D_{uv_j}<d_i$.
	Since $D_{uw}=D_{uv_k}=d_i$, we have $j<k$ and $D_{uv_{j+1}}=d_i$.
	As $(\{v_j,v_{j+1}\},d_i)$ is $\NSFo$-edge-compatible with $D$, the algorithm must mark $(u,v_{j+1})$ at Line~\ref{lin:mark1} if it has not been already marked during a call to $\text{\sc BFS}$ at  Line~\ref{lin:mark2}.  
	Suppose for the sake of contradiction that $(u,w)$ is not marked. Let $j'$ be the last index in $j+1,\ldots,k-1$ which is marked.
	This happens either at Line~\ref{lin:mark1} or Line~\ref{lin:mark2}. In both cases, $v_{j'}$ is enqueued in $Q$ during a call to $\text{\sc BFS}$, and later popped at Line~\ref{lin:pop}. The algorithm then scans $w'=v_{j'+1}$ and must mark it at Line~\ref{lin:mark2} since $D_{uv_{j'+1}}=d_i$ by (ii) and the choice of $j$, and $(\{v_{j'},v_{j'+1}\},d_i)$ is $\NSFo$-edge-compatible with $D$, in contradiction with $Mark_{uv_{j'+1}}=False$ by the choice of $j'$.
	We thus conclude that all couples $(u,w)$ with $0<D_{uw}<\infty$ are marked, and the algorithm thus returns YES.

	Now, note the following invariant: $(G_p,\lambda)$ is $\NSFo$-compatible at any point of the execution of the algorithm. This is a direct consequence of Lemma~\ref{lem:NS-compat-add}. We thus have $D\le\NSFo((G_p,\lambda))$. We prove that we indeed have $D=\NSFo((G_p,\lambda))$. Suppose for the sake of contradiction that there exists a couple $(u,w)$ satisfying $D_{uw}<\NSFo((G_p,\lambda))_{uw}$. Consider such a couple $(u,w)$ such that $D_{uw}$ is minimum. Consider the index $i$ such that $D_{uw}=d_i$. Moreover, consider the first couple $(u,w')$ which is marked among the couples $(u,w)\in C_i$ satisfying $D_{uw}<\NSFo((G_p,\lambda))_{uw}$. It must have been marked when finding a $\NSFo$-edge-compatible temporal edge $(\{v,w'\},d_i)$ which has been added to $(G_p,\lambda)$. If this happens at Line~\ref{lin:mark1}, we have $D_{uv}<D_{uw'}$ by the choice of $v$, implying $D_{uv}=\NSFo((G_p,\lambda))_{uv}$ by minimality of $D_{uw}$. Otherwise, it happens at Line~\ref{lin:mark2}, and $(u,v)$ was already marked since $v$ was popped at Line~\ref{lin:pop}. This implies $D_{uv}=\NSFo((G_p,\lambda))_{uv}$ by the choice of $w'$. In both cases we have $D_{uv}=\NSFo((G_p,\lambda))_{uv}$, and extending a non-strict foremost temporal $uv$-path in $(G_p,\lambda)$ by the temporal edge $(\{v,w'\},d_i)$ leads to a non-strict temporal $uw'$-walk arriving at time $d_i=D_{uw'}$ in contradiction with $D_{uw'}<\NSFo((G_p,\lambda))_{uw'}$.
	This concludes the proof of correctness of Algorithm~\ref{alg:pres-NS-foremost}.

	The complexity of the algorithm comes from the following observation.
	For each couple $(u,w)$, we consider the neighbors $v$ of $w$ at most twice: once when considering $(u,w)$ at Line~\ref{lin:uw-main} in the main loop, and once when $w$ is popped during a BFS call. The reason is that a node $w$ is enqueued only if $(u,w)$ has just been marked and $(u,w)$ was not marked before. For each such neighbor $x$, we perform a test of edge compatibility in $O(n)$. The overall complexity is thus
	$\Oh(\sum_{uw} |N_p(w)| n)=\Oh(n^2\sum_w |N_p(w)|)=\Oh(n^2m)$.

	Finally, a time label is added to an edge $\{v,w\}$ at most once for each node $u$. Indeed, if $D_{uv}<D_{uw}$ (resp. $D_{uw}<D_{uv}$), it can only be added when fulfilling $(u,w)$ (resp. $(u,v)$). Ohterwise, if $D_{uv}=D_{uw}$, it can only be added when one of $v,w$ is marked and not the other, i.e., when fulfilling either $(u,w)$ or $(u,v)$. 
	The constructed realization $(G_p,\lambda)$ thus has at most $n^2$ time labels in total.
\end{proof}
\else
\fi

\subsection{Limits of polynomial-time algorithms for \EARLY}
In this section, we show several additional requirements on instances of~\EARLY for which the problem becomes NP-hard.

\iflong
\paragraph*{Foremost paths with a single time-label per edge}
\fi
\iflong
The problem becomes hard when we restrict the number of allowed labels per edge to one. 
\else
\fi
\iflong
\begin{theorem}
\else
\begin{theorem}[$\star$]
\fi\label{foremost simple}
		\EARLY is NP-hard when allowing at most one label per edge. 
	\end{theorem}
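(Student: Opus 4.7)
The plan is to prove NP-hardness by a polynomial-time reduction from \SAT (or more conveniently 3-\SAT). The key conceptual point motivating the reduction is that, once each edge is restricted to a single time label, the flexibility exploited in the polynomial-time algorithm of \Cref{th:foremost} disappears: the unique label on $\{v,w\}$ must simultaneously play the role of ``last edge'' for every foremost path that ends at $w$ via $v$, and of ``last edge'' for every foremost path that ends at $v$ via $w$. By \Cref{lem:compat-real}, the label is forced to equal $D_{u,w}$ for some $u$ with $D_{u,v}<D_{u,w}$, and symmetrically in the other direction, turning the realization question into a rigid combinatorial choice problem reminiscent of Boolean satisfiability.

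First, I would construct a \emph{variable gadget}. For each variable $x_i$, introduce dedicated vertices $u_i,v_i$ together with two probe vertices $a_i,b_i$, and set entries of $D$ so that the only way to realize $D_{a_i,v_i}=t_T^i$ and $D_{b_i,v_i}=t_F^i$ is to label the edge $\{u_i,v_i\}$ with one of the two specific times $t_T^i$ (encoding $x_i=\mathit{true}$) or $t_F^i$ (encoding $x_i=\mathit{false}$). The remaining arrival time is then realized through a separate, independent single-label edge that is present irrespective of the assignment. Separating $t_T^i$ and $t_F^i$ by large gaps and using auxiliary ``deadline'' entries in $D$ ensures that no other edge incident to $v_i$ can play the role of the last edge of a foremost path terminating at $v_i$ with one of these two times.

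Next, I would construct a \emph{clause gadget}. For each clause $c_j=(\ell_{j,1}\vee\ell_{j,2}\vee\ell_{j,3})$, introduce a source $s_j$ and a target $w_j$ together with a small bypass path of prescribed arrival times through each variable edge corresponding to a literal of $c_j$. The entry $D_{s_j,w_j}$ is chosen so that it can be achieved as a foremost arrival time only if at least one variable edge is labeled with its ``satisfying'' value, matching the literal. Conversely, the edge-compatibility condition in \Cref{def:compat} is used to block any spurious realization through unrelated edges: additional probe vertices force that attempting to reach $w_j$ along a path corresponding to all three literals being false would violate $EdgeCompat(D,\cdot,\cdot)$ at one of the gadget edges, thereby ruling that case out. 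Correctness in both directions then follows by reading the label of each variable edge as the value of the corresponding variable.

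The main obstacle will be ensuring that the globally defined matrix $D$ admits no \emph{unintended} realizations arising from interactions among gadgets and from the fact that any non-incident pair $\{x,y\}$ can in principle be added as an edge with a compatible label. My plan to handle this is to adopt a hierarchical time scheme where variable times, clause times, and ``padding'' times lie in widely separated intervals, and to saturate each auxiliary vertex with entries of $D$ that force $EdgeCompat$ to fail for every time outside its designated slot. A further technical point is that the encoding of $D$ must remain of size polynomial in the formula, which is automatic provided the total number of distinct time values is polynomial and the reduction introduces only polynomially many gadget vertices. The appendix proof would carry out the verification of these properties and show that $\phi$ is satisfiable if and only if $D$ admits a single-label $\Fo$-realization.
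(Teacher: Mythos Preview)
Your high-level strategy---reduce from \SAT, encode each variable as a binary choice forced by the single-label restriction, and have clause gadgets succeed only if some literal is set correctly---is the right one and matches the paper's approach. However, what you have written is a plan rather than a proof: no matrix entries are specified, the variable and clause gadgets are described only by what they are supposed to achieve, and the crucial step of ruling out unintended realizations (any non-adjacent pair could in principle receive an edge with \emph{some} compatible label) is deferred to a hypothetical appendix. As it stands, there is nothing a reader could verify. In particular, your variable gadget asserts that ``the remaining arrival time is then realized through a separate, independent single-label edge'' without saying what that edge is or why it does not interfere with anything else; and your clause gadget relies on ``additional probe vertices'' that ``force $EdgeCompat$ to fail'' without exhibiting a single such entry.

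For comparison, the paper's construction is strikingly compact and avoids separate per-variable edges altogether. It introduces only the literal vertices $\{x,\overline{x}:x\in X\}$, the clause vertices $C$, and two global vertices $\top,\bot$, and uses only the time values $1,2,3,4$. The binary choice for variable $x$ is encoded not on a dedicated gadget edge but on which of $\{x,\top\}$ or $\{\overline{x},\top\}$ carries label $2$: the entries $D_{x,\top}=D_{\overline{x},\top}=2$ together with $D_{x,\overline{x}}=1$ force at least one of these edges to have label $2$, while $D_{c,\top}=3$ (with $D_{\top,c}=4$ blocking the direct edge) forces some literal $\ell\in c$ to have $\lambda(\{\ell,\top\})=3$. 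Since an edge has a single label, that $\ell$ cannot simultaneously carry label $2$, which yields the satisfying assignment. This mechanism is both simpler and more transparent than your proposed hierarchical time scheme with widely separated intervals; if you pursue your own route, you will need to actually write down the matrix and check every pair, which the paper's four-value construction makes tractable.
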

	\iflong
\begin{proof}
We reduce from~\SAT.

\prob{\SAT}
{A formula~$F$ in CNF.}
{Is~$F$ satisfiable?}

Let~$F$ be an instance of~\SAT where no clause contains the same variable both positively and negatively.
Under these restrictions, \SAT is NP-hard~\cite{K72}.

\textbf{Construction.}
Let~$X$ be the variable set of~$F$ and let~$C$ denote the clauses of~$F$.
To obtain an instance~$D$ of~\EARLY, we first define the vertex set over which the matrix~$D$ is defined.
We set~$V := \{x,\overline{x}\mid x\in X\} \cup C \cup \{\top,\bot\}$.

We now describe the entries of~$D$.
See~\Cref{fig early simple} for an illustration.
Let~$c\in C$. 
We set~$D_{c,\top} := 3$.
Furthermore, for each variable~$x$ that occurs positively in~$c$, we set~$D_{c,x} := D_{x,c} := 2$ and~$D_{\overline{x},c} := 2$.
Similarly, for each variable~$x$ that occurs negatively in~$c$, we set~$D_{c,\overline{x}} := D_{\overline{x},c} := 2$ and~$D_{x,c} := 2$.
For each variable~$x$, we set~$D_{x,\overline{x}} := D_{\overline{x},x} := 1$, and for each literal~$\ell$, we set~$D_{\ell,\top} := D_{\top,\ell} := D_{\ell,\bot} := D_{\bot,\ell} := 2$.
Furthermore, for any two literals~$\ell$ and~$\ell'$ that belong to different variables, we set~$D_{\ell,\ell'} := D_{\ell',\ell} := 2$. We set $D_{\top,\bot}:=1$.
All other entries of~$D$ are set to~$4$.
Note that this includes the entry~$D_{\top,c}$ for each clause~$c$.  
This completes the definition of~$D$.

\textbf{Intuition.}
The difficulty of the constructed instance comes from the realizability of the entries from clause vertices to~$\top$.
Each clause~$c$ can only reach literals that are contained in~$c$ in time~$2$, but has to reach~$\top$ by time~$3$.
This implies that for some literal~$\ell$ in~$c$, the edge~$\{\ell,\top\}$ needs to receive label~$3$, as the direct edge~$\{c,\top\}$ cannot receive a label smaller than~$4$ due to~$D_{\top,c}=4$.
Moreover, for each variable~$x$, the entries of~$D$ between~$\{x,\ol\}$ and~$\{\top,\bot\}$ ensure that at least one of~$\{x,\top\}$ or~$\{\overline{x},\top\}$ has to receive label~$2$.
As we require to have at most one label per edge, note that the edge~$\{\ell,\top\}$ can receive label~$3$ to fulfill $D_{c,\top}$ only if it did not already receive label~$2$.
Hence, we can derive a truth assignment for the edges between each variable gadget and the vertex~$\top$ (by checking which of the edges~$\{x,\top\}$ or~$\{\ol,\top\}$ does not have label 2), which we show to be a satisfying assignment of the formula in each realization of~$D$.

\textbf{Correctness.}
We now show that~$F$ is satisfiable if and only if~$D$ is realizable with a single label per edge.

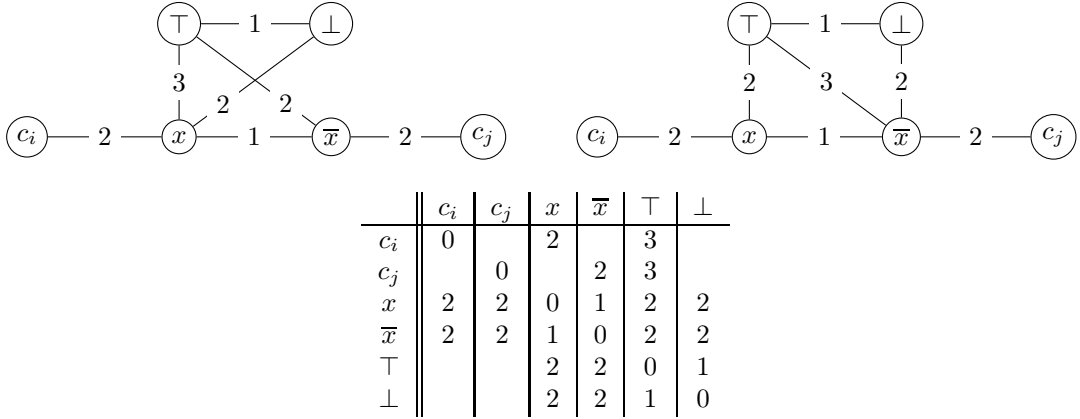
\begin{figure}
\centering
\begin{tikzpicture}

\tikzstyle{k}=[circle,fill=white,draw=black,minimum size=10pt,inner sep=2pt]

\node[k] (l) at (0,0) {$c_i$};
\node[k] (p) at (2,0) {$x$};
\node[k] (n) at (4,0) {$\overline{x}$};
\node[k] (r) at (6,0) {$c_j$};
\node[k] (v) at (2,1.5) {$\top$};
\node[k] (b) at (4,1.5) {$\bot$};

\draw[-] (l) -- (p) node [midway, fill=white] {$2$};
\draw[-] (r) -- (n) node [midway, fill=white] {$2$};
\draw[-] (n) -- (p) node [midway, fill=white] {$1$};

\draw[-] (b) -- (v) node [midway, fill=white] {$1$};

\draw[-] (p) -- (b) node [near start, fill=white] {$2$};
\draw[-] (n) -- (v) node [near start, fill=white] {$2$};

\draw[-] (p) -- (v) node [midway, fill=white] {$3$};

\begin{scope} [xshift=7.5cm]

\node[k] (l) at (0,0) {$c_i$};
\node[k] (p) at (2,0) {$x$};
\node[k] (n) at (4,0) {$\overline{x}$};
\node[k] (r) at (6,0) {$c_j$};
\node[k] (v) at (2,1.5) {$\top$};
\node[k] (b) at (4,1.5) {$\bot$};

\draw[-] (l) -- (p) node [midway, fill=white] {$2$};
\draw[-] (r) -- (n) node [midway, fill=white] {$2$};
\draw[-] (n) -- (p) node [midway, fill=white] {$1$};

\draw[-] (b) -- (v) node [midway, fill=white] {$1$};
\draw[-] (n) -- (b) node [midway, fill=white] {$2$};
\draw[-] (p) -- (v) node [midway, fill=white] {$2$};
\draw[-] (n) -- (v) node [midway, fill=white] {$3$};

\end{scope}
\end{tikzpicture}
\vspace{.4cm}

\begin{tabular}{c||c|c|c|c|c|c}
 & $c_i$ & $c_j$ & $x$ & $\overline{x}$ & $\top$ & $\bot$ \\\hline
$c_i$ & 0 &   & 2 &  & 3 &   \\
$c_j$ &   & 0 &  & 2 & 3 &   \\
$x$ & 2 & 2 & 0 & 1 & 2 & 2  \\
$\overline{x}$ & 2 & 2 & 1 & 0 & 2 & 2  \\
$\top$ &   &   & 2  & 2  & 0 & 1 \\
$\bot$ &   &   & 2 & 2 & 1 & 0
\end{tabular}
\caption{The variable gadget from the reduction of~\Cref{foremost simple} with a true assignment on the left side and a false assignment on the right side. 
Here, the clause~$c_i$ contains the literal~$x$ and the clause~$c_j$ contains the literal~$\ol$.
The matrix only shows the entries of value smaller than 4. 
Each clause vertex~$c_i$ aims to reach~$\top$ by time~$3$, but for each variable~$x$, one of the edges~$\{x,\top\}$ or~$\{\ol,\top\}$ has to receive label~$2$.}
\label{fig early simple}
\end{figure}

$(\Leftarrow)$
Let~$\mg=((V,E),\lambda)$ be a realization of~$D$ where each edge receives a single label.
Note that we can assume that~$E = \binom{V}{2}$, as the largest entry of~$D$ is~$4$, that is, each non-edge can receive label~$4$ while preserving the property of being a realization for~$D$.

First, we show that the edges between each variable gadgets and~$\{\top,\bot\}$ encodes a truth assignment.
\begin{claim}\label{only one two simple}
For each variable~$x\in X$, $\lambda(\{x,\top\}) = 2$ or~$\lambda(\{\ol,\top\}) = 2$.
\end{claim}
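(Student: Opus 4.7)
The plan is to prove the claim by contradiction. Suppose that for some variable $x$, neither $\{x,\top\}$ nor $\{\ol,\top\}$ carries label $2$. Since $D_{x,\top}=2$, the realization $\mg$ must contain a strict temporal $x\top$-path $P$ whose last temporal edge has label exactly $2$. My aim is to show that the only such $P$ forces $\lambda(\{\ol,\top\})=2$, contradicting the assumption.

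First I would rule out that $P$ is the direct edge $\{x,\top\}$, since that would immediately give $\lambda(\{x,\top\})=2$. So $P$ has length at least $2$. Here the single-label restriction plays the central role: all labels of $P$ before the last must be strictly less than $2$, hence equal to $1$ (labels are positive integers). Because strict temporal paths require strictly increasing labels and only the value $1$ is available for all preceding edges, $P$ must have length exactly $2$. Write $P=(x,y,\top)$ with $\lambda(\{x,y\})=1$ and $\lambda(\{y,\top\})=2$.

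The next step is to identify $y$. From $\lambda(\{x,y\})=1$ we obtain $\Fo(\mg)_{x,y}\le 1$, hence $D_{x,y}=1$. Scanning the row of $D$ corresponding to $x$ (constructed above), the only off-diagonal entry equal to $1$ is $D_{x,\ol}$, so $y=\ol$. Therefore $\lambda(\{\ol,\top\})=2$, and combining the two cases yields the claim.

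The main obstacle I expect is making airtight the argument that no alternative two-hop (or longer) route can produce foremost time $2$; this is handled cleanly by combining strictness of the paths, the fact that labels are positive integers, and the uniqueness of the entry $D_{x,y}=1$ in row $x$ of $D$. The restriction to at most one label per edge is essential here, because with multiple labels we could avoid interactions between the direct edge and the two-hop route; this mirrors the intuition sketched earlier, where an edge $\{\ell,\top\}$ cannot serve the role of label $2$ for variable compatibility and label $3$ for clause reachability simultaneously.
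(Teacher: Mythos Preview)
Your proof is correct and follows essentially the same approach as the paper's: both arguments use that a foremost $x\top$-path arriving at time~$2$ must either be the direct edge $\{x,\top\}$ or a length-two path $(x,\ol,\top)$, since $\ol$ is the unique vertex with $D_{x,\ol}=1$. The only cosmetic difference is that you phrase it as a proof by contradiction while the paper gives the direct case distinction.
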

\begin{claimproof}
Let~$x\in X$.
Since~$\mg$ realizes~$D$ and~$D_{x,\top} = 2$, there is a temporal path~$P$ from~$x$ in~$\mg$ that reaches~$\top$ at time~$2$.
As~$D_{x,\ol}$ and~$D_{\ol,x}$ are the only entries of~$D$ of value~$1$ that include~$x$, the temporal path~$P$ that realizes~$D_{x,\top} = 2$ consists of either (i)~the single edge~$\{x,\top\}$ with label~$2$, or (ii)~the edge~$\{x,\ol\}$ with label~$1$ followed by the edge~$\{\ol,\top\}$ with label~$2$.
\end{claimproof}

Based on this claim, we define a truth assignment of the variables of~$X$ as follows.
For each variable~$x\in X$, we set variable~$x$ to True if and only if~$\lambda(\{x,\top\}) \neq 2$.
We show that this assignment satisfies~$F$.
To this end, we show that each clause is satisfied by the assignment.
Let~$c$ be a clause.
Since~$D_{c,\top} = 3$ and~$\mg$ realizes~$D$, there is a temporal path~$P$ from~$c$ that reaches~$\top$ at time~$3$.
By $D_{\top,c} = 4$, we get $\lambda(\{c,\top\}) > 3$. 
This implies that $P$ has length at least~$2$.
Let~$q$ be the first internal vertex of~$P$.
Since each entry of~$D$ including~$c$ has value at least~$2$, $P$ has length exactly~$2$ since we consider strict temporal paths and~$P$ ends in time~$3$.
That is, $P=(c,q,\top)$, $\lambda(\{c,q\}) = 2$, and~$\lambda(\{q,\top\}) = 3$.
By~$\lambda(\{c,q\}) = 2$, we get that~$D_{c,q} \leq 2$, which implies that~$q$ is a literal that is contained in clause~$c$.
Thus, literal~$q$ is set to True by our truth assignment, since~$\lambda(\{q,\top\}) = 3 \neq 2$.
Hence, clause~$c$ is satisfied by the truth assignment, which implies that~$F$ is satisfied.

$(\Rightarrow)$
Let~$\pi$ be a satisfying truth assignment for~$F$.
We define a labeling~$\lambda$ for the complete graph~$G$ with vertex set~$V$, such that~$\mg:= (G,\lambda)$ realizes~$D$.
We initialize~$\lambda$ by setting~$\lambda(e) := 1$ for each edge~$e\in \{\{\top,\bot\}\}\cup\{\{x,\ol\}\mid x\in X\}$.
This realizes all entries of~$D$ of value~$1$.
Next, we define the set~$E_2$ of edges that receive label~$2$.
For any two literals~$\ell$ and~$\ell'$ that do not belong to the same variable, we add the edge~$\{\ell,\ell'\}$ to~$E_2$.
This implies that all entries of value~$2$ between literal vertices are realized.
For each clause~$c$ and each literal~$\ell$ of~$c$, we add~$\{c,\ell\}$ to~$E_2$.
This realizes all entries of~$D$ of value~$2$ including any clause vertex.
This is due to the fact that~$c$ and literal~$\ell$ can pairwise reach each other at time~$2$ via the direct edge and the negated literal~$\overline{\ell}$ of~$\ell$ can reach~$c$ via the path using edges~$\{\overline{\ell},\ell\}$ at time~$1$ and~$\{\ell,c\}$ at time~$2$.
The only entries of~$D$ of value~$2$ that are not yet realized are all entries between~$\{\top,\bot\}$ and any literal~$\ell$.
Let~$x\in X$. 
If~$x$ is assigned to True by~$\pi$, we add the edges~$\{x,\bot\}$ and~$\{\ol,\top\}$ to~$E_2$.
Otherwise, we add the edges~$\{x,\top\}$ and~$\{\ol,\bot\}$ to~$E_2$.
In both cases, all vertices of~$\{x,\ol\}$ can reach all vertices of~$\{\top,\bot\}$ and vice versa at time~$2$, since there is a perfect matching at time~$2$ between these sets and the edges~$\{x,\ol\}$ and~$\{\top,\bot\}$ receive label~$1$.
Thus, via the edges of~$E_2$ and the edges with label~$1$, all entries of~$D$ of value at most~$2$ are realized.
Next, we define the edges~$E_3$ that we need to realize all entries of value~$3$ in~$D$.
Note that these are exactly the entries~$D_{c,\top}$ for each clause~$c$.
If~$x$ is assigned to True by~$\pi$, we add the edges~$\{x,\top\}$ to~$E_3$ and otherwise, we add the edges~$\{\ol,\top\}$ to~$E_3$.
By definition of~$E_2$, these edges have not received any label yet.
Moreover, since~$\pi$ is a satisfying assignment, these edges realize all entries of value~$3$, since for each clause~$c$, there is a literal~$\ell$ of~$c$ that is assigned True by~$\pi$.
That is, $\{c,\ell\}\in E_2$ and~$\{\ell,\top\}\in E_3$, which implies that there is a path from~$c$ to~$\top$ that arrives at time~$3$.
Finally, all other edges receive label~$4$.
This clearly realizes all remaining entries of~$D$.
Moreover, as $E_1,E_2$, and~$E_3$ are pairwise disjoint, each edge receives exactly one time label.
Hence, $\mg$ realizes~$D$.
\end{proof}
\fi

\paragraph*{Foremost paths in ranges}

Given a temporal path metric $M$, we define the following variant of \textsc{$M$-path TGR} where the input sequence encodes a matrix $D$ of ranges. More precisely, each entry is supposed to represent a \emph{range} $[\ell,r]=\{\ell,\ldots,r\}$ of positive integers. The Ranged-$M$-path predicate is then defined as $P(\G,D):=M(\G)_{uv}\in D_{uv}$ for all $u,v\in [n]$. For example, this leads to the following problem for $M=\Fo$.

\prob{Ranged-Foremost-path TGR}
{A number $n$ and an $n\times n$ matrix $D$ of ranges.}
{Is there a temporal graph $\mg$ such that $\Fo(\G)_{uv}\in D_{uv}$ for all $u,v\in [n]$?}


An entry $(u,v)$ is said to be \emph{undetermined} if $D_{uv}=[\ell,r]$ with $\ell \not= r$. Note that when the number $k$ of undetermined entries is zero, this problem is equivalent to \FoTGR, for which we presented a polynomial-time algorithm.
We now analyze the complexity of~\REARLY and its non-strict variant~\RNSEARLY with a focus on the parameter $k$.

\iflong
\begin{theorem}
\else
\begin{theorem}[$\star$]
\fi
\label{th:ranged-hardness}
\REARLY and \RNSEARLY are both NP-hard. 
Moreover, believing the ETH, neither \REARLY nor \RNSEARLY can be solved in $2^{o(k)}\cdot n^{\Oh(1)}$~time, where~$k$ is the number of undetermined entries of~$D$.
\end{theorem}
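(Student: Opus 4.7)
The plan is to reduce from an ETH-hard restricted version of~\SAT---for example 3-\SAT where each variable appears in at most a constant number of clauses---so that the input formula has $n$ variables and $m = O(n)$ clauses. Under ETH (together with the sparsification lemma), such instances do not admit a $2^{o(n+m)} = 2^{o(n)}$-time algorithm, so if we can produce an equivalent instance of \REARLY (resp.\ \RNSEARLY) whose number of undetermined entries is $k = O(n)$ and whose total size is polynomial, we obtain both NP-hardness and the claimed $2^{o(k)} \cdot n^{\Oh(1)}$ lower bound.

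I would base the construction on the single-label reduction of~\Cref{foremost simple}. In that proof, the single-label restriction forces, for every variable~$x$, a binary choice of which of the edges $\{x,\top\}$ and $\{\ol,\top\}$ carries label~$2$ versus label~$3$, and this choice encodes a truth assignment; the clause gadget is then satisfied exactly when every clause has a satisfied literal. Since the ranged setting drops the single-label restriction (the underlying model of~\EARLY is the full-labeling one, which by~\Cref{th:foremost} is polynomial), the binary choice must instead be enforced by a single undetermined entry per variable. Concretely, for each variable~$x$ I would add vertices and an undetermined entry of the form $D_{a_x,b_x} \in \{\ell, \ell+1\}$, chosen so that the two possible foremost arrival times correspond to exactly the two canonical labelings of the variable gadget of~\Cref{foremost simple}. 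All remaining entries would be determined and would replicate the clause-checking logic of that construction.

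The main technical obstacle is making the variable gadget \emph{rigid}: I must ensure that no labeling escapes the binary choice by, for instance, placing both labels~$2$ and~$3$ on both edges $\{x,\top\}$ and $\{\ol,\top\}$, which would trivially satisfy every clause constraint without encoding any assignment. I plan to rule this out by surrounding each undetermined entry with carefully chosen determined entries that pin down every foremost arrival time inside the gadget, so that realizing the undetermined entry at value~$\ell$ (resp.~$\ell+1$) uniquely forces one of the two intended labelings and forbids the other. Because each variable contributes exactly one undetermined entry and the clause gadget contributes none, the parameter satisfies $k = n$, matching the lower bound we seek.

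For \RNSEARLY, I would adapt the same construction by spreading out the critical time values (e.g.\ replacing consecutive labels $1,2,3$ by well-separated values such as $1,3,5$) so that no two consecutive labels coincide. This prevents non-strict temporal paths from short-circuiting a gadget inside a single time step, which (as in~\Cref{lem:NS-compat-real}) is the only phenomenon that distinguishes the non-strict from the strict setting. The remainder of the correctness argument then carries over essentially unchanged, yielding the same NP-hardness and $2^{o(k)} \cdot n^{\Oh(1)}$ ETH-based lower bound for \RNSEARLY.
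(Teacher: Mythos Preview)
Your proposal leaves the central step unresolved. You correctly identify that, once the single-label restriction is dropped, nothing prevents a realization from putting \emph{both} labels~$2$ and~$3$ on \emph{both} edges $\{x,\top\}$ and $\{\ol,\top\}$, trivially satisfying every clause constraint. You propose to block this by ``surrounding each undetermined entry with carefully chosen determined entries that pin down every foremost arrival time inside the gadget,'' but you never exhibit such entries or argue that they exist. This is not a detail: the algorithm behind \Cref{th:foremost} shows that in the full-labeling model any $\Fo$-edge-compatible temporal edge can be freely added, so forcing a genuine binary choice inside a gadget using only determined entries, while still leaving two distinct realizations indexed by one range of size two, is precisely the nontrivial content of the reduction. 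Your non-strict fix (spreading labels apart) addresses a different issue and does nothing for this rigidity problem. Until you actually produce the gadget and verify both directions of the equivalence, the argument is only a plan.

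The paper sidesteps all of this by reducing from \Reach instead of \SAT. Given a digraph $G=(V,A)$, set $D_{u,v}:=[1,n^3]$ if $(u,v)\in A$ and $D_{u,v}:=n^3+1$ otherwise; a realization of $D$ is then exactly a temporal graph whose strict reachability graph is $G$ (truncate to lifetime $n^3$, then pad with a final snapshot at time $n^3+1$ containing all edges). The number of undetermined entries is $|A|$, and since \Reach is known to be NP-hard with an ETH lower bound of $2^{o(|A|)}\cdot n^{\Oh(1)}$~\cite{EMM25}, both conclusions follow immediately. For \RNSEARLY one reduces from \NReach in the same way. The moral: choosing a source problem whose hard parameter already \emph{is} the number of arcs makes the per-variable binary-choice machinery you are trying to build unnecessary.
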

\iflong
\begin{proof}
We present a straight forward reduction from~\Reach.

\prob{\Reach}
{A directed graph~$G=(V,A)$.}
{Is there a undirected temporal graph~$\mg$ with \emph{strict reachability graph} equal to~$G$, that is, is there for each pair of distinct vertices~$(u,v)$ a strict temporal path from~$u$ to~$v$ in~$\mg$ if and only if $(u,v)$ is an arc of~$G$?}

Erlebach~et~al.~\cite{EMM25} showed that \Reach is NP-hard and cannot be solved in~$2^{o(|A|)}\cdot n^{\Oh(1)}$~time, unless the ETH fails.
Let~$G=(V,A)$ be an instance of~\Reach.
We define an instance~$D$ of~\REARLY as follows.
For each pair~$(u,v)$ of distinct vertices, we set~$D_{u,v} := [1,n^3]$ if~$(u,v)\in A$ and~$D_{u,v} := n^3+1$, otherwise.
This completes the definition of~$D$.
Note that the number of undetermined entries in~$D$ equals~$|A|$.
We show that~$D$ is realizable if and only if~$G$ is a yes-instance of~\Reach.

$(\Rightarrow)$
Let~$\mg' := (G',\lambda')$ be a realization for~$D$.
Moreover, let~$\mg''$ denote the temporal graph obtained by limiting~$\mg'$ to the first~$n^3$ time steps.
Since~$\mg'$ realizes~$D$, this implies that for each pair~$(u,v)$ of distinct vertices, there is a temporal path from~$u$ to~$v$ in~$\mg''$ if and only if~$D_{u,v} = [1,n^3]$.
The latter is the case if and only if~$(u,v)$ is an arc of~$G$ by definition of $D$.
This implies that the strict reachability graph of~$\mg''$ is equal to~$G$.
Thus, $G$ is a yes-instance of~\Reach.
 
$(\Leftarrow)$
Let~$\mg'=(G'=(V,E'),\lambda')$ be a temporal graph with strict reachability graph that is equal to~$G$.
By Erlebach~et~al.~\cite[Lemma~3]{EMM25}, we can assume that each of the at most~$n^2$ edges of~$G'$ receives at most~$n$ labels each.
Thus, we can assume that the lifetime of~$\mg'$ is at most~$n^3$, since edgeless snapshots can safely be removed. 
Hence, for each pair~$(u,v)$ of distinct vertices, there is a temporal path from~$u$ to~$v$ in~$\mg'$ if and only if~$(u,v)\in A$, which is the case if and only if~$D_{u,v} = [1,n^3]$ by definition of $D$.
This implies that~$\mg'$ realizes all entries of~$D$ with range~$[1,n^3]$ and does not create a temporal path with arrival time less than~$n^3+1$ for any vertex pair~$(u,v)$ with~$D_{u,v} = n^3+1$.
Thus, by additionally assigning label~$n^3+1$ to each edge~$e\in \binom{V}{2}$, the resulting temporal graph also realizes all entries of~$D$ of value~$n^3+1$.
Consequently, $D$ is realizable.

This completes the proof that~\REARLY is NP-hard.
The ETH lower bound now follows from the fact that~\Reach has this ETH lower bound for~$|A|$ and~$|A|$ equals the number~$k$ of undetermined entries in~$D$.

We can also show the hardness and ETH lower bound for the non-strict version, i.e., \RNSEARLY, by instead reducing from~\NReach.
The problem is the same as~\Reach, except that one asks for a temporal graph with~\emph{non-strict} reachability graph equal to~$G$ instead of a temporal graph with strict reachability graph equals to~$G$.
By Erlebach~et~al.~\cite{EMM25} \NReach is also NP-hard and cannot be solved in~$2^{o(|A|)}\cdot n^{\Oh(1)}$~time, unless the ETH fails.
\end{proof}
\fi

For the strict setting, we obtain hardness even when each range has size at most~2.

\iflong
\begin{theorem}
\else
\begin{theorem}[$\star$]
\fi\label{range size 2 hard}
\REARLY is NP-hard even when each range has length at most two and the largest value of the matrix is~$5$. 
\end{theorem}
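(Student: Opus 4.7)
The plan is to give an NP-hardness reduction that exploits the size-2 range freedom to encode Boolean choices, while carefully controlling the temporal timeline so that every matrix entry stays within the values $\{0,1,2,3,4,5\}$. A natural source problem is \SAT (or a restricted variant such as 3-\SAT), and the construction will build on the skeleton already used in \Cref{foremost simple}. The key conceptual difference is that in \Cref{foremost simple} the binary freedom was provided by the single-label restriction (each edge can carry \emph{only one} label in $\{2,3\}$), whereas here the same binary freedom has to be extracted from the fact that each entry is a size-$2$ range.

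First, for each variable $x$ I introduce literal vertices $x,\ol$; for each clause $c$ a clause vertex; and a pair of global target vertices $\top,\bot$. The rigid part of the gadget is encoded by single-valued entries, reusing the values of \Cref{foremost simple}: $D_{x,\ol}=D_{\ol,x}=1$, $D_{\top,\bot}=D_{\bot,\top}=1$, incidence entries $D_{c,\ell}=D_{\ell,c}=2$ for every literal $\ell$ appearing in clause $c$, and literal-to-literal entries of value $2$ for literals belonging to different variables. The encoding of the truth assignment is then placed on the literal-to-target entries: for each variable $x$ I set $D_{x,\top},D_{\ol,\top},D_{x,\bot},D_{\ol,\bot}$ to carefully chosen size-$2$ ranges (e.g.\ $\{2,3\}$) so that, by the foremost edge-compatibility condition of \Cref{def:compat} applied to the edge $\{x,\ol\}$ of label $1$, exactly one of $x,\ol$ can reach $\top$ at the smaller end of its range. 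This induces the desired binary choice per variable.

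Next, for each clause $c$ I add a single-valued entry $D_{c,\top}$ of value $3$ (as in \Cref{foremost simple}), forcing a two-hop temporal path from $c$ to $\top$ via some literal~$\ell$ of $c$. The reasoning of \Cref{foremost simple} then transfers: such a path requires the edge $\{\ell,\top\}$ to carry label $3$, which is possible only if $\ell$ was not already forced to carry label $2$ on that edge by the range choice above, i.e.\ only if $\ell$ is set to True. All remaining entries (including $D_{\top,c}$ and pairs of vertices we do not want to connect quickly) are fixed to the value $5$, which is the "infinity" substitute of the reduction.

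The main obstacle will be proving the two directions of correctness while keeping the maximum value at $5$. Concretely I have to verify: (i) that for every variable $x$ the size-$2$ ranges on $\{x,\top\},\{\ol,\top\},\{x,\bot\},\{\ol,\bot\}$ really admit only the two intended labelings and that no shortcut through another variable gadget allows both literals of $x$ to reach $\top$ at the smaller range value; (ii) that a clause vertex cannot reach $\top$ at time $\le 3$ via any unintended route (through $\bot$, through another clause vertex, or through a literal of a different clause), which requires a short case analysis using edge-compatibility; and (iii) that all entries frozen to $5$ are consistent with the foremost arrival time in the constructed realization. Because the matrix has only $5$ useful time levels, this tuning is the delicate technical step; once it is in place the equivalence with satisfiability of the source formula follows exactly as in the proof of \Cref{foremost simple}.
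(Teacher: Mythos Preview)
Your proposal has the right overall shape (reduce from \SAT, literal and clause vertices, two target vertices $\top,\bot$), but the central mechanism that is supposed to enforce a binary choice per variable is not there, and as stated it does not work.

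You place the size-$2$ ranges on the literal-to-target entries $D_{x,\top},D_{\ol,\top},D_{x,\bot},D_{\ol,\bot}$ and claim that, because of the edge $\{x,\ol\}$ at time~$1$, ``exactly one of $x,\ol$ can reach $\top$ at the smaller end of its range.'' The opposite is true: if $\Fo_{x,\top}=2$ then the last edge of a foremost $x\!\to\!\top$ path is some $(\{v,\top\},2)$ with $v\in\{x,\ol\}$, and in either case $\ol$ also reaches $\top$ at time~$2$ (via $\ol\!\to\! x\!\to\!\top$ or directly). Hence any realization satisfies $\Fo_{x,\top}=\Fo_{\ol,\top}$, so the range choice is symmetric in $x$ and $\ol$ and cannot encode which literal is true. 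More generally, ranges only grant \emph{freedom}; the binary constraint must come from a \emph{fixed} entry, which you have not identified. Since you additionally set $D_{\top,\bot}=D_{\bot,\top}=1$ and allow multiple labels per edge, nothing in your construction prevents every edge $\{\ell,\top\}$ from carrying both labels $2$ and $3$, so every clause vertex trivially realizes $D_{c,\top}=3$ regardless of the formula.

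The paper's construction resolves exactly this issue by reversing both design choices. It keeps the literal-to-target entries \emph{fixed} ($D_{\ell,\top}=2$, $D_{\ell,\bot}=3$) and sets $D_{\top,\bot}=5$. The large value $5$ is the constraint: realizing $D_{x,\bot}=3$ forces some edge $\{\ell,\bot\}$ with $\ell\in\{x,\ol\}$ to carry label~$3$, and then $2\in\lambda(\{\ell,\top\})$ would create a $\top\!\to\!\ell\!\to\!\bot$ path arriving at time~$3<5$. Hence at most one of $\{x,\top\},\{\ol,\top\}$ may carry label~$2$, which is the binary choice. The ranges sit on the \emph{reverse} entries $D_{\top,\ell}\in[2,3]$ and $D_{\bot,\ell}\in[3,4]$; they are slack to make the realization consistent, not the source of the constraint. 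Also note that the paper uses $D_{c,\ell}=1$ and $D_{c,\top}=2$ (not $2$ and $3$), so that the clause argument directly pins down an edge $(\{\ell,\top\},2)$ for some literal $\ell$ of $c$.
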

\iflong
\begin{proof}
We reduce from~\SAT.

Let~$F$ be an instance of~\SAT where no clause contains the same variable both positively and negatively.
Under these restrictions, \SAT is NP-hard~\cite{K72}.

\textbf{Construction.}
Let~$X$ be the variable set of~$F$ and let~$C$ denote the clauses of~$F$.
To obtain an instance~$D$ of~\REARLY, we first define the vertex set over which the matrix~$D$ is defined.
We set~$V := \{x,\overline{x}\mid x\in X\} \cup C \cup \{\top,\bot\}$.

We now describe the entries of~$D$.
See~\Cref{fig ranges} for an illustration.
Let~$c\in C$. 
We set~$D_{c,\top} := 2$ and~$D_{c,\bot} := 3$.
Furthermore, for each variable~$x$ that occurs positively in~$c$, we set~$D_{c,x} := D_{x,c} := 1$ and~$D_{c,\overline{x}} := 2$.
Similarly, for each variable~$x$ that occurs negatively in~$c$, we set~$D_{c,\overline{x}} := D_{\overline{x},c} := 1$ and~$D_{c,x} := 2$.
Let~$x$ be a variable.
We set~$D_{x,\overline{x}} := D_{\overline{x},x} := 1$, $D_{x,\top} := D_{\overline{x},\top} := 2$, and~$D_{x,\bot} := D_{\overline{x},\bot} := 3$.
Furthermore, we set~$D_{\top,x} := D_{\top,\overline{x}} := [2,3]$ and~$D_{\bot,x} := D_{\bot,\overline{x}} := [3,4]$.
All other entries of~$D$ are set to~$5$.
Note that this includes the entry~$D_{\top,\bot}$ and the entry~$D_{\top,c}$ for each clause~$c$.
This completes the definition of~$D$.

\textbf{Intuition.}
The difficulty of the constructed instance comes from the realizability of the entries from clause vertices to~$\top$.
Each clause~$c$ can only reach literals that are contained in~$c$ in time~$1$, but has to reach~$\top$ by time~$2$.
This implies that for some literal~$\ell$ in~$c$, the edge~$\{\ell,\top\}$ needs to receive label~$2$.
Moreover, for each variable~$x$, the entries of~$D$ ensure that at least one of~$\{x,\bot\}$ or~$\{\overline{x},\bot\}$ has to receive label~$3$.
As we do not allow for paths from~$\top$ to~$\bot$ prior to time~$5$, not both edges~$\{x,\top\}$ and~$\{\overline{x},\top\}$ can receive label~$2$, as this would create a path of arrival time~$3$ from~$\top$ to~$\bot$.
Thus, the edges between each variable gadget and the vertex~$\top$ encodes a truth assignment, which we show to be a satisfying assignment of the formula in each realization of~$D$.

\textbf{Correctness.}
We now show that~$F$ is satisfiable if and only if~$D$ is realizable.

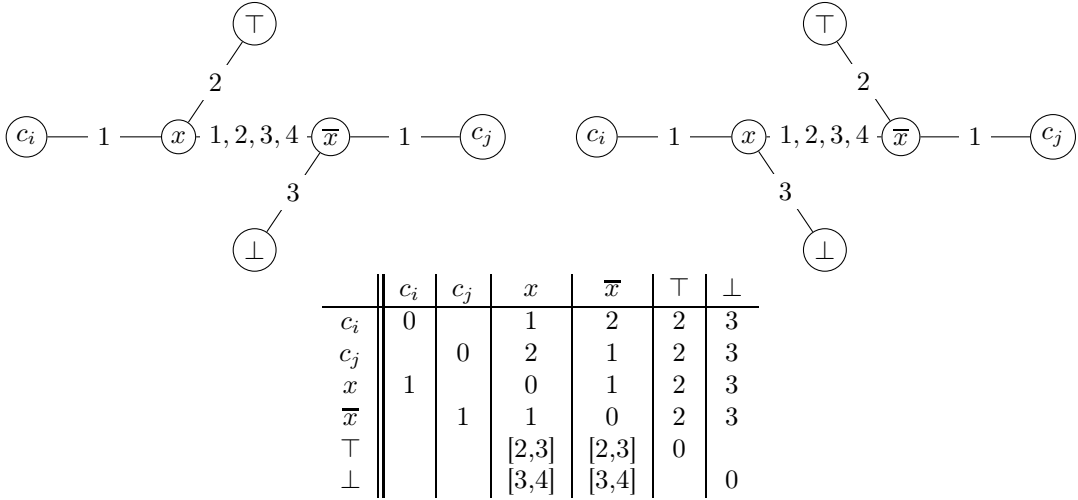
\begin{figure}
\centering
\begin{tikzpicture}

\tikzstyle{k}=[circle,fill=white,draw=black,minimum size=10pt,inner sep=2pt]

\node[k] (l) at (0,0) {$c_i$};
\node[k] (p) at (2,0) {$x$};
\node[k] (n) at (4,0) {$\overline{x}$};
\node[k] (r) at (6,0) {$c_j$};
\node[k] (v) at (3,1.5) {$\top$};
\node[k] (b) at (3,-1.5) {$\bot$};

\draw[-] (l) -- (p) node [midway, fill=white] {$1$};
\draw[-] (r) -- (n) node [midway, fill=white] {$1$};
\draw[-] (n) -- (p) node [midway, fill=white] {$1,2,3,4$};

\draw[-] (p) -- (v) node [midway, fill=white] {$2$};
\draw[-] (n) -- (b) node [midway, fill=white] {$3$};

\begin{scope} [xshift=7.5cm]

\node[k] (l) at (0,0) {$c_i$};
\node[k] (p) at (2,0) {$x$};
\node[k] (n) at (4,0) {$\overline{x}$};
\node[k] (r) at (6,0) {$c_j$};
\node[k] (v) at (3,1.5) {$\top$};
\node[k] (b) at (3,-1.5) {$\bot$};

\draw[-] (l) -- (p) node [midway, fill=white] {$1$};
\draw[-] (r) -- (n) node [midway, fill=white] {$1$};
\draw[-] (n) -- (p) node [midway, fill=white] {$1,2,3,4$};

\draw[-] (n) -- (v) node [midway, fill=white] {$2$};
\draw[-] (p) -- (b) node [midway, fill=white] {$3$};

\end{scope}
\end{tikzpicture}

\begin{tabular}{c||c|c|c|c|c|c}
 & $c_i$ & $c_j$ & $x$ & $\overline{x}$ & $\top$ & $\bot$ \\\hline
$c_i$ & 0 &   & 1 & 2 & 2 & 3  \\
$c_j$ &   & 0 & 2 & 1 & 2 & 3  \\
$x$ & 1 &   & 0 & 1 & 2 & 3  \\
$\overline{x}$ &   & 1 & 1 & 0 & 2 & 3  \\
$\top$ &   &   & [2,3]  & [2,3]  & 0 &  \\
$\bot$ &   &   & [3,4] & [3,4] &  & 0
\end{tabular}
\caption{The variable gadget from the reduction of~\Cref{range size 2 hard} with a true assignment on the left side and a false assignment on the right side. The matrix only shows the entries of value smaller than 5. 
Each clause vertex~$c_i$ aims to reach~$\top$ by time~$2$ and~$\bot$ by time~$3$, but $\top$ and~$\bot$ should pairwise not reach each other prior to time 5.}
\label{fig ranges}
\end{figure}

$(\Leftarrow)$
Let~$\mg:=(G,\lambda)$ be a temporal graph that realizes~$D$, where~$G$ is a static graph with vertex set~$V$.
We show that there is a satisfying assignment for~$F$.
To this end, we analyze the labels on edges incident with~$\top$ and~$\bot$.
Since each entry in~$D$ has value at most~$5$, we can assume without loss of generality that~$G$ is a clique and each edge of~$G$ exists in time step~$5$.
Consider the edges~$E := \{e\in \binom{V}{2}, \min \lambda(e) < 5\}$.
Since~$\mg$ realizes~$D$, for each edge~$\{u,v\}\in E$, we get that~$D_{u,v} < 5$ and~$D_{v,u} < 5$.
Since~$D_{\top,\bot} = 5$ and for each clause~$c$, $D_{\top,c} = D_{\bot,c} = 5$, the only edges of~$E$ incident with~$\top$ or~$\bot$ have literals as  their other endpoint.

\begin{claim}\label{only one two}
For each variable~$x\in X$, $2\notin \lambda(\{x,\top\})$ or~$2\notin \lambda(\{\ol,\top\})$.
\end{claim}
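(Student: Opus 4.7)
The plan is to argue by contradiction: assume that both~$\{x,\top\}$ and~$\{\ol,\top\}$ receive the label~$2$ in~$\mg$, and derive a strict temporal path from~$\top$ to~$\bot$ of arrival time at most~$4$, contradicting~$D_{\top,\bot}=5$. The central lever is the range entry~$D_{\bot,x}=[3,4]$, which forces the existence of a foremost $\bot$-to-$x$ temporal path~$P$ in~$\mg$ of arrival time in~$\{3,4\}$. Before any casework, I would first pin down the possible shape of~$P$. Since every entry~$D_{\bot,\top}$ and~$D_{\bot,c}$ for a clause~$c$ equals~$5$, no internal vertex of~$P$ can be~$\top$ or a clause vertex; so every vertex of~$P$ other than~$\bot$ is a literal. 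Moreover, for any literal~$u$, the entry~$D_{\bot,u}\in\{3,4\}$ forces the first edge of~$P$ to have label at least~$3$. Combining this with strict increase of labels and an arrival time of at most~$4$ bounds the length of~$P$ by two.

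Next, I would split into cases on the length of~$P$. If~$P$ uses the single edge~$\{\bot,x\}$ at time~$\tau\in\{3,4\}$, concatenating with~$\{x,\top\}$ at time~$2$ yields the strict path~$\top\xrightarrow{2}x\xrightarrow{\tau}\bot$ of arrival time~$\tau\leq 4$. If~$P$ uses two edges~$\{\bot,u\}$ at time~$3$ and~$\{u,x\}$ at time~$4$, then the intermediate literal~$u$ satisfies~$D_{u,x}\leq 4$; since~$D_{u,x}=5$ whenever~$u$ is a literal of a variable distinct from~$x$, the only possibility is~$u=\ol$. In that case~$\{\bot,\ol\}$ has label~$3$, and concatenating with~$\{\ol,\top\}$ at time~$2$ produces the strict path~$\top\xrightarrow{2}\ol\xrightarrow{3}\bot$ of arrival time~$3$. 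In every case we produce a strict~$\top$-to-$\bot$ path of arrival time at most~$4$, contradicting~$D_{\top,\bot}=5$.

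The main obstacle is the structural analysis of~$P$: justifying that all intermediate vertices are literals and that the only admissible two-edge realization runs through~$\ol$. Both reductions crucially use that entries of~$D$ outside the explicitly listed ones are~$5$, together with the definition of a realization (which prevents any edge or short detour from violating the large entries). Once this structural lemma is in place, the concatenation arguments that produce the contradictory $\top$-to-$\bot$ path are immediate from our hypothesis that both~$\{x,\top\}$ and~$\{\ol,\top\}$ carry the label~$2$.
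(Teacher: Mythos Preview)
Your proof is correct, but it takes a somewhat more laborious route than the paper. The paper starts from the \emph{fixed} entry~$D_{x,\bot}=3$ rather than the range entry~$D_{\bot,x}\in[3,4]$ that you use. Because the arrival time is pinned to exactly~$3$, the paper only needs to look at the last edge of an $x$-to-$\bot$ foremost path: its predecessor of~$\bot$ must be a literal (only literal-$\bot$ edges can carry labels below~$5$), and since~$D_{x,\ell'}=5$ for literals~$\ell'\notin\{x,\ol\}$, that predecessor is~$x$ or~$\ol$. Hence~$3\in\lambda(\{x,\bot\})$ or~$3\in\lambda(\{\ol,\bot\})$, and prepending the relevant $\top$-edge at time~$2$ yields the forbidden $\top$-to-$\bot$ path at time~$3$.

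Your argument reaches the same endpoint (a label in~$\{3,4\}$ on~$\{x,\bot\}$ or~$\{\ol,\bot\}$) but, because you chose the range entry, you first have to bound the length of~$P$ via the strictness constraint and then do a two-case analysis on that length. Nothing is wrong with this; it is just extra work that the paper avoids by picking the sharper entry~$D_{x,\bot}=3$. One minor stylistic point: in your length-two case you justify~$u=\ol$ via~$D_{u,x}\le 4$; it is slightly cleaner (and closer in spirit to the paper) to note that the edge~$\{u,x\}$ at time~$4$ forces~$\Fo(\mg)_{x,u}\le 4$ as well, which immediately contradicts~$D_{x,u}=5$ for literals~$u$ of other variables.
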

\begin{claimproof}
Let~$x\in X$.
Since~$\mg$ realizes~$D$ and~$D_{x,\bot} = 3$, there is a temporal path~$P$ from~$x$ in~$\mg$ that reaches~$\bot$ at time~$3$.
As already discussed, the only edges incident with~$\bot$ that can receive a label smaller than~$5$ are edges for which the other endpoint is a vertex corresponding to a literal.
Thus, the predecessor of~$\bot$ in~$P$ is a literal~$\ell$.
As~$D_{x,\ell'} = 5$ for each literal~$\ell' \notin \{x,\ol\}$, $\ell$ is either~$x$ or~$\ol$.
This implies that~$3\in \lambda(\{x,\bot\})$ or~$3\in \lambda(\{\ol,\bot\})$.
If~$3\in \lambda(\{x,\bot\})$, then~$2\notin \lambda(\{x,\top\})$, as otherwise, there is a temporal path from~$\top$ in~$\mg$ that reaches~$\bot$ at time~$3 < 5 = D_{\top,\bot}$.
Similarly, if~$3\in \lambda(\{\ol,\bot\})$, then~$2\notin \lambda(\{\ol,\top\})$, as otherwise, there is a temporal path from~$\top$ in~$\mg$ that reaches~$\bot$ at time~$3 < 5 = D_{\top,\bot}$.
Consequently, $2\notin \lambda(\{x,\top\})$ or~$2\notin \lambda(\{\ol,\top\})$.
\end{claimproof}

Based on this property, we now define a truth assignment.
For each variable~$x\in X$, we set variable~$x$ to True if and only if~$2\in \lambda(\{x,\top\})$.
We show that this assignment satisfies~$F$.
To this end, we show that each clause is satisfied by the assignment.
Let~$c$ be a clause.
Since~$D_{c,\top} = 2$ and~$\mg$ realizes~$D$, there is a temporal path~$P$ from~$c$ that reaches~$\top$ at time~$2$.
By $D_{\top,c} = 5$, we get $\min \lambda(\{c,\top\}) = 5$.
This implies that $P$ has length~$2$.
Hence, there is a vertex~$q$, such that~$P=(c,q,\top)$.
Moreover, $1\in \lambda(\{c,q\})$ and~$2\in \lambda(\{q,\top\})$, since we consider strict temporal paths and~$P$ ends in time~$2$.
By~$1\in \lambda(\{c,q\})$, we get that~$D_{c,q} = D_{q,c} = 1$.
This property only holds for the vertices~$q$ that represent the literals that are contained in clause~$c$.
That is, $q = \ell$ for some literal that is contained in~$c$.
Recall that~$P$ reaches~$\top$ at time~$2$.
This implies that~$2 \in \lambda(\{\ell,\top\})$.
Hence, if~$\ell$ is a positive literal~$x$, then variable~$x$ is set to True by the truth assignment.
Otherwise, if~$\ell$ is a negative literal~$\ol$, then the variable~$x$ is set to False by the truth assignment, due to~\Cref{only one two} and the fact that~$2\in \lambda(\{\ell,\top\})$.
In both cases, clause~$c$ is satisfied by the truth assignment, which implies that~$F$ is satisfied.

$(\Rightarrow)$
Let~$\pi$ be a satisfying truth assignment for~$F$.
We define a labeling~$\lambda$ for the complete graph~$G$ with vertex set~$V$, such that~$\mg:= (G,\lambda)$ realizes~$D$.
We initialize~$\lambda$ by setting~$\lambda(e) := \{5\}$ for each edge~$e$ of the complete graph~$G$.
Note that this guarantees that between each pair of vertices there is always a temporal paths with arrival time at most~$5$. 
Since this is the largest entry in~$D$, we guarantee to not realize journeys of too early arrival time in this way.
In the following, we describe how to add further labels to the edges, to realize all entries smaller than~$5$ in~$D$.
Let~$x$ be a variable.
We add the labels~$1,2,3$, and~$4$ to the edge~$\{x,\ol\}$.
If~$x$ is assigned to True by~$\pi$, we add label~$2$ to edge~$\{x,\top\}$ and label~$3$ to edge~$\{\ol,\bot\}$.
Otherwise, that is, if~$x$ is assigned to False by~$\pi$, we add label~$2$ to edge~$\{\ol,\top\}$ and label~$3$ to edge~$\{x,\bot\}$.
Finally, if variable~$x$ occurs positively in a clause~$c$, we add label~$1$ to the edge~$\{x,c\}$, and if variable~$x$ occurs negatively in a clause~$c$, we add label~$1$ to the edge~$\{\ol,c\}$.
This completes the definition of~$\lambda$.
We now show that this labeling realizes~$D$.

Note that (i)~only the labels~$1$ and~$5$ are on edges incident with clause vertices, (ii)~only the labels~$2$ and~$5$ are on edges incident with vertex~$\top$, and (iii)~only the labels~$3$ and~$5$ are on edges incident with vertex~$\bot$.
This implies that each temporal path~$P$ with arrival time smaller than~$5$ only uses vertices that correspond to literals as internal vertices.
Moreover, since for any two distinct variables~$x$ and~$y$, there is no edge with label smaller than~$5$ between any vertex of~$\{x,\ol\}$ and any vertex of~$\{y,\ol[y]\}$, each temporal path with arrival time less than~$5$ uses only edges of the subgraph depicted in~\Cref{fig ranges} for some variable~$x$.
That is, each such temporal path~$P$ uses only edges of~$G_x$ for some variable~$x$, where~$G_x := G[\{x,\ol,\top,\bot\} \cup C_x \cup C_{\ol}]$ with~$C_x := \{c\in C \mid x\in c\}$ and~$C_{\ol}:= \{c\in C \mid \ol\in c\}$.
Note that when restricting the edges of~$G_x$ to those that receive at least one label smaller than~$5$, the resulting static graph is a tree.
It is easy to verify that no temporal path in this tree arrives faster than specified in~$D$.
In the following, we thus only focus on showing that there is always a temporal path in~$\mg$ with exactly the arrival time as specified by~$D$.
We only focus on those entries of value smaller than~$5$, as the entries of value~$5$ are clearly realized by the fact that each edge receives label~$5$.

Let~$x$ be a variable.
Since~$1\in \lambda(\{x,c\})$ for each clause~$c\in C_x$, the entries~$D_{c,x}$ and~$D_{x,c}$ of value~$1$ are realized.
Similarly, since~$1\in \lambda(\{\ol,c\})$ for each clause~$c\in C_{\ol}$, the entries~$D_{c,\ol}$ and~$D_{\ol,c}$ of value~$1$ are realized.
Moreover, since the edge~$\{x,\ol\}$ receives labels~$1$ and~$2$, the entry~$D_{x,\ol}$ of value~$1$ are realized, as well as the entries~$D_{c,\ol}$ and~$D_{c',x}$ of value~$2$ for each clause~$c\in C_x$ and each clause~$c'\in C_{\ol}$.
Note that this implies that all entries of value less than~$5$ involving only clause and literal vertices are realized.
Now, recall that~$3\in \lambda(\{x,\bot\})$ or~$3\in \lambda(\{\ol,\bot\})$.
Since~$x$ and~$\ol$ can pairwise reach each other in time~$1$, and each vertex of~$C_x \cup C_{\ol}$ can reach vertex~$x$ and vertex~$\ol$ at time at most~$2$, there are paths that arrive at vertex~$\bot$ at time~$3$ for all vertices of~$\{x,\ol\} \cup C_x \cup C_{\ol}$. 
This realizes all entries~$D_{q,\bot}$ of value~$3$ with~$q\in \{x,\ol\} \cup C_x \cup C_{\ol}$.
Since each clause contains at least one variable, this thus implies that all entries of the form~$D_{v,\bot}$ are realized for each vertex~$v$ of~$G$.
Now consider the entries~$D_{\bot,x}$ and~$D_{\bot,\ol}$.
By~$3\in \lambda(\{x,\bot\})$ or~$3\in \lambda(\{\ol,\bot\})$, vertex~$\bot$ can reach one of the vertices~$x$ or~$\ol$ at time~$3$ and the other vertex at time~$4$, since edge~$\{x,\ol\}$ has label~$4$.
Thus, the entries~$D_{\bot,x}$ and~$D_{\bot,\ol}$ of range~$[3,4]$ are realized.
It remains to consider the entries involving~$\top$.
Recall that~$2\in \lambda(\{x,\top\})$ \jc{or}~$2\in \lambda(\{\ol,\top\})$.
By the fact that edge~$\{x,\ol\}$ receives label~$1$ and~$3$, this implies that (i)~both~$x$ and~$\ol$ can reach~$\top$ at time~$2$ and (ii)~vertex~$\top$ can reach one of the vertices~$x$ or~$\ol$ at time~$2$ and the other one at time~$3$.
Thus, (i)~the entries~$D_{x,\top}$ and~$D_{\ol,\top}$ of value~$2$ are realized and (ii)~the entries~$D_{\top,x}$ and~$D_{\top,\ol}$ of range~$[2,3]$ are realized.
Hence, all entries involving~$x$ or~$\ol$ are realized.
It remains to consider the entry~$D_{c,\top}$ of value~$2$ for each clause~$c\in C$.
Let~$c\in C$.
Since the truth assignment~$\pi$ satisfies~$F$, there is a literal~$\ell$ contained in~$c$ that is assigned to True by~$\pi$.
By definition of~$\lambda$, $1\in \lambda(\{c,\ell\})$ and~$2\in \lambda(\{\ell,\top\})$.
This implies that vertex~$c$ can reach~$\top$ at time~$2$ and thus that the entry~$D_{c,\top}$ of value~$2$ is realized.
Hence, all entries of value smaller than~$5$ of~$D$ are realized by~$\mg$.
Consequently, $\mg$ realizes~$D$.
\end{proof}
\fi

\paragraph*{FPT algorithm for \textsc{Ranged-Foremost-path TGR}}




We now propose a dynamic programming algorithm solving \textsc{Ranged-Foremost-path TGR} which has running time~$2^{\Oh(k)}\cdot n^{\Oh(1)}$. 
Note that this is tight in the sense that a significantly faster algorithm would contradict ETH by \Cref{th:ranged-hardness}.

\iflong
\begin{theorem}
\else
\begin{theorem}[$\star$]
\fi
\label{th:ranged-foremost}
	\textsc{Ranged-Foremost-path TGR} can be solved in $\Oh(k^2 3^kn^4)$ time and $\Oh(2^k+n^2)$ space, where $k$ is the number of undetermined entries of $D$. 
\end{theorem}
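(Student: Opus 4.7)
The plan is a dynamic program over subsets of undetermined entries, based on the characterization of $\Fo$-realizability given by Algorithm~\ref{alg:foremost}: a determined matrix $D'$ is $\Fo$-realizable iff, for every $(u,w)$ with $D'_{uw}<\infty$, there exists a vertex $v$ with $D'_{uv}<D'_{uw}$ and $EdgeCompat(D',\{v,w\},D'_{uw})$. The goal is thus to search for an assignment $\tau$ of values to undetermined entries, each in its range, making $D^\tau$ realizable.

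The first step would be to argue that it suffices to pick values for undetermined entries from a canonical set $V^*$ of $\Oh(n^2)$ values consisting of the determined entries of $D$ and the range endpoints. This uses the fact that $EdgeCompat$ only depends on strict comparisons with entries of $V^*$, so an arbitrary choice can be snapped to the nearest member of $V^*\cap R_{uw}$ in an order-preserving way. Sort $V^*$ as $t_1<\cdots<t_m$ with $m=\Oh(n^2)$. Then I would define $f[S]$, for $S\subseteq U$ (the set of undetermined entries), to be true iff the entries of $S$ admit an assignment in $V^*$ with values in the respective ranges such that the partial matrix combining the determined entries and those of $S$ is locally $\Fo$-consistent: every assigned entry $(u,w)$ has a witness $v$ with smaller assigned value, and $EdgeCompat$ holds on the currently-fixed part of the matrix. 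The DP sweeps through $t_1,\ldots,t_m$: at step $i$, for each $S$ with $f[S]=$true and each $T\subseteq U\setminus S$ whose members' ranges contain $t_i$, test whether simultaneously setting every entry of $T$ to $t_i$ preserves local consistency, and if so set $f[S\cup T]=$true. Then $D$ is realizable iff $f[U]$ is true, and backtracking yields an explicit $\tau$ that can be fed to Algorithm~\ref{alg:foremost} to build a realization.

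The table has $2^k$ Boolean entries and, together with the matrix, uses $\Oh(2^k+n^2)$ space. The total number of $(S,T)$ pairs considered across all time steps equals $\sum_{S\subseteq U}2^{|U\setminus S|}\cdot m=\Oh(3^k\cdot n^2)$; each transition verifies the witness condition and $EdgeCompat$ for the at most $k$ members of $T$ in $\Oh(kn^2)$ time and performs $\Oh(k)$ pairwise checks among the entries of $T$, giving $\Oh(k^2n^2)$ per transition and $\Oh(k^2 3^k n^4)$ in total. The main obstacle will be proving that local consistency at each transition implies global $\Fo$-realizability: I expect to need incremental analogues of Lemmas~\ref{lem:compat}--\ref{lem:compat-real}, using the strictness of temporal paths to guarantee that the witness of any entry assigned value $t_i$ lies among the entries processed at times strictly less than $t_i$. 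Combined with the canonicalization reduction to $V^*$, this should yield the claimed time and space bounds.
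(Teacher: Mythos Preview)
Your overall strategy—a dynamic program over subsets of undetermined entries, sweeping through a finite set of candidate values in increasing order and, at each step, testing which subset $T$ can be simultaneously set to the current value via the witness/$EdgeCompat$ characterization from Algorithm~\ref{alg:foremost}—is exactly the paper's approach.

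The gap is in your canonicalization step. The candidate set $V^*$ consisting of the determined entries and the range endpoints is too small. The claim that ``$EdgeCompat$ only depends on strict comparisons with entries of $V^*$'' is not correct: after assigning values to the undetermined entries, $EdgeCompat(D',\{v,w\},D'_{uw})$ compares $D'_{uw}$ against \emph{all} entries $D'_{xv},D'_{xw}$, some of which may themselves be undetermined, and likewise the witness condition $D'_{uv}<D'_{uw}$ may pit two undetermined entries against each other. A realization may therefore require several undetermined entries to take pairwise distinct values that all lie strictly between two consecutive members of $V^*$; snapping them to $V^*$ collapses these strict inequalities and can destroy realizability. Concretely, if three undetermined entries share the range $[5,10]$, no determined entry or other endpoint lies in $(5,10)$, and a realization assigns them $6,7,8$ with the strict order mattering, there is no order-preserving snap into $\{5,10\}$.

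The paper repairs this via Lemma~\ref{lem:ranged-time}: one may restrict to the set $\{\ell_{uw}+j:0\le j\le k\}$ of size $m=\Oh(kn^2)$, because between any two consecutive lower bounds there can be at most $k$ distinct foremost values (one per undetermined entry), so $k$ extra slots after each lower bound suffice. With this correction your DP goes through, and the extra factor of $k$ in $m$ is precisely what produces the $k^2$ in the stated $\Oh(k^23^kn^4)$ bound; your arithmetic with $m=\Oh(n^2)$ would yield only $\Oh(k\,3^kn^4)$. A second, smaller omission: at time $t_i$ the transition must also verify the witness condition for the \emph{determined} entries $(u,w)$ with $\ell_{uw}=r_{uw}=t_i$, not just for the members of $T$; this is the set $C_i$ in the paper's Algorithm~\ref{alg:ranged-foremost} and is needed for correctness.
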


In the following, we let $\ell_{uw}$ (resp. $r_{uw}$) denote the lower bound (resp. upper bound) of entry $(u,w)$, i.e., $D_{uw}=[\ell_{uw},r_{uw}]$. Recall that an entry $(u,w)$ is undetermined if $\ell_{uw}\neq r_{uw}$. The set of such undetermined entries is denoted by $Undet$ and its size is denoted by $k$.

\iflong
The above result is a consequence of Algorithm~\ref{alg:ranged-foremost}, which is based on the following observation.

\begin{lemma}\label{lem:ranged-time}
	If there exists a realization $\G$ of $D$ for Ranged-$\Fo$-path, then there is a realization $\G'$ of $D$ for Ranged-$\Fo$-path such that all time labels of $\G'$ are in the set $\{\ell_{uw}+j\mid u\neq w \text{ and } 0\leq j \leq k\}$, where $k$ is the number of undetermined entries of~$D$.
\end{lemma}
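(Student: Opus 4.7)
The plan is to start from an arbitrary realization $\G$ of $D$ and construct a new realization $\G'$ by a strictly monotone compression of the arrival times. Let $d_{uw}:=\Fo(\G)_{uw}$, so $\ell_{uw}\le d_{uw}\le r_{uw}$ for all ordered pairs with $u\ne w$. Sort the distinct finite values of $(d_{uw})$ as $d^{(1)}<\cdots<d^{(m)}$, and for each $i$ let $A_i:=\{(u,w):d_{uw}=d^{(i)}\}$, $L_i:=\max_{(u,w)\in A_i}\ell_{uw}$ and $R_i:=\min_{(u,w)\in A_i} r_{uw}$. Since $u\neq w$ forces $\ell_{uw}\ge 1$, we have $L_i\ge 1$.

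Next I would iteratively define $\hat d^{(i)}:=\max(L_i,\hat d^{(i-1)}+1)$ with $\hat d^{(0)}:=0$, and set $\hat d_{uw}:=\hat d^{(i)}$ for each $(u,w)\in A_i$. A short induction shows $L_i\le \hat d^{(i)}\le d^{(i)}\le R_i$, so $\hat d_{uw}\in[\ell_{uw},r_{uw}]$ for every pair. Because $d^{(i)}\mapsto \hat d^{(i)}$ is strictly monotone, any strict or weak order between two entries of $d$ is preserved in $\hat d$. Consequently, the predicate $EdgeCompat(\hat d,\{v,w\},\hat d_{uw})$ holds iff $EdgeCompat(d,\{v,w\},d_{uw})$ does, and the predecessor condition of Lemma~\ref{lem:compat-real} (existence of $v$ with $d_{uv}<d_{uw}$) is preserved as well. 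Thus $\hat d$ satisfies the hypotheses needed by Algorithm~\ref{alg:foremost}, which therefore outputs a $\Fo$-realization $\G'$ of $\hat d$ all of whose time labels are entries of $\hat d$.

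It remains to show each $\hat d^{(i)}$ belongs to $\{\ell_{uw}+j:u\neq w,\ 0\le j\le k\}$. I would let $i'$ be the largest index in $\{1,\dots,i\}$ with $\hat d^{(i')}=L_{i'}$; this exists since $\hat d^{(1)}=\max(L_1,1)=L_1$. The recursion gives $\hat d^{(i)}=L_{i'}+(i-i')$. For each intermediate index $j\in\{i'+1,\dots,i\}$, $\hat d^{(j)}>L_j$ means every pair $(u,w)\in A_j$ satisfies $\hat d_{uw}>\ell_{uw}$ and $\hat d_{uw}\le r_{uw}$, hence $\ell_{uw}<r_{uw}$, i.e.\ $(u,w)$ is undetermined. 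Since the $A_j$ are pairwise disjoint and nonempty and $\textsf{Undet}$ has size $k$, we get $i-i'\le k$. Writing $L_{i'}=\ell_{uw}$ for some $(u,w)\in A_{i'}$ concludes that $\hat d^{(i)}$ lies in the prescribed set.

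The main obstacle is the realizability argument in the second paragraph: one must verify that shrinking the arrival times does not destroy the existence of a valid temporal graph. The reason this goes through cleanly is that the correctness of Algorithm~\ref{alg:foremost} (via Lemmas~\ref{lem:compat}--\ref{lem:compat-real}) depends only on the pattern of strict and weak inequalities between matrix entries, and this pattern is exactly what a strictly monotone relabeling preserves.
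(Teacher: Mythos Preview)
Your proof is correct and the overall strategy---compress the foremost-arrival values monotonically and argue that the compressed matrix is still realizable---matches the paper's. The execution, however, differs in a useful way.

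The paper works directly on the temporal graph: it first prunes $\G$ so that every time label equals some entry of $\Fo(\G)$, then sorts the distinct lower bounds $\ell_1<\cdots<\ell_p$, and within each gap $(\ell_i,\ell_{i+1})$ maps the $j$th foremost value to $\ell_i+j$. Since this relabeling is strictly increasing on time labels, every temporal path of $\G$ survives in the relabeled graph $\G'$, and the new foremost values remain in their ranges; the bound $j\le k$ comes from noting that any foremost value strictly between two consecutive lower bounds must belong to an undetermined pair.

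You instead never touch $\G$: you define the target matrix $\hat d$ by the recursion $\hat d^{(i)}=\max(L_i,\hat d^{(i-1)}+1)$, observe that strict/weak order among entries of $d=\Fo(\G)$ is preserved in $\hat d$, and then invoke the realizability characterization implicit in Algorithm~\ref{alg:foremost} and Lemmas~\ref{lem:compat-add}--\ref{lem:compat-real} (a matrix is $\Fo$-realizable iff for every finite entry $D_{uw}$ there is a $v$ with $D_{uv}<D_{uw}$ and $EdgeCompat(D,\{v,w\},D_{uw})$). This is more modular---it reuses the machinery of Section~\ref{se:poltime-foremost} rather than re-arguing that temporal paths are preserved---and your compression is even tighter than the paper's, since your anchors $L_i$ depend on the realized level sets $A_i$ rather than on all lower bounds at once. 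Conversely, the paper's argument is more self-contained and yields $\G'$ as a direct relabeling of (a pruned) $\G$, which may be preferable if one wants to keep track of the underlying graph.

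One small point worth making explicit in your write-up: the invocation of Algorithm~\ref{alg:foremost} uses the ``if YES then realization'' half of the proof of Theorem~\ref{th:foremost}, and the entries $0$ and $\infty$ of $d$ are left fixed under your relabeling (so order with these extreme values is trivially preserved). Both are clearly intended, but stating them avoids any ambiguity.
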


\begin{proof}
Let $\G$ be a realization of $D$ for Ranged-$\Fo$-path. Without loss of generality, we can assume that all time labels of $\G$ are coefficients of its foremost matrix $\Fo(\G)$. Indeed, let $(\{u,v\},\tau)$ be a temporal edge  of $\G$ whose time label $\tau$ is not a coefficient of its foremost matrix $\Fo(\G)$. Let $P$ be a foremost temporal path from a vertex $x$ to a vertex $y$ that traverses $\{u,v\}$ from $u$ to $v$ at time $\tau$, without loss of generality. We have that there exists a foremost temporal $xv$-path $P'$ arriving at time $\Fo(\G)_{xv}<\tau$ as $\tau$ is not a coefficient of $\Fo(\G)$. The temporal path $P'$ extended with the suffix of $P$ that starts at node $v$ is thus a foremost temporal $xy$-walk that does not use temporal edge $(\{u,v\},\tau)$. 
Thus, deleting temporal edge $(\{u,v\},\tau)$ preserves the fact that $\G$ is a realization of $D$ for Ranged-$\Fo$-path. 

Thus, let $\mathcal{G}=(G,\lambda)$ be a realization of $D$ such that all the time labels of $\mathcal{G}$ are coefficients of its foremost matrix $\Fo(G)$. Let $\ell_1<\dots<\ell_p$ be the lower bounds of the entries of $D$ sorted in increasing order (excluding 0 and $\infty$). Let $C_\G$ be the set of coefficients of the foremost matrix of $\G$ that are not 0 or $\infty$. For all $i\in [p]$, we denote $\ell_i^1,\dots,\ell_i^{k_i}$ the coefficients in $C_\G$ sorted in increasing order that are strictly between $\ell_i$ and $\ell_{i+1}$, with the convention that $\ell_{p+1}=\infty$. 
In other words, we have the following increasing sequence: $0<\ell_1<\ell_1^1<\dots<\ell_1^{k_1}<\ell_2<\dots< \ell_i<\ell_i^1<\dots<\ell_i^{k_i}<\ell_{i+1}<\dots<\ell_{p+1}=\infty$. 
First, observe that, for all $i\in [p]$, the number $k_i$ of coefficients in $C_\G$ that are between the lower bounds $\ell_i$ and $\ell_{i+1}$ is at most $k$, where $k$ is the number of undetermined entries of $D$. 
We construct the temporal graph $\G'$ as follows: for each $i\in [p]$ and $j\in [k_i]$, we replace all time labels $\ell_i^j$ of $\G$ by the label $\ell_i+j$. 
We then have that all time labels of $\G'$ are in the set $\{\ell_{uw}+j \mid u\neq w \text{ and } 0\leq j \leq k\}$. Let us prove that $\G'$ is still a realization of $D$. We show that for all $(u,w)\in [n]\times [n]$, if $\Fo(\G)_{uw}=\ell_i^j$ for some $i\in [p]$ and $j\in [k_i]$ then $\Fo(\G')_{uw}=\ell_i+j$, and $\Fo(\G')_{uw}=\Fo(\G)_{uw}$ otherwise. 
This comes from the fact that the order of the sequence $0<\ell_1<\ell_1^1<\dots<\ell_1^{k_1}<\ell_2<\dots< \ell_i<\ell_i^1<\dots<\ell_i^{k_i}<\ell_{i+1}<\dots<\ell_{p+1}=\infty$ is preserved, i.e. $0<\ell_1<\ell_1+1<\dots<\ell_1+{k_1}<\ell_2<\dots< \ell_i<\ell_i+1<\dots<\ell_i+{k_i}<\ell_{i+1}<\dots<\ell_{p+1}=\infty$. The transformation thus preserves temporal paths. It remains to prove that, for each $(u,w)$ such that $\Fo(\G)_{uw}=\ell_i^j$ and $\Fo(\G')_{uw}=\ell_i+j$ for some $i\in [p]$ and $j\in [k_i]$, $\ell_i+j$ is in $D_{uw}$. As $\Fo(\G)_{uw}=\ell_i^j$ and $\G$ realizes $D$, we have $\ell_{uv}\leq \ell_i^j\leq r_{uw}$. 
As $\ell_{uv}$ is a lower bound of an entry of the matrix $D$, we also have that $\ell_{uv}\leq \ell_i$. We conclude with $\ell_{uv}\leq \ell_i\leq \ell_i+j\leq \ell_i^j\leq r_{uw}$.  
\end{proof}
\else
\fi


\iflong
The general idea of the algorithm is to process all possible times in 
$\{\ell_{uw}+j\mid u\neq w \text{ and } 0\leq j \leq k\}$ 
\else
This result relies on the fact that a realizable instance can always be realized by a temporal graph using time labels in the restricted set
$\T=\{\ell_{uw}+j\mid u\neq w \text{ and } 0\leq j \leq k\}$ (as proven in appendix). We then propose an algorithm that processes all times in $\T$
\fi
in increasing order and guesses which undetermined entries can be realized at the current time. More precisely, letting $0<\tau_1<\cdots<\tau_m$ denote these times, we maintain, for each $i\in [m]$, a table $R[\cdot,i]$ such that $R[S,i]$ for $S\subseteq Undet$ is equal to True if and only if there exists a temporal graph $\G$ with time labels in $\{\tau_1,\dots,\tau_i\}$ such that:
\begin{itemize}
\item for each $(u,w)\in S$, the earliest arrival time of any foremost temporal $uw$-path in $\G$ is in $D_{uw}$ and is at most $\tau_i$,
\item for each $(u,w)$ with $\ell_{uw}=r_{uw}=\tau\leq\tau_i$, the earliest arrival time of any foremost temporal $uw$-path in $\G$ is $\tau$,
\item for all other entries $(u,w)$, there is no temporal $uw$-path in $\G$.
\end{itemize} 
\iflong
As a consequence of Lemma~\ref{lem:ranged-time}, we
\else
We
\fi
have that $R[Undet,m]=True$ if and only if there exists a temporal graph that realizes $D$ for Ranged-$\Fo$-path.
 
When processing time $\tau_i$, we consider tri-partitions $S,T,U$ of the set $Undet$  where $S$ represents the set of undetermined entries that must be realized before time $\tau_i$, $T$ represents the set of undetermined entries that must be realized at time $\tau_i$ and $U$ the set of undetermined entries that must be realized after time $\tau_i$. 
\iflong
We first check that $R[S,i-1]$ is equal to True and that all entries $(u,w)$ in $T$ are such that $\tau_i\in D_{uw}$. If this is the case, we have to check that we can realize the entries in $T$ with time $\tau_i$, as well as entries $(u,w)$ such that $l_{uw}=r_{uw}=\tau_i$, by completing a temporal graph associated to $R[S,i-1]$. To do so, we proceed similarly to \Cref{alg:foremost}, by looking for a suitable vertex $v$ when dealing with the entry $(u,w)$, such that adding the temporal edge $(\{v,w\},\tau_i)$ creates a temporal $uw$-path with arrival time $\tau_i$ without creating any temporal path that arrives too early. This is verified in the algorithm using the following definition of edge compatibility:
\begin{itemize}
	\item $RangeEdgeCompat(D,\{v,w\},\tau,S,T):=RangeEdgeCompatDir(D,v,w,\tau,S,T)$ and $RangeEdgeCompatDir(D,w,v,\tau,S,T)$,
	\item where $RangeEdgeCompatDir(D,v,w,\tau,S,T):=\forall x \in [n], (r_{xv}<\tau \text{ or } (x,v)\in S) \Longrightarrow (r_{xw}\leq\tau \text{ or } (x,w)\in S\cup T)$.
\end{itemize}
If we can indeed complete such a temporal graph, then we set $R[S\cup T,i]$ to True. Note that, similarly to \Cref{alg:foremost}, we do not need explicit access to such a temporal graph as we probe it through $D,S,T$.
\else
Similarly to \Cref{alg:foremost}, an appropriate definition of edge compatibility (with respect to $D$, $S$ and $T$) allows to test if a temporal graph realizing $R[S,i-1]=True$ can be completed in order to set $R[S\cup T,i]$. Importantly, we do not need explicit access to such a temporal graph as we probe it through $D,S,T$.
\fi

\iflong
The correctness of the algorithm relies on the following lemma about edge compatibility:

\begin{lemma}\label{lem:ranged-compat}
	If $\G$ is a realization of $D$ for Ranged-$\Fo$-path, then for all temporal edges $(\{v,w\},\tau)$ in $\G$, we have that $RangeEdgeCompat(D,\{v,w\},\tau,S,T)$ is satisfied, where $S=\{(u,w)\in Undet\mid  \Fo(\G)_{uw} <\tau\}$ is the set of undetermined entries whose earliest arrival time in $\G$ is before $\tau$ and $T=\{(u,w)\in Undet\mid  \Fo(\G)_{uw}=\tau\}$ is the set of  undetermined entries whose earliest arrival time in $\G$ is at $\tau$.
\end{lemma}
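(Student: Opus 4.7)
The plan is to prove the two directional conditions $RangeEdgeCompatDir(D,v,w,\tau,S,T)$ and $RangeEdgeCompatDir(D,w,v,\tau,S,T)$ by a symmetric argument; it thus suffices to focus on the first, as $\{v,w\}$ is an undirected edge and the roles of $v$ and $w$ can be swapped.

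Fix a temporal edge $(\{v,w\},\tau)$ of $\G$ and a vertex $x\in [n]$, and assume the premise $r_{xv}<\tau$ or $(x,v)\in S$. The first step is to derive that $\Fo(\G)_{xv}<\tau$ in either case. If $(x,v)\in S$, this is immediate by the definition of $S$. Otherwise, $r_{xv}<\tau$, and since $\G$ realizes $D$, we have $\Fo(\G)_{xv}\in D_{xv}=[\ell_{xv},r_{xv}]$, so $\Fo(\G)_{xv}\leq r_{xv}<\tau$. (The cases $x=v$ or $x=w$ cause no trouble since $\Fo(\G)_{vv}=0<\tau$ and the entries on the diagonal are determined with value $0$.)

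The second step is the standard extension argument: take a foremost temporal $xv$-path in $\G$; since it arrives at $v$ strictly before $\tau$, it can be prolonged by the temporal edge $(\{v,w\},\tau)$ into a strict temporal $xw$-walk arriving at time $\tau$. After removing loops, this yields a strict temporal $xw$-path, and hence $\Fo(\G)_{xw}\leq \tau$.

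The third step handles the conclusion by case analysis on whether $(x,w)\in Undet$. If $(x,w)\in Undet$, then since $\Fo(\G)_{xw}\leq \tau$ we have either $\Fo(\G)_{xw}<\tau$, placing $(x,w)\in S$, or $\Fo(\G)_{xw}=\tau$, placing $(x,w)\in T$; in either subcase $(x,w)\in S\cup T$. Otherwise $(x,w)$ is determined, so $\ell_{xw}=r_{xw}$, and since $\G$ realizes $D$ we get $r_{xw}=\Fo(\G)_{xw}\leq \tau$. Thus the required disjunction $(r_{xw}\leq \tau$ or $(x,w)\in S\cup T)$ holds, completing the proof of $RangeEdgeCompatDir(D,v,w,\tau,S,T)$, and hence of the lemma by symmetry.

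I do not expect any real obstacle here: the only delicate point is keeping the case split (undetermined vs.\ determined entry) aligned with the case split on the premise (strict inequality vs.\ membership in $S$), which is what mirrors the algorithmic bookkeeping that $RangeEdgeCompat$ is designed to capture.
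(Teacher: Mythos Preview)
Your proof is correct and follows essentially the same approach as the paper: derive $\Fo(\G)_{xv}<\tau$ from the premise, extend a foremost $xv$-path by the temporal edge $(\{v,w\},\tau)$ to obtain $\Fo(\G)_{xw}\le\tau$, and then split on whether $(x,w)$ is determined or not. The only cosmetic difference is that the paper phrases the argument by contradiction while you give it directly.
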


\begin{proof}
Suppose for the sake of contradiction that some temporal edge $(\{v,w\},\tau)$ does not satisfy $RangeEdgeCompat(D,\{v,w\},\tau,S,T)$. That is, without loss of generality, there exists a vertex $x$ such that $(r_{xv}<\tau \text{ or } (x,v)\in S)$ and $(r_{xw}>\tau \text{ and } (x,w)\notin S\cup T)$. By definition of the set $S$, and since $\G$ is a realization of $D$, the condition $(r_{xv}<\tau \text{ or } (x,v)\in S)$ means that there exists a foremost temporal $xv$-path $P$ arriving at a time $\sigma<\tau$. 
Extending this path with $(\{v,w\},\tau)$ yields a temporal $xw$-walk arriving at time $\tau$. If $(x,w)$ is an undetermined entry of $D$, this contradicts the fact that $(x,w)\notin S\cup T$. Otherwise, we have $\ell_{xw}=r_{xw}$, and $\ell_{xw}>\tau$ contradicts the fact that $\G$ is a realization of $D$ for Ranged-$\Fo$-path.
\end{proof}

\begin{algorithm}[ht]
	\caption{Ranged-Foremost-path TGR}\label{alg:ranged-foremost}
	\begin{algorithmic}[1]
	\Require A number $n$, an $n\times n$ matrix $D$ of ranges
	\Ensure YES if there exists a temporal graph that realizes $D$ for Ranged-$\Fo$-path, NO otherwise
	
	\State Let $Undet=\{(u,w)\in [n]\times [n], \ell_{uw}\neq r_{uw}\}$ be the set of undetermined entries of $D$
	\State Let $k=|Undet|$ be the number of undetermined entries of $D$
	\State Let $0<\tau_1<\dots<\tau_m$ be the times in $\{\ell_{uw}+j\mid u\neq w \text{ and } 0\leq j \leq k \text{ and } \ell_{uw}<\infty\}$
	\State Set $R[S,0]:=False$ for all $S\subseteq Undet$
	\State $R[\emptyset,0]:=True$
	\ForAll{$i=1$ to $m$}
		\State $R[S,i]:=False$ for all $S\subseteq Undet$
		\State Let $C_i$ be the set of all $(u,w)$ such that $\ell_{uw}=r_{uw}=\tau_i$
		\ForAll{tri-partition $S,T,U$ of $Undet$ such that $R[S,i-1]$ and $\forall (u,w)\in T, \tau_i\in D_{uw}$} \label{lin:tri-part}
			\If{\textsc{Exists}$(D,C_i,\tau_i,S,T)$}
				\State $R[S\cup T, i]:=True$
			\EndIf
		\EndFor
	\EndFor
	\State Return $R[Undet,m]$

	\smallskip
	
	\Procedure{Exists}{$D,C,\tau,S,T$}
		\Comment{Given an $n\times n$ matrix $D$ of ranges, a time $\tau$, $C,S,T\subseteq [n]\times [n]$}
		
		\ForAll{$(u,w)\in C\cup T$}\label{lin:couple}
			\If{$\nexists v\in [n]$ such that ($r_{uv}< \tau$ or $(u,v)\in S$) and $RangeEdgeCompat(D,\{v,w\},\tau,S,T)$}\label{lin:find}
				\State Return False
			\EndIf
		\EndFor 
		\State Return True
		
	\EndProcedure
	
	\end{algorithmic}
	\end{algorithm}

\begin{proof}[Proof of Theorem~\ref{alg:ranged-foremost}]
Let us first prove that if there exists a temporal graph that realizes $D$ for Ranged-$\Fo$-path, then the algorithm returns YES. Let $\G$ be such a realization with all time labels being in the set $\{\ell_{uw}+j\mid u\neq w \text{ and } 0\leq j \leq k\}$. Such a temporal graph exists by Lemma~\ref{lem:ranged-time}. Let us denote $\tau_1<\dots<\tau_m$ the times in $\{\ell_{uw}+j\mid u\neq w \text{ and } 0\leq j \leq k \text{ and } \ell_{uw}<\infty\}$. Let $S_i=\{(u,w)\in Undet \mid  \Fo(G)_{uw}=\tau_i\}$ be the set of undetermined entries $(u,w)$ such that a foremost temporal $uw$-path in $\G$ arrives at time $\tau_i$. First, observe that $\cup_{k=1}^m S_k = Undet$. Let us prove by induction on $i$ that after the $i$th iteration of the main loop, we have $R[\cup_{k=1}^i S_k,i]=True$. We indeed have $R[\emptyset,0]=True$. For $i\geq 1$, when the tri-partition $\cup_{k=1}^{i-1} S_k,S_i,Undet\setminus \cup_{k=1}^i S_k$ is considered at Line~\ref{lin:tri-part}, we have:
\begin{itemize}
\item $R[\cup_{k=1}^{i-1} S_k,i-1]=True$ by induction hypothesis.
\item $\forall (u,w)\in S_i$, $\tau_i\in D_{uw}$ by definition of $S_i$ and the fact that $\G$ is a Ranged-$\Fo$-realization of $D$.
\item $\textsc{Exists}(D,C_i,\tau_i,\cup_{k=1}^{i-1} S_k,S_i)=True$ where $C_i$ is the set of all $(u,w)$ such that $\ell_{uw}=r_{uw}=\tau_i$. Indeed, let $(u,w)\in C_i \cup S_i$. We have that there exists a foremost temporal $uw$-path $P$ in $\G$ that arrives at time $\tau_i$. Consider the predecessor $v$ of $w$ in $P$. By Lemma~\ref{lem:ranged-compat}, we have that $RangeEdgeCompat(D,\{v,w\},\tau_i,\cup_{k=1}^{i-1} S_k,S_i)$ is satisfied. Moreover, either $(u,v)$ is an undetermined entry, and in this case we have $(u,v)\in \cup_{k=1}^{i-1} S_k$, or $l_{uv}=r_{uv}$ and in this case we have $r_{uv}<\tau$. Thus, the procedure $\textsc{Exists}$ returns True. 
\end{itemize}
As a consequence, $R[\cup_{k=1}^i S_k,i]$ is set to True. This implies that $R[Undet,i]$ is set to True by considering the last index $i=m$. This concludes the proof that the algorithm returns NO only when no realization exists. 

Let us now prove that if the algorithm returns YES, then there exists a temporal graph $\G$ that realizes $D$ for Ranged-$\Fo$-path. If the algorithm returns YES, we have a sequence of subsets of $Undet$ $S_1,\dots,S_m$\lvtodo{would it be more appropriate to use $T_1,\dots,T_m$? JC: Yes, indeed JC: Yes, indeed} such that: 
\begin{itemize}
\item[(i)] $\cup_{k=1}^{m} S_k=Undet$,\lvtodo{use another index than $k$ the parameter}
\item[(ii)] $R[\cup_{k=1}^{i} S_k,i]=True$ for each $i\in[m]$,
\item[(iii)] $\forall (u,w)\in S_i, \tau_i\in D_{uw}$,
\item[(iv)] $\textsc{Exists}(D,C_i,\tau_i,\cup_{k=1}^{i-1} S_k,S_i)=True$ for each $i\in [m]$.
\end{itemize}

Consider the distance matrix $D'$ defined by $D'_{uw}=\tau_i$ for all $i\in [m]$ and $(u,w)\in C_i\cup S_i$. This completely defines $D'$ as $\cup_{i\in [m]}S_i=Undet$ by (i) and $\cup_{i\in [m]}C_i$ is the set of determined entries in $D$. We will show that there exists a $\Fo$-realization of $D'$. As we have $D'_{uw}\in D_{uw}$ for all $u\not= v$ by (iii), this will imply that such a realization is also a realization of $D$ for Ranged-$\Fo$-path.
Now, we note that for $i\in [m]$ and $(u,w)\in C_i\cup S_i$, the test $RangeEdgeCompat(D,\{v,w\},\tau_i,\cup_{k=1}^{i-1} S_k,S_i)$ is equivalent to $EdgeCompat(D',\{v,w\},\tau_i)$ since ($r_{xv}< \tau_i$ or $(x,v)\in \cup_{k=1}^{i-1} S_k$) is equivalent to $D'_{xv} < \tau_i$ by definition of $D'$. 
Similarly, ($r_{uv}< \tau_i$ or $(u,v)\in \cup_{k=1}^{i-1} S_k$) is equivalent to $D'_{uv} < \tau_i$.
The tests performed at Line~\ref{lin:find} by the $m$ calls to $\textsc{Exists}(D,C_i,\tau_i,\cup_{k=1}^{i-1} S_k,S_i)$ are thus equivalent to all tests at Line~\ref{lin:test-inf} of \Cref{alg:foremost} running with input $D'$. By its correctness (see the proof of \Cref{th:foremost}), (iv) implies that $D'$ is $\Fo$-realizable.

Regarding the time complexity, for $i\in [m]$, \textsc{Exists}$(D,C_i,\tau_i,S,T)$ for a tri-partition $S,T,U$ of $Undet$ takes $\mathcal{O}((|C_i|+k)n^2)$ time assuming that we can check in constant time whether an undetermined entry $(u,v)$ is in set $S$ or $T$. This is achieved by using $n\times n$ boolean matrices for storing $S$ and $T$ in $O(n^2)$ time for each tri-partition (and $O(n^2)$ space). The overall time complexity is thus $\Oh(\sum_{i=1}^m 3^k(|C_i|+k)n^2) = \Oh(3^kn^4+3^k k n^2m) = \Oh(k^2 3^kn^4)$ as $m = \Oh(kn^2)$.
Regarding the space complexity, note that at step $i\in [m]$, only the tables $R[\cdot,i-1]$ and $R[\cdot,i]$ are used. We can thus maintain at each step only 2 such tables. 
The overall space complexity is thus in $\Oh(2^k+n^2)$. 
\end{proof}
\fi
\iflong

Note that the definition of $RangeEdgeCompat$ can be adapted in order to deal with the non-strict variant with ranges and the prescribed variant with ranges. 
Namely, we can respectively define:
\begin{itemize}
\item $RangeNSEdgeCompat(D,\{v,w\},\tau,S,T):=\forall x \in [n], (r_{xv}\leq\tau \text{ or } (x,v)\in S\cup T) \Longleftrightarrow (r_{xw}\leq\tau \text{ or } (x,w)\in S\cup T)$,
\item $RangePrEdgeCompatDir(D,v,w,\tau,S,T,G_p=([n],E_p)):=\forall x \in [n], (r_{xv}<\tau \text{ or } (x,v)\in S) \Longrightarrow (r_{xw}\leq\tau \text{ or } (x,w)\in S\cup T)$ and $\{v,w\}\in E_p$.
\end{itemize}
\lvnew{Note also that it is straightforward to adapt the proof of \Cref{lem:ranged-time} to obtain a similar statement for Ranged-$\NSFo$-path.}
\fi

\subparagraph{Remark.} The algorithm proposed here can easily be generalized to a more general setting where a collection of ranges is given for each entry of $D$  with time complexity $\Oh(k^2 3^kn^2N)$ where $N\ge n^2$ denotes the total number of ranges in $D$. In particular, this provides an FPT algorithm for \REARLY when each entry of $D$ encodes a set of integers and at most $k$ of them are non-singletons.
Note that the hardness result of \Cref{range size 2 hard} holds in that setting, even if each set has size at most 2.

\section{Fastest paths}\label{sec fast}
In this section we revisit temporal graph realization for fastest paths analyzed by Klobas et al.~\cite{KMMS24} and Erlebach et al.~\cite{EMW24}.
We answer an open question by both papers about the parameterized complexity with respect to the vertex cover number.
\iflong
Moreover, we consider also non-strict paths and the non-periodic setting with arbitrary many labels per edge.

Recall the problem definition:
\prob{\FAST}
{A duration matrix $D$ of size $n\times n$.}
{Is there a temporal graph $\mg$ (with unbounded lifetime and unbounded number of labels) that realizes the duration matrix $D$, that is, for each ordered pair of vertices $(u,v)$ in $\mg$ with $u\not= v$, the duration of a fastest past from~$u$ to~$v$ equals~$D_{uv}$?}

\else
\fi
So far, \FAST has only been considered for strict temporal paths and if we consider a periodic temporal graph or if we limit the number of labels per edge (per period)~\cite{KMMS24,EMW24}.
Our parameterized hardness result holds even for non-periodic temporal graph with arbitrary lifetime and arbitrary many labels per edge.
We then lift this hardness also to the periodic case with one label per edge per period to answer the open questions.

\iflong
\paragraph*{Strict fastest paths}
\fi
\iflong
\begin{theorem}
\else
\begin{theorem}[$\star$]
\fi\label{hardness fast strict}
\FAST 
is NP-hard and W[1]-hard when parameterized by the vertex cover number of the underlying graph plus the largest entry of~$D$.
This holds even on a family of instances for which all yes-instances are realizable with only one label per edge.
\end{theorem}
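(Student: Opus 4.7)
The plan is to reduce from \MCC parameterized by the number of color classes $k$, which is W[1]-hard. Given an instance $(G,V_1,\ldots,V_k)$ of \MCC, I would build a matrix $D$ whose vertex set is $S \cup U$, where $S=\{s_1,\ldots,s_k\}$ contains one \emph{selector} per color class (plus a constant number of auxiliary vertices) and $U=\{u_v : v\in V(G)\}$ contains one \emph{choice vertex} per vertex of $G$. The entries of $D$ will be designed so that (i) any realization is forced to have $S$ as a vertex cover of its underlying graph, since entries $D_{u_v,u_w}$ are set to some large sentinel value whenever $\{v,w\}\notin E(G)$ and the entries within $U$ corresponding to non-edges can only be realized via paths through $S$, and (ii) the largest finite entry is a small absolute constant (say $5$ or $7$), independent of $|V(G)|$. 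The vertex cover number is then $|S|+O(1)=O(k)$ and the largest entry is $O(1)$, so both parameters are bounded by $f(k)$.

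The selection gadget enforces that each $s_i$ must reach exactly one $u_v$ with $v\in V_i$ by a strict temporal edge at a specific early time step (the "selected" vertex for color $i$). This is achieved by asking $D_{s_i,u_v}=1$ for every $v\in V_i$ while simultaneously forbidding a common time label on all incident edges through entries such as $D_{s_i,s_j}\ge 3$: with a single label per edge, only one of the $|V_i|$ edges $\{s_i,u_v\}$ can be "first" in the right sense, and the other entries of value $1$ for color $i$ must be realized via auxiliary short paths inside the gadget, forcing them to coincide on a unique distinguished vertex. The verification gadget then asks the duration $D_{s_i,s_j}$ to take its smallest allowed value only when the path $s_i \to u_v \to u_w \to s_j$ is available, i.e.\ exactly when $\{v,w\}\in E(G)$, and larger otherwise; by the selection gadget this is possible for all pairs $i\ne j$ simultaneously if and only if the $k$ selected vertices form a multicolored clique.

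For the converse, given a multicolored clique $\{v_1,\ldots,v_k\}$ in $G$, I would realize $D$ by assigning exactly one time label per edge: the selected edges $\{s_i,u_{v_i}\}$ get label $1$, the clique edges $\{u_{v_i},u_{v_j}\}$ get label $2$, the edges $\{u_{v_i},s_j\}$ get label $3$, unused edges within the gadgets get the appropriate "late" label, and any remaining pair that needs to be unreachable or reachable only very slowly is handled by a single large label. Checking that each entry of $D$ is met by a fastest strict temporal path is then a routine case analysis on the small number of path shapes permitted by the gadgets; this also yields the promised property that every yes-instance admits a one-label-per-edge realization.

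The main obstacle I expect is ruling out unintended \emph{shortcut paths}: in a strict temporal graph, any permutation of labels could in principle produce a faster path between two selectors that bypasses the intended selection/verification structure, which would allow "cheating" on the multicolored clique condition. The standard cure, which I would follow, is to fix the admissible time steps to a tiny window $[1,c]$ by bounding the largest entry, to pad all undesired entries (both in $U\times U$ and across gadgets) to strictly larger values so that creating any path shorter than intended would violate some other entry, and to argue via a careful case distinction on the length-$2$ and length-$3$ paths available in one snapshot that no shortcut exists. A secondary, but lighter, obstacle is lifting the result to \PFAST with one label per edge per period, which should follow by choosing a period $\Delta$ slightly larger than the largest entry so that the periodic copies cannot interfere with fastest paths inside one period.
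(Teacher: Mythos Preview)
Your proposal has a structural inconsistency and a gap in the forward (hardness) direction.

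\textbf{Inconsistency about the vertex cover.} You claim that $S$ (of size $k+O(1)$) is a vertex cover of the underlying graph, yet in the backward direction you assign label $2$ to ``the clique edges $\{u_{v_i},u_{v_j}\}$''. If such edges are in the underlying graph, $S$ does not cover them; if they are not, the verification path $s_i\to u_v\to u_w\to s_j$ cannot exist. The underlying graph is fixed by the value-$1$ entries of $D$ before any clique is known, so you cannot have edges inside $U$ only between the eventual clique vertices. The paper sidesteps this tension completely: the set $V$ of choice vertices is an \emph{independent set} in the underlying graph, and the clique is not checked via direct edges. Instead a single entry $D_{s,t}=2k+2$ forces a fastest temporal path through an alternating chain $s,x_1,v_1,x_2,\ldots,v_k,x_{k+1},t$; since every subpath of this path has duration at most $2k+2$, the entries $D_{v_i,v_j}$ between the chosen vertices must be $\le 2k+2$, while non-edges of $G$ are given value $2k+3$. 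Thus the clique condition is enforced entirely through matrix entries, not through edges inside $U$, and the vertex cover $X\cup L\cup\{s',s'',t',t''\}$ has size $O(k)$.

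\textbf{Selection gadget relies on an assumption you do not have.} Your argument ``with a single label per edge, only one of the $|V_i|$ edges $\{s_i,u_v\}$ can be `first' '' presupposes a single label per edge. But the hardness is for general \FAST with arbitrarily many labels; the single-label property is only promised for realizations of yes-instances, not imposed as a constraint on realizations. Nothing in your gadget prevents a realization from giving every edge $\{s_i,u_v\}$ the same label set, so no unique selection is forced in the forward direction.

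\textbf{Largest entry.} You aim for a constant largest entry, but give no mechanism for this and it is in tension with a path-based verification. The paper's largest entry is $4k+5$, which is what the chain-of-length-$2k+2$ idea naturally produces and which suffices for the combined parameter.
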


\iflong
\begin{proof}
\else
\begin{proof}[Proof (sketch).]
\fi
We reduce from \MCC~\cite{C+15}.

\prob{\MCC}{An undirected graph~$G=(V,E)$, an integer~$k$, and a~$k$-partition~$(V_1 \cup \dots \cup V_k)$ of~$V$, such that~$V_i$ is an independent set in~$G$ for each~$i\in [1,k]$.}{Is there a clique of size~$k$ in~$G$?}

\iflong
\lvnew{Note that a clique in such a graph is said to be multicolored since its vertices must belong to pairwsie-distinct sets among $V_1 \cup \dots \cup V_k$. The reason is that these sets are pairwise-disjoint independent sets, and they can be viewed as color classes.}
\fi
Let~$I:=(G=(V_1\cup\dots\cup V_k, E), k)$ be an instance of~\MCC where for each~$1\leq a < b \leq k$, $G[V_a\cup V_b]$ is a disjoint union of bicliques.
For each~$a\in [1,k]$, we call~$V_a$ a~\emph{color class}.
Even under these restrictions, \MCC is NP-hard and W[1]-hard when parameterized by~$k$~\cite{MRW20}.
Let~$V:= V_1 \cup \dots \cup V_k$.

\iflong
\subparagraph{Construction.}
\fi
To obtain an instance~$D$ of \FAST as follows, we first describe the underlying graph, that is, the graph~$G'$ that contains an edge~$\{u,v\}$ if and only if~$D_{u,v} = 1$.
The graph~$G'$ is defined over the vertex set~$V' := V \cup \{s,s',s'',t,t',t''\} \cup X \cup L$, where~$L$ is a vertex set of size~$2k+2$ and~$X:= \{x_i\mid 1 \leq i \leq k+1\}$ (see~\Cref{fig fastest graph}). 
We add edges between these vertices, such that~$V' \setminus V$ is a vertex cover of~$G'$.
That is, there are no edges between the vertices of~$V$ in~$G'$.
We make~$L$ into a clique and adjacent to all vertices of~$V'$ besides~$s$ and~$t$.
Similarly, we make the vertices~$s'$ and~$s''$ adjacent to all vertices of~$V'$ besides~$t$ and make the vertices~$t'$ and~$t''$ adjacent to all vertices of~$V'$ besides~$s$.
Additionally, we make~$x_1$ adjacent to the vertices of~$V_1$, $x_{k+1}$ adjacent to the vertices of~$V_k$, and for each~$i\in [2,k]$, we make~$x_{i}$ adjacent to the vertices of~$V_{i-1}\cup V_i$.
There are no edges between the vertices of~$X$.
Finally, we make~$s$ adjacent to~$s',s''$, and~$x_1$, and we make~$t$ adjacent to~$t',t''$, and~$x_{k+1}$.

This completes the underlying graph~$G'$ and thus all entries of the matrix~$D$ of value~$1$.
Let~$E'$ denote the edges of~$G'$.
Next, we define the remaining entries.
Note that each vertex of~$\{s',s'',t',t''\} \cup L$ has an edge towards each other vertex of~$V'\setminus \{s,t\}$.
Hence, for these vertices it remains to define the entries of the table from and to the vertices~$s$ and~$t$.
We set~$D_{s,q} := D_{q,s} := 2k+3$ for each vertex~$q\in \{t',t''\} \cup L$.
Similarly, we set~$D_{t,q} := D_{q,t} := 2k+3$ for each vertex~$q\in \{s',s''\} \cup L$.
For each non-edge~$\{u,v\}\notin E$, we set~$D_{u,v} := D_{v,u} := 2k+3$.
Finally, we set~$D_{s,t}:=2k+2$ and~$D_{t,s}:=4k+5$.
All other undefined entries are set to~$2$.
This completes the construction.
Note that $X \cup L \cup \{s',s'',t',t''\}$ is a vertex cover of size $3k+7$ and that the largest entry in $D$ is $4k+5$.

\begin{figure}
\centering
\begin{tikzpicture}[scale=.65]

\tikzstyle{k}=[circle,fill=white,draw=black,minimum size=8pt,inner sep=2pt]

\node[k] (s) at (0,2) {$s$};

\begin{scope}[xshift=1.5cm]
\node[k] (x1) at (0,0) {$x_1$};
\end{scope}

\begin{scope}[xshift=3cm]
\def\y{1}

\draw[rectangle] (.3,-2) -- (-.3,-2) -- (-.3,2) -- (.3,2) -- (.3,-2);
\node (v1) at (0,-2.4) {$V_{\y}$};

\foreach \x in {1,2,3,4,5} 
\node[k] (\y\x) at ($(0,-2.25) +  (0,\x*.75)$) {};
\end{scope}

\begin{scope}[xshift=4.5cm]
\node[k] (x2) at (0,0) {$x_2$};
\end{scope}

\begin{scope}[xshift=6cm]
\def\y{2}

\draw[rectangle] (.3,-2) -- (-.3,-2) -- (-.3,2) -- (.3,2) -- (.3,-2);
\node (v1) at (0,-2.4) {$V_{\y}$};

\foreach \x in {1,2,3,4,5} 
\node[k] (\y\x) at ($(0,-2.25) +  (0,\x*.75)$) {};
\end{scope}

\begin{scope}[xshift=7.5cm]
\node (x3) at (0,0) {};
\end{scope}

\begin{scope}[xshift=8.25cm]
\node (xxx) at (0,0) {\huge $\cdots$};
\end{scope}

\begin{scope}[xshift=9cm]
\node (xk) at (0,0) {};
\end{scope}

\begin{scope}[xshift=10.5cm]

\def\y{k}

\draw[rectangle] (.3,-2) -- (-.3,-2) -- (-.3,2) -- (.3,2) -- (.3,-2);
\node (v1) at (0,-2.4) {$V_{\y}$};

\foreach \x in {1,2,3,4,5} 
\node[k] (\y\x) at ($(0,-2.25) +  (0,\x*.75)$) {};

\begin{scope}[xshift=1.5cm]
\node[k] (xk1) at (0,0) {$x_{k+1}$};
\end{scope}

\begin{scope}[xshift=3cm]
\node[k] (t) at (0,2) {$t$};
\end{scope}
\end{scope}

\node[k] (s1) at ($(x1) + (0,3)$) {$s'$};
\node[k] (s2) at ($(x1) + (0,4)$) {$s''$};
\node[k] (t1) at ($(xk1) + (0,3)$) {$t'$};
\node[k] (t2) at ($(xk1) + (0,4)$) {$t''$};

\node[k,inner sep=7pt] (L) at (6.75,5) {$L$};

\draw[-] (s) to (s1);
\draw[-] (s) to (s2);
\draw[-] (t) to (t1);
\draw[-] (t) to (t2);

\draw[-] (t1) to (t2);
\draw[-] (s1) to (s2);
\draw[-] (s1) to (t1);
\draw[-] (s1) to (t2);
\draw[-] (s2) to (t1);
\draw[-] (s2) to (t2);

\draw[-] (s1) to (L);
\draw[-] (s2) to (L);
\draw[-] (t1) to (L);
\draw[-] (t2) to (L);

\draw[-] (s) to (x1);
\draw[-] (xk1) to (t);

\foreach \x in {1,2,k}
\foreach \y in {1,2,3,4,5}
\draw[-] (x\x) to (\x\y);

\foreach \x/\r in {1/2,2/3,k/k1}
\foreach \y in {1,2,3,4,5}
\draw[-] (x\r) to (\x\y);

\end{tikzpicture}
\caption{An illustration of the underlying graph from the reduction behind~\Cref{hardness fast strict}.
The edges of the biclique~$(L\cup \{s',s'',t',t''\}, V \cup X)$ are not depicted.
Note that~$L$ is a clique of size~$2k+2$.}
\label{fig fastest graph}
\end{figure}
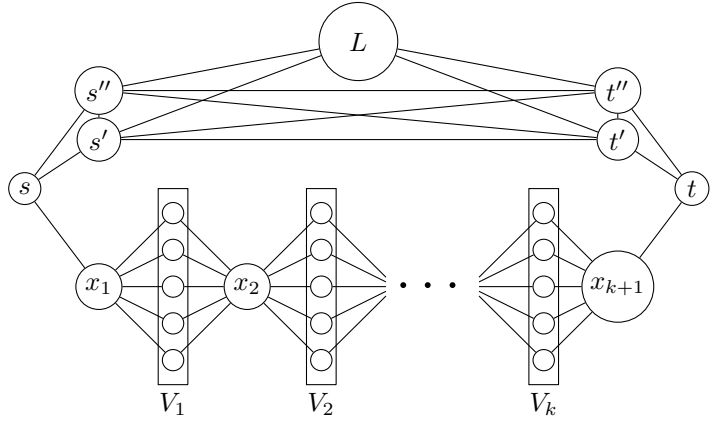

\iflong
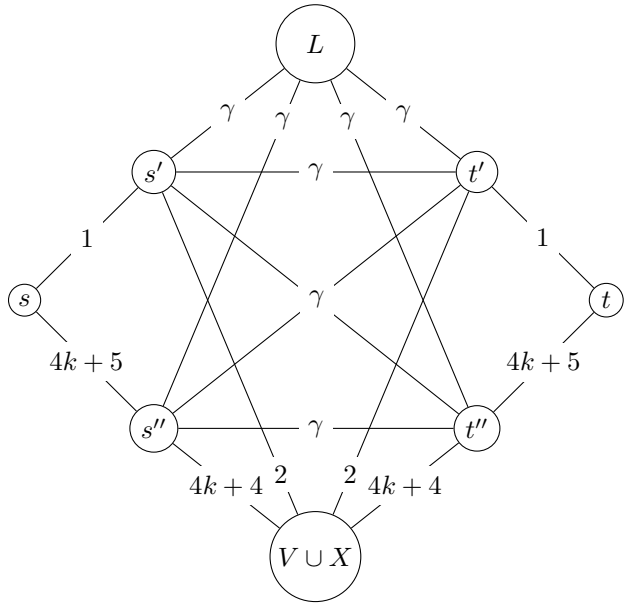
\begin{figure}
\centering
\begin{tikzpicture}[scale=.85]

\tikzstyle{k}=[circle,fill=white,draw=black,minimum size=10pt,inner sep=2pt]

\node[k] (s) at (0,0) {$s$};

\begin{scope}[xshift=9cm]
\node[k] (t) at (0,0) {$t$};
\end{scope}

\node[k] (s1) at (2,2) {$s'$};
\node[k] (s2) at (2,-2) {$s''$};
\node[k] (t1) at (7,2) {$t'$};
\node[k] (t2) at (7,-2) {$t''$};

\node[k,inner sep=7pt] (L) at (4.5,4) {$L$};
\node[k] (VX) at (4.5,-4) {$V \cup X$};

\draw[-] (s) -- (s1) node [midway, fill=white] {$1$};
\draw[-] (s) -- (s2) node [midway, fill=white] {$4k+5$};
\draw[-] (t) -- (t1) node [midway, fill=white] {$1$};
\draw[-] (t) -- (t2) node [midway, fill=white] {$4k+5$};

\draw[-] (s1) -- (t1) node [midway, fill=white] {$\gamma$};
\draw[-] (s1) -- (t2) node [midway, fill=white] {$\gamma$};
\draw[-] (s2) -- (t1) node [midway, fill=white] {$\gamma$};
\draw[-] (s2) -- (t2) node [midway, fill=white] {$\gamma$};

\draw[-] (s1) -- (L) node [midway, fill=white] {$\gamma$};
\draw[-] (s2) -- (L) node [very near end, fill=white] {$\gamma$};
\draw[-] (t1) -- (L) node [midway, fill=white] {$\gamma$};
\draw[-] (t2) -- (L) node [very near end, fill=white] {$\gamma$};

\draw[-] (s1) -- (VX) node [very near end, fill=white] {$2$};
\draw[-] (s2) -- (VX) node [midway, fill=white] {$4k+4$};
\draw[-] (t1) -- (VX) node [very near end, fill=white] {$2$};
\draw[-] (t2) -- (VX) node [midway, fill=white] {$4k+4$};

\end{tikzpicture}
\caption{An illustration of the labeling of the first described time block for a realization of the instance of~\FAST from the reduction behind~\Cref{hardness fast strict}.
For a better readability, $\gamma$ substitutes~$2k+3$ and the addition by the lower label~$\alpha$ of the time block is omitted on all labels.
Only the edges that receive labels are depicted.
}
\label{fig first frame}
\end{figure}
\fi

\subparagraph{Intuition.}
The idea behind the reduction is that realizing all entries besides~$D_{s,t}$ is possible, regardless of whether~$G$ contains a (multicolored) clique of size~$k$.
We will show that this is ensured by the fact that~$G[V_i\cup V_j]$ with~$1\leq i < j \leq k$ is a disjoint union of bicliques.
The difficulty to decide whether the matrix is realizable thus comes from the difficulty of deciding whether the entry~$D_{s,t}$ can additionally be realized, which can only be done by using vertices of~$X\cup V$ as intermediate vertices of the path.
Let~$S$ denote the vertices of~$V$ on any path~$P$ realizing the entry~$D_{s,t}=2k+2$.
Based on the structure of the underlying graph (see~\Cref{fig fastest graph}), $S$ contains for each~$i\in [1,k]$ at least one vertex of~$V_i$.
By definition of the entries between vertices of~$V$, these vertices need to form a clique in the original graph, as only adjacent vertices~$u$ and~$v$ in~$G$ fulfill~$D_{u,v} \leq 2k+2$, which is the duration of~$P$.

\subparagraph{Correctness.}
We now show that~$D$ is realizable if and only if~$G$ admits a clique of size~$k$.
More precisely, we show that if~$G$ admits a clique of size~$k$, then there is a realization for~$D$ with exactly one label per edge.

$(\Rightarrow)$
Let~$\lambda\colon E' \to 2^{\mathbb{N}}$ be an edge labeling, such that~$\mg := (G',\lambda)$ realizes~$D$.
We show that~$G$ has a clique of size~$k$.
Consider the entry~$D_{s,t} = 2k+2$.
Since~$\mg$ is a realization of~$D$, this implies that the fastest temporal path from~$s$ to~$t$ has duration exactly~$2k+2$.
Let~$P$ be an arbitrary fastest temporal path from~$s$ to~$t$ in~$\mg$.
Since the duration of~$P$ is~$2k+2$, the duration of each (not necessarily proper) subpath of~$P$ is at most~$2k+2$.
Hence, for any two distinct vertices~$a,b$ of~$P$, where~$a$ precedes~$b$ in~$P$, the entry~$D_{a,b}$ is at most~$2k+2$, as~$D$ is realized by~$\mg$.
This immediately implies that~$P$ does not visit any vertex of~$\{s',s'',t',t''\} \cup L$, since~$D_{s,t'} = D_{s,t''} = D_{s',t} = D_{s'',t} = 2k+3$ and~$D_{s,\ell} = D_{\ell,t} = 2k+3$ for each vertex~$\ell\in L$.
That is, $P$ only uses vertices of~$V\cup \{s,t\} \cup X$.
By definition, each path from~$s$ to~$t$ in~$G'[V\cup \{s,t\} \cup X]$ traverses all vertices of~$X$ and one vertex of each of the color classes of~$V$, that is, for each~$i\in [1,k]$, the path contains one vertex of~$V_i$.
This in particular holds for~$P$.
Let~$S$ be the vertices of~$V$ that are visited by~$P$.
By the above, $S$ has size at least~$k$.
Moreover, for each two distinct vertices~$a$ and~$b$ of~$S$, $D_{a,b} = D_{b,a} \leq 2k+2$, since the subpath between~$a$ and~$b$ of~$P$ has duration at most~$2k+2$ and~$\mg$ realizes~$D$.
This implies that~$\{a,b\}$ is an edge of~$G$, as otherwise, $D_{a,b} = D_{b,a}$ is defined as~$2k+3$.
Consequently, $S$ is a clique of size~$k$ in~$G$.

$(\Leftarrow)$
Let~$S$ be a clique of size~$k$ in~$G$ and for each~$i\in [1,k]$, let~$v_i$ denote the unique vertex of~$S\cap V_i$.
We define a labeling~$\lambda\colon E' \to \mathbb{N}$, such that~$\mg:= (G',\lambda)$ realizes~$D$.
To this end, we describe several~\emph{time blocks}, that is, intervals~$[\alpha,\beta]$ with~$\beta \geq \alpha + 4k+6$, such that only the described edges receive a label from this interval, and all other edges receive no label from~$[\alpha-(4k+6), \beta + (4k+6)]$.
The reason behind this is that labeling edges in different time blocks do not create paths of duration less than~$4k+6$, which is larger than the largest entry of~$D$.
Hence, we can show that our labeling realizes~$D$ by showing that for each two vertices~$a$ and~$b$ of~$G'$ (i)~there is a time block in which there is a temporal path from~$a$ to~$b$ of duration exactly~$D_{a,b}$ and (ii)~for each time block, there is no temporal path from~$a$ to~$b$ of duration less than~$D_{a,b}$.
Note that the order of time blocks does not matter.
Hence, when describing the labeling, we simply describe a collection of time blocks which in total fulfill the above properties, while not explicitly defining the concrete start and end time of the time blocks.
In the following, we mainly focus on realizing all entries of value at least~$2$ in~$D$. 
Afterwards, we describe how to realize the entries of value~$1$. 
\begin{itemize}
\item \textbf{Realizing all entries involving vertices of~$V' \setminus (V \cup X)$ besides~$D_{s,t}$.}
\iflong
In the first time block~$[\alpha,\beta]$, we realize all entries of~$D$ of value at least~$2$ that involve at least one vertex of~$\{s,s',s'',t,t',t''\} \cup L$, besides the entry~$D_{s,t}$ (see~\Cref{fig first frame}).
We set~$\lambda(\{s,s'\}) := \lambda(\{t,t'\}) := \alpha + 1$, $\lambda(\{s'',s\}) := \lambda(\{t'',t\}) := \alpha + 4k + 5$.
Furthermore, for each vertex~$q\in V\cup X$, we set~$\lambda(\{s',q\}) := \lambda(\{t',q\}) := \alpha + 2$ and~$\lambda(\{q,s''\}) := \lambda(\{q,t''\}) := \alpha + 4k+4$, and we set~$\lambda(\{s',t'\}) := \lambda(\{s',t''\}) := \lambda(\{s'',t'\}) := \lambda(\{s'',t''\}) := \alpha + 2k+3$.
For each vertex~$\ell\in L$ and each vertex~$z\in \{s',t',s'',t''\}$, we also set~$\lambda(\{\ell,z\}) := \alpha + 2k+3$.

This realizes the entries between~$s$ and vertices of~$L$ via the paths~$(s,s',\ell)$ and~$(\ell,s'',s)$ for each~$\ell \in L$ with labels~$(\alpha +1,\alpha +2k+3)$ and~$(\alpha +2k+3,\alpha +4k+5)$.
These paths have duration~$2k+3$ as required by~$D$.
Moreover, no faster paths between~$s$ and a vertex~$\ell\in L$ exists in this time block, since each edge incident with~$\ell$ has label~$\alpha + 2k+3$ and the only labels incident with~$s$ are~$\alpha + 1$ and~$\alpha + 4k+5$.
Analogously, all entries between~$t$ and vertices of~$L$ are realized.
Since~$s$ and~$t$ are the only vertices that are not adjacent to the vertices of~$L$, this thus realizes all entries of value at least~$2$ involving vertices of~$L$.
By the path~$(t,t',s'',s)$ with labels~$(\alpha+1,\alpha+2k+3,\alpha+4k+5)$ we further realize the entry~$D_{t,s}$ and no faster path from~$t$ to~$s$ is realized, since both~$s$ and~$t$ only have the incident labels~$\alpha + 1$ and~$\alpha + 4k+5$ and are not adjacent.
Similar to the paths between~$s$ and vertices of~$L$, there are the paths~$(s,s',\widehat{t})$ and~$(\widehat{t},s'',s)$ for each~$\widehat{t} \in \{t',t''\}$ with labels~$(\alpha +1,\alpha +2k+3)$ and~$(\alpha +2k+3,\alpha +4k+5)$, respectively.
These paths realize the entries between~$s$ and vertices of~$\{t',t''\}$.
Moreover, no faster paths between these vertices are realized by this time block.
This is due to the fact that (i)~each path going over a vertex of~$V\cup X$ in this time block takes time at least~$4k+3$ (since~$\alpha+2$ and~$\alpha+4k+4$ are the only labels incident with vertices of~$V\cup X$), (ii)~each path going over~$t$ has duration at least~$4k+5$, and (iii)~all other paths use an edge of label~$\alpha + 2k+3$ and a label from~$\{\alpha + 1, \alpha + 4k+5\}$. 
Analogously, the entries between~$t$ and the vertices of~$\{s',s''\}$ are realized.
This thus shows that all entries of value at least~$2$ involving any vertex of~$\{s',s'',t',t''\}$ are realized.
For each vertex~$q\in V\cup X$, there are also the paths~$(s,s',q)$, $(t,t',q)$, $(q,s'',s)$, and~$(q,t'',t)$ of duration~$2$.
These paths realize all entries of value larger than~$1$ between~$\{s,t\}$ and vertices of~$V\cup X$.
Thus, all entries of value at least~$2$ involving~$s$ or~$t$ are also realized by this time block, excluding~$D_{s,t}$.
It remains to argue that the time block does not create paths that are too fast between some vertices.
Since we already argued that all entries involving at least one vertex of~$V'\setminus (V\cup X)$ are realized in this time block, we only have to consider entries between vertices of~$V\cup X$ that could theoretically be violated.
Between these vertices, the value of the largest entry is at most~$2k+3$.
No path faster than this can be realized in this time block between vertices of~$V\cup X$, since each such vertex has only the incident labels~$\alpha+2$ and~$\alpha + 4k+4$.
That is, each path in this time block that starts and ends in a vertex of~$V\cup X$ traverses both an edge with label~$\alpha+2$ and an edge with label~$\alpha + 4k+4$, implying that the duration of that path is at least~$4k+3 > 2k+3$.
Thus, this time block realizes all entries of value at least two involving vertices of~$V' \setminus (V \cup X)$ besides~$D_{s,t}$ and does not create any paths that are too fast.
\else
We show that we can realize all these entries by only labeling edges that have at least one endpoint on~$\{s',s'',t',t''\}$.
This proof is deferred to the appendix.
\fi
\item \textbf{Realizing~$D_{s,t}$.}
Next, we define a time block~$[\alpha,\beta]$ that realizes the entry~$D_{s,t}$.
Recall that~$S$ is a clique in~$G$ and that for each~$i\in [1,k]$, $v_i$ denotes the vertex of~$S\cap V_i$.
Consider the path~$P:=(s,x_1,v_1,\dots,x_k,v_k,x_{k+1},t)$ and label the edges of this path with consecutive time labels starting with~$\alpha$.
Hence, this path has duration equals to its length, namely~$D_{s,t}=2k+2$.
\iflong
Since the entries of~$D$ between vertices of~$\{s,t\} \cup X$ and~$V$ are of value at most~$2$ and the entries between vertices of~$\{s,t\} \cup X$ and~$X$ are of value at most~$2$, this creates no path of duration less than~$D_{a,b}$ for vertices~$a$ and~$b$ of~$P$ that are not both from~$V$.
Moreover, the only vertices of~$P$ that are from~$V$ are all from the clique~$S$.
Hence, the entry~$D_{a,b}$ also equals~$2$ for all such vertices~$a$ and~$b$ of~$P$ that are from~$V$.
Thus, this time block does not create paths that are too fast.
\else
The argument that this creates no paths that are too fast is deferred to the appendix. 
\lvnew{It mainly comes from the fact that for any two vertices~$a$ and~$b$ of~$P$ that are both from~$V$, $\{a,b\}\in E$ since $S$ is a clique, and we have $D_{a,b}=2$.}
\fi
\end{itemize}
So far, we realized all entries of~$D$ of value at least~$2$ that involve at least one vertex of~$\{s,s',s'',t,t',t''\} \cup L$.
In the following, we describe further time blocks to realize the entries of~$D$ of value at least~$2$ involving only vertices of~$V\cup X$.
To this end, we will only use edges between~$X\cup V$ and~$L$.
Note that none of these edges has received a label in the previous time blocks, that is, the only edges incident with vertices of~$L$ that received labels so far were the edges between~$L$ and~$\{s',s'',t',t''\}$.
\iflong
Let the vertices of~$L$ be called~$\{\ell^*,\ell^{**}\}\cup \{\ell_i,\ell_i'\mid 1\leq i \leq k\}$.
\else
Let~$L=\{\ell^*,\ell^{**}\}\cup \{\ell_i,\ell_i'\mid 1\leq i \leq k\}$.
\fi
\begin{itemize}
\item \textbf{Realizing entries between vertices of~$V$ of value~$2k+3$ and entries between vertices of~$V$ and~$X$.}
We define a time block~$[\alpha,\beta]$ as follows: 
For each vertex~$v\in V$, we set~$\lambda(\{v,\ell^*\}) := \alpha+1$ and~$\lambda(\{v,\ell^{**}\}) := \alpha+2k+3$.
For each vertex~$x\in X$, we set~$\lambda(\{x,\ell^*\}) := \alpha$ and~$\lambda(\{x,\ell^{**}\}) := \alpha+2k+4$.
Finally, we set~$\lambda(\{\ell^*,\ell^{**}\}) = \alpha+2$.
Note that each vertex of~$V$ has only two incident labels in this time block, namely, $\alpha + 1$ and~$\alpha + 2k+3$.
Hence, no temporal path between vertices of~$V$ has duration less than~$2k+3$.
Moreover, since entries involving a vertex from~$V$ and a vertex from~$X$ are of value at most~$2$, we guarantee that we do not create paths that are too fast in this time block.
We now show that this time block realizes (i)~all entries of value at least~$2$ between vertices of~$V$ and~$X$ and (ii)~all entries between vertices of~$V$ of value~$2k+3$.
For the first type, let~$v\in V$ and~$x\in X$ with~$\{v,x\}\notin E'$.
That is, $D_{x,v}=2$.
Then, there is a temporal path~$(x,\ell^*,v)$ in this time block of duration~2.
Similarly, the temporal path~$(v,\ell^{**},x)$ also has duration~2.
Now consider the second type.
For each two distinct vertices~$u$ and~$v$ of~$V$ with~$D_{u,v}\neq 2$, there is the temporal path~$(u,\ell^*,\ell^{**},v)$ with labels~$(\alpha+1,\alpha+2,\alpha+2k+3)$.
This path has duration~$2k+3 = D_{u,v}$.
This time block realizes the stated entries of~$D$.

\item \textbf{Realizing entries between vertices of~$X$.}
For each~$i\in [1,k]$, we define a time block~$[\alpha_i,\beta_i]$ in which we set~$\lambda(\{x_i,\ell_i\}) := \alpha_i$ and~$\lambda(\{\ell_i,x\}) := \alpha_i+1$ for each~$x\in X \setminus \{x_i\}$.
For each~$i\in [1,k]$, this realizes the entries~$D_{x_i,x}$ with~$x\in X\setminus \{x_i\}$.
Similarly, we add a time block~$[\alpha_{k+1},\beta_{k+1}]$ in which we set~$\lambda(\{x_{k+1},\ell'_1\}) := \alpha_{k+1}$ and~$\lambda(\{\ell'_1,x\}) := \alpha_{k+1}+1$ for each~$x\in X \setminus \{x_{k+1}\}$.
These time blocks realize all entries of~$D$ between vertices of~$X$.

\item \textbf{Realizing entries between vertices of~$V$ of value~$2$.}
Recall that we have to ensure that there is a path of duration 2 between the endpoints of each edge~$e\in E$ in our temporal graph.
To define the necessary time blocks, we will highly rely on the fact that for each~$1\leq a < b \leq k$, $G[V_a\cup V_b]$ is a vertex disjoint union of bicliques.
This property will allow us to realize all edges between~$V_a$ and~$V_b$ via just two vertices of~$L$.
We can do this for several combinations of color classes via the same two vertices of~$L$, as long as no color class occurs in more than one pair.
We formalize this as follows.  
Let~$M_1, \dots, M_k$ be a partition of~$\{(a,b)\mid 1\leq a < b \leq k\}$, such that for each~$i\in [1,k]$ and each~$a\in [1,k]$, there is at most one ordered pair in~$M_i$ that contains~$a$.
That is, $M_i$ is a matching in the directed graph with vertex set~$[1,k]$ and edge set~$\{(a,b)\mid 1\leq a < b \leq k\}$.
Note that such a partition exists due to the fact that a clique on~$k$ vertices has a proper edge coloring with~$k$ colors.
Let~$i\in [1,k]$.
We let~$E_{M_i}$ denote all edges of~$G$ between each pair of color classes in~$M_i$, that is, $E_{M_i} := \bigcup_{(a,b)\in M_i} E(V_a,V_b)$.
Since~$M_i$ is a matching and~$G[V_a\cup V_b]$ is a disjoint union of bicliques for each~$1\leq a < b \leq k$, $G_i := (V,E_{M_i})$ is also a disjoint union of bicliques.
That is, each connected component in~$G_i$ is a biclique.
We use the vertices~$\ell_i$ and~$\ell_i'$ of~$L$ to realize the entries of~$D$ corresponding to the edges of~$E_{M_i}$.
For each connected component of~$G_i$ with bipartition~$(A,B)$, we add a new time block~$[\alpha,\beta]$ and set~$\lambda(\{v_a,\ell_i\}) := \alpha$ for each~$v_a\in A$ and~$\lambda(\{v_b,\ell_i\}) := \alpha+1$ for each~$v_b\in B$.
This realizes paths of duration~$2$ from each vertex of~$A$ to each vertex of~$B$ and no other temporal paths of length more than 1.
Since~$(A,B)$ is a biclique, for all these vertex pairs, the entry in the matrix is also~$2$.
In the same way, we also add a new time block~$[\alpha',\beta']$ and set~$\lambda(\{v_b,\ell_i'\}) := \alpha$ for each~$v_b\in B$ and~$\lambda(\{v_a,\ell_i'\}) := \alpha+1$ for each~$v_a\in A$.
This thus realizes also the entries of duration~$2$ from each vertex of~$B$ to each vertex of~$A$.
Since~$E = \bigcup_{i\in [1,k]} E_{M_i}$, this implies that we realized for each edge~$\{u,v\}\in E$ the entries~$D_{u,v}$ and~$D_{v,u}$ of value~$2$ by the above time blocks.
\end{itemize}

Hence, all entries of value at least 2 in~$D$ are realized by~$\lambda$.
Let~$E''$ denote the edges of~$E'$ that have not received a label yet.
We add one final time block from which all edges of~$E''$ receive the same label.
This surely does not create new temporal paths of length more than 1 for which the duration is at most~$4k+5$.
This completes the definition of~$\lambda$.
Thus, also all entries of value 1 are realized.
By definition of the time blocks, we showed that~$(G',\lambda)$ realizes the input matrix~$D$ even with just a single label per edge.
\end{proof}

Based on this reduction, we can now directly transfer the hardness result to~\PFAST even when allowing at most one label per edge.
\iflong

\prob{\PFAST}
{A duration matrix $D$ of size $n\times n$ and a period~$\Delta$.}
{Is there a~$\Delta$-periodic temporal graph that realizes the fastest path matrix $D$?}

\else
\fi
That is, we simply define the period~$\Delta$ to be an integer much larger than~$n^2 \cdot \max D$, which ensures that all fastest paths start and end within a window of~$\Delta$ consecutive time steps (see~\cite{EMW24}).

\begin{theorem}\label{th:vc}
Even when only allowed to put one label per edge and per period, \PFAST is W[1]-hard when parameterized by the vertex cover number of the underlying graph plus the largest entry of~$D$.
\end{theorem}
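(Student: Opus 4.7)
The plan is to reuse the reduction from the proof of \Cref{hardness fast strict} essentially verbatim, augmenting it only with a choice of period~$\Delta$. Given an instance~$I=(G=(V_1\cup \dots \cup V_k, E),k)$ of \MCC (restricted as in \Cref{hardness fast strict}), we build the same matrix~$D$ on the same underlying graph~$G'$ and set the period to any integer~$\Delta > 2 \cdot (4k+6) \cdot |E'|$, where~$E'$ is the edge set of~$G'$. Since the realization produced in \Cref{hardness fast strict} uses at most one label per edge, the total number of time labels is at most~$|E'|$ and the lifetime of that realization is bounded by~$(4k+6)\cdot |E'|$, hence it easily fits within one period of length~$\Delta$. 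The resulting instance~$(D,\Delta)$ has the same vertex cover number of~$G'$ (namely~$3k+7$) and the same maximum entry~$4k+5$ as the instance built in \Cref{hardness fast strict}.

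For the forward direction, if~$G$ contains a multicolored clique of size~$k$, the single-label realization~$\lambda$ constructed in \Cref{hardness fast strict} can be turned into a~$\Delta$-periodic realization~$\lambda'$ by declaring that each edge~$e$ occurs at times~$\lambda(e) + i\Delta$ for every~$i\in \mathbb{N}$. Because~$\Delta$ is strictly larger than the lifetime of~$\lambda$ plus the maximum entry~$4k+5$ of~$D$, a strict temporal path cannot both start and end within the same period but use labels from two different periods, so every strict fastest path in~$(G',\lambda')$ has exactly the same duration as some strict temporal path in~$(G',\lambda)$ (and vice versa). Hence~$(G',\lambda')$ realizes~$D$ periodically with one label per edge per period.

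For the backward direction, let~$(G'',\mu)$ be any~$\Delta$-periodic realization of~$D$ with at most one label per edge per period. Pick any period window~$W=[\tau, \tau+\Delta-1]$ and consider the non-periodic temporal graph obtained by restricting~$\mu$ to~$W$. Every entry of~$D$ is at most~$4k+5 < \Delta$, so for each ordered pair~$(u,v)$ there is a strict fastest temporal path of duration~$D_{u,v}$ entirely contained inside some period window, and we can shift it into~$W$ using the periodicity. Applying the argument from the~$(\Rightarrow)$ direction of \Cref{hardness fast strict} to this windowed realization yields a multicolored clique of size~$k$ in~$G$.

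The main obstacle will be the backward direction: we must be careful that the periodicity does not silently create faster paths (for the entry~$D_{s,t}=2k+2$) by composing labels from different periods, and conversely, that a fastest path realizing~$D_{s,t}$ in~$(G'',\mu)$ can indeed be chosen within a single period window so that the original combinatorial argument forcing a clique still applies. Both issues are controlled by the choice~$\Delta > 2(4k+6)|E'|$, which exceeds the largest entry of~$D$ by a large multiplicative factor and thus prevents strict temporal paths with duration at most~$4k+5$ from spanning more than one period boundary. As the parameters vertex cover number and maximum entry are inherited from the reduction of \Cref{hardness fast strict} (and~$\Delta$ does not appear among them), the W[1]-hardness result transfers.
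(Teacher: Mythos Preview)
Your approach is essentially the paper's: take the reduction of \Cref{hardness fast strict} unchanged and choose~$\Delta$ large enough that the single-label realization fits inside one period with room to spare. The paper states this in one line (set~$\Delta$ much larger than~$n^2\cdot \max D$ and cite~\cite{EMW24}); your sketch is more explicit but follows the same idea.

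Your backward direction is over-engineered and the window restriction as written has a subtle flaw. You claim that a fastest path of duration~$D_{u,v}\le 4k+5$ is ``entirely contained inside some period window'' and can be shifted into the fixed window~$W$; but a path whose labels lie in~$[\tau_1,\tau_m]$ with~$\tau_m-\tau_1+1\le 4k+5<\Delta$ may still straddle a boundary~$j\Delta$, and shifting by multiples of~$\Delta$ does not fix that. Fortunately none of this is needed: the~$(\Rightarrow)$ argument of \Cref{hardness fast strict} uses only that the given labeling realizes~$D$; it never assumes the labeling is non-periodic or finite. So you can apply it verbatim to the periodic realization~$(G'',\mu)$ itself and read off the multicolored clique directly, without restricting to a window.
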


This answers an open question by Klobas et al.~\cite{KMMS24} and Erlebach et al.~\cite{EMW24} about the parameterized complexity of the problem with respect to the vertex cover number. 
Furthermore, this reduction improves significantly over the known hardness result for parameter feedback vertex set number.
It also shows that \PFAST can presumably not be solved in FPT time for the combined parameter of the vertex cover number plus~$\ell$ (the number of allowed labels per edge and per period) plus the largest entry in~$D$. 
Thus, in the FPT algorithm by Erlebach et al.~\cite{EMW24} for the vertex cover number plus the period~$\Delta$, one cannot replace~$\Delta$ by~$\ell$ plus the largest entry of~$D$.

\iflong
\paragraph*{Non-Strict fastest paths}
\fi
\iflong Next, we show that a  \else A \fi similar reduction also shows similar intractability results for~\FAST with non-strict paths.
The following reduction however requires more than one label per edge.

\iflong
\begin{theorem}
\else
\begin{theorem}[$\star$]
\fi\label{hardness fast nonstrict}
\NSFAST 
is NP-hard and W[1]-hard when parameterized by the vertex cover number of the underlying graph plus the largest entry of~$D$.
\end{theorem}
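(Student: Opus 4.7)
The plan is to adapt the reduction from Theorem~\ref{hardness fast strict}. Starting from an instance $(G=(V_1\cup\dots\cup V_k,E),k)$ of \MCC where each $G[V_a\cup V_b]$ is a disjoint union of bicliques, I would keep the same underlying graph $G'$ and essentially the same matrix $D$ as in the strict case. The backward direction should transfer almost verbatim: any fastest non-strict temporal $st$-path $P$ must still have duration $2k+2$, and the subpath argument still forces every pair $(a,b)$ of vertices visited by $P$ to satisfy $D_{a,b}\le 2k+2$, ruling out all vertices of $\{s',s'',t',t''\}\cup L$ from $P$. Since paths have pairwise-distinct vertices and $G'[V\cup X\cup\{s,t\}]$ is bipartite between $V$ and $X\cup\{s,t\}$, any $s$-$t$ path in it has length exactly $2k+2$ and visits one vertex per color class, so the inequalities $D_{v_i,v_j}\le 2k+2$ once more imply that the selected vertices form a clique in $G$.

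The hard part will be the forward direction. In the strict construction many edges inside a single time block share a common label; for instance, $\lambda(\{v_a,\ell_i\}):=\alpha$ for every $v_a\in A$ in the biclique block, or $\lambda(\{s',q\}):=\alpha+2$ for every $q\in V\cup X$ in the first block. Under non-strict semantics two edges with the same label can be chained instantaneously, so such a block would create a duration-$1$ non-strict path between two distinct vertices of $A$ (or two distinct neighbors of $s'$ in $V\cup X$) through the shared endpoint. Because $V$ is independent in $G'$ and most pairs of vertices in $V\cup X$ are non-adjacent, this would violate the lower bound $D_{v_a,v_a'}\ge 2$ imposed by the matrix.

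To fix this I would allow multiple labels per edge and replace each offending block by a collection of \emph{pairwise} mini-blocks. Concretely, for each ordered pair $(v_a,v_b)\in A\times B$ that must enjoy a duration-$2$ path through $\ell_i$, I would pick a fresh time $\alpha_{ab}$, append $\alpha_{ab}$ to $\lambda(\{v_a,\ell_i\})$ and $\alpha_{ab}+1$ to $\lambda(\{v_b,\ell_i\})$, and space consecutive mini-blocks by a gap larger than the largest entry of $D$ so that distinct mini-blocks cannot interact. Each mini-block then realizes exactly one intended non-strict path of duration~$2$, since every other edge touching $\ell_i$ at this time carries labels that are far away. The same pair-by-pair replacement applies to the block realizing the value-$(2k+3)$ entries between vertices of $V$ through $\ell^*,\ell^{**}$ and to the first block that uses $s',s'',t',t''$ and $L$; in every case, edges that used to share a label now receive pair-tailored label sets.

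The ``spine'' block realizing $D_{s,t}=2k+2$ via the path $(s,x_1,v_1,\dots,x_k,v_k,x_{k+1},t)$ with consecutive distinct labels is already safe: its labels are all different, and because $\{v_1,\dots,v_k\}$ is a clique, every pair of vertices appearing on it already satisfies its matrix entry. Since neither the underlying graph nor the matrix~$D$ is altered, the vertex cover number of $G'$ and the largest entry of $D$ both stay in $\Oh(k)$, so the reduction preserves the combined parameter and transfers the W[1]-hardness of \MCC under parameter $k$ to \NSFAST under vertex cover number plus largest entry. The main remaining technical check is that no mini-block creates a faster-than-intended non-strict path between some other pair of vertices; but each mini-block touches only three vertices and two labels, so this reduces to a routine finite case analysis.
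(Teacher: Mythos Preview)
Your approach is correct, but the paper takes a noticeably simpler route. Rather than keeping the full graph~$G'$ and repairing every offending shared-label block with pair-tailored mini-blocks, the paper \emph{discards} the vertices~$s'',t''$ and all of~$L$, restricting the instance to~$V\cup X\cup\{s,s',t,t'\}$. The matrix~$D$ is the corresponding restriction, so the backward direction is literally the same argument you give. For the forward direction, the paper then uses~$s'$ (and~$t'$) as universal hubs: for each non-adjacent pair $(p,q)\in(V\cup X)^2$ it creates one dedicated time block that puts labels $\alpha+1$ and $\alpha+2D_{p,q}-1$ on $\{p,s'\}$ and $\alpha+D_{p,q}$ on $\{s',q\}$, realizing both $D_{p,q}$ and $D_{q,p}$ with a three-vertex gadget; a few additional small blocks through $s'$ and $t'$ handle the remaining entries involving $s,t,s',t'$, and the spine block for $D_{s,t}$ is unchanged.

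The trade-offs: the paper's version needs \emph{no} biclique assumption on~$G$ (which you carry over unnecessarily), yields a smaller vertex cover of size~$k+3$ instead of~$3k+7$, and avoids the somewhat tedious case analysis of repairing the original first block and the $L$-based blocks. Your route has the merit of leaving the instance~$(G',D)$ untouched and only changing the labeling, which makes the relationship to the strict reduction more transparent. Both yield the same asymptotic parameter bound, so either proves the theorem; the paper's is just shorter to write out in full.
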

\iflong
\begin{proof}
We again reduce from \MCC.

Let~$I:=(G=(V_1\cup\dots\cup V_k, E), k)$ be an instance of~\MCC.
The requirement that the induces subgraph between any two color classes is a disjoint union of bicliques is not required for this reduction.
We obtain an instance~$D$ of \NSFAST as follows:
We take the instance of~\FAST produced by the reduction behind~\Cref{hardness fast strict} and restrict it to the vertices of~$V\cup X \cup \{s,s',t,t'\}$.
Let~$G'$ be the respective underlying graph.
Since~$V$ is an independent set in the instance from the reduction behind~\Cref{hardness fast strict}, $X \cup \{s',t'\}$ is a vertex cover of size~$k+3$ of~$G'$.

This completes the construction.
Intuitively, the main difficulty comes from realizing the entry~$D_{s,t}$ while not violating the other entries.
We show that this can only be realized if and only if~$G$ has a clique of size~$k$.

$(\Rightarrow)$
The proof of this direction is identical to the one on the proof of~\Cref{hardness fast strict}.
We thus only recall the main parts of the proof for this direction.
\begin{itemize}
\item To realize the entry~$D_{s,t}$ of value~$2k+2$, there needs to be a temporal path~$P$ of duration exactly~$2k+2$.
\item For each pair $(u,v)$ of vertices on~$P$, where~$u$ appears before~$v$ on~$P$, it holds that~$D_{u,v} \leq 2k+2$.
\item By~$D_{s,t'} = D_{s',t} = 2k+3$, $P$ contains neither~$s'$ nor~$t'$.
\item Each path form~$s$ to~$t$ in~$G' - \{s',t'\} := G'[V \cup X \cup \{s,t\}]$ contains exactly one vertex of~$V_i$ for each~$i\in [1,k]$. Thus, $P$ visits for each~$i\in [1,k]$, exactly one vertex~$v_i$ of~$V_i$.
\item Since~$D_{v_i,v_j} \leq 2k+2$ for each~$i\in [1,k-1]$ and each~$j\in [i+1,k]$, there is the edge~$\{v_i,v_j\}$ in~$G$ (as otherwise, $D_{v_i,v_j} = D_{v_j,v_i} = 2k+3$).
\item Thus, $\{v_i\mid i\in [1,k]\}$ is a clique of size~$k$ in~$G$.
\end{itemize}

$(\Leftarrow)$
Let~$S$ be a clique of size~$k$ in~$G$ and for each~$i\in [1,k]$, let~$v_i$ denote the unique vertex of~$S\cap V_i$.
We define a labeling~$\lambda\colon E' \to 2^\mathbb{N}$, such that~$\mg:= (G',\lambda)$ realizes~$D$.
We present the labeling again via the help of time blocks.

Let~$(p,q)\in (V \cup X)^2$ be a pair of distinct vertices, such that~$\{p,q\}$ is not an edge of~$G'$.
We add a new time block~$[\alpha,\beta]$ in which we add labels~$\alpha + 1$ and~$\alpha + 2 \cdot D_{p,q} - 1$ to edge~$\{p,s'\}$ and add label~$\alpha + D_{p,q}$ to edge~$\{s',q\}$.
This realizes both entries~$D_{p,q}$ and~$D_{q,p}$ and no other paths of duration more than 1 exists in this time block.

For each vertex~$p\in V \cup X$, we add a time block~$[\alpha,\beta]$, in which we add labels~$\alpha + 1$ and~$\alpha + 3$ to the edge~$\{p,s'\}$ and add label~$\alpha + 2$ to edge~$\{s,s'\}$.
This creates paths of duration exactly~$2$ between~$p$ and~$s$ and no other paths of duration more than 1 exists in this time block.
Similarly, we ensure paths of duration between~$p$ and~$t$ by adding a time block~$[\alpha',\beta']$, in which we add labels~$\alpha' + 1$ and~$\alpha' + 3$ to the edge~$\{p,t'\}$ and add label~$\alpha' + 2$ to edge~$\{t,t'\}$.
All these time blocks together realize all entries between~$V$ and~$\{s,t\}$ of value more than~$1$.

It remains to realize the entries between the vertices of~$\{s,s',t,t'\}$ of value at least~$2$.
To this end, we add two time blocks.
In the first time block~$[\alpha,\beta]$, we add label~$\alpha + 1$ and~$\alpha + 4k+5$ to both edges~$\{s,s'\}$ and~$\{t,t'\}$, and add label~$\alpha + 2k+3$ to edge~$\{s',t'\}$.
This ensures paths of duration~$2k+3$ between~$s$ and~$t'$ and between~$t$ and~$s'$.
Moreover, this ensures that there is a path of duration~$D_{t,s} = 4k+5$ from~$t$ to~$s$.
Observe that no faster paths between these vertices are realized.
Thus, the only entry of value at least~$2$ that is not yet realized is the entry~$D_{s,t}$ of value~$2k+2$.
We do this in the same way as in the proof of~\Cref{hardness fast strict}.
Due to the similarity, we just recall the rough idea.
Recall that~$S$ is a clique in~$G$ and that for each~$i\in [1,k]$, $v_i$ denotes the vertex of~$S\cap V_i$.
Consider the path~$P:=(s,x_1,v_1,\dots,x_k,v_k,x_{k+1},t)$ and label the edges of this path with consecutive time labels starting with~$\alpha$.
Hence, this path has duration equals to its length, namely~$D_{s,t}=2k+2$.
Since~$S$ is a clique in~$G$, all entries of~$D$ between these vertices are at most~$2$. 
Thus, this time block does not create paths that are too fast.

Finally, for each entry~$D_{p,q}$ of value~$1$ for which the edge~$\{p,q\}$ has not received a label so far, we add a time block from which the edge~$\{p,q\}$ receives an arbitrary label.
This thus also realizes all entries of value~$1$, which in total implies that~$D$ is realized by~$(G',\lambda)$.
\end{proof}

By replacing the vertices~$s'$ and~$t'$ by $\Theta(n^2)$~true twins each (also with respect to all respecting entries in the matrix), we can obtain an equivalent instance of~\NSFAST, such that each constructed yes-instance in that way can be realized with only one label per edge.
This, however, has a vertex cover of unbounded size and the parameterized hardness does not transfer.

\begin{corollary}\label{hardness fast nonstrict simple}
\NSFAST 
is NP-hard and W[1]-hard when parameterized by the largest entry of~$D$.
This holds even on a family of instances for which all yes-instances are realizable with only one label per edge.
\end{corollary}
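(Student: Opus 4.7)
The plan is to modify the reduction underlying~\Cref{hardness fast nonstrict} by introducing~$\Theta(n^2)$ true twins of each of~$s'$ and~$t'$, and to verify that the resulting instance remains equivalent while additionally admitting a single-label-per-edge realization whenever it is a yes-instance. Inspection of the proof of~\Cref{hardness fast nonstrict} shows that only edges incident to~$s'$ and~$t'$ receive several labels and that each of these two vertices appears in~$O(n^2)$ distinct time blocks, each contributing~$O(1)$ labels to any given incident edge. Assigning a dedicated twin to each such time block therefore drives the label count on every edge down to one.

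Formally, I would build a new matrix~$D'$ over vertex set~$V\cup X\cup\{s,t\}\cup\{s'_1,\dots,s'_N\}\cup\{t'_1,\dots,t'_N\}$ with~$N=\Theta(n^2)$, copying the row and column of~$s'$ in~$D$ to every twin~$s'_i$ (and analogously for the twins of~$t'$), and setting the entry between any two twins to~$1$, consistently with the fact that true twins are pairwise adjacent in the underlying graph used in~\Cref{hardness fast nonstrict}. For the~$(\Rightarrow)$ direction, the argument of~\Cref{hardness fast nonstrict} applies verbatim: the entry~$D'_{s,t}=2k+2$ forces a fastest temporal path~$P$ of duration exactly~$2k+2$; the entries~$D'_{s,t'_j}=D'_{s'_i,t}=2k+3$ forbid~$P$ from visiting any twin, and so~$P$ uses only vertices of~$V\cup X\cup\{s,t\}$ and its vertices inside~$V$ form a multicolored~$k$-clique of~$G$. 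For the~$(\Leftarrow)$ direction, starting from such a clique I would construct the realization exactly as in~\Cref{hardness fast nonstrict}, but route every time block that originally used~$s'$ (resp.~$t'$) through its own fresh twin; when the original block added two labels to a single edge incident to~$s'$ or~$t'$ it is split into two consecutive blocks that each use a distinct twin, so that every edge receives at most one label. A final time block labels all remaining edges (those joining pairs of twins, and those not yet labeled in~$V\cup X$) with a common large time stamp to realize the residual entries of value~$1$.

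The delicate step is to check that the twins do not accidentally create temporal paths that are too fast, which is particularly subtle in the non-strict setting where arbitrarily many edges may be traversed at the same time step. As in the proof of~\Cref{hardness fast nonstrict}, I would keep the distinct time blocks separated by a gap exceeding~$4k+5$, so that any temporal path mixing labels from two different time blocks has duration strictly larger than the largest matrix entry, while within each time block the dedicated twin is only incident to a constant number of labeled edges and can be analyzed as in the original proof. Since the largest entry of~$D'$ equals~$4k+5$, the W[1]-hardness parameterized by~$\max D$ is inherited directly from~\Cref{hardness fast nonstrict}; however, the~$\Theta(n^2)$ twins now form a large clique in the underlying graph, which explains why the vertex cover parameterization of~\Cref{hardness fast nonstrict} does not survive the transformation and is only stated here with~$\max D$ as the parameter.
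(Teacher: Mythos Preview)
Your proposal is correct and follows exactly the approach sketched in the paper: replace~$s'$ and~$t'$ by $\Theta(n^2)$ true twins each (with identical matrix rows and columns), then distribute the time blocks of the proof of \Cref{hardness fast nonstrict} over distinct twins so that every edge ends up with a single label. Your observation that this destroys the bounded vertex cover but preserves the bound on~$\max D$ also matches the paper's remark.
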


This shows that \PNSFAST 
remains NP-hard. 
Moreover, since the reduction behind~\Cref{hardness fast nonstrict} never had to make use of non-strict paths to realize the obtained yes-instances, but only used paths that were also strict, this hardness result would also transfer to the version of~\FAST, where we require the resulting temporal graph to be~\emph{proper}, that is, no two adjacent edges are allowed to share a label.
\fi

\section{Shortest paths}\label{sec short}
In this section, we consider the question for~\emph{shortest} temporal paths.

\prob{\SHORT}
{A distance matrix $D$ of size $n\times n$.}
{Is there a temporal graph $\mg$ such that~$\Sh(\G)=D$?}

Note that a realization of~$D$ can only assign labels to edges~$\{u,v\}$ where~$D_{u,v} = D_{v,u} = 1$.
Hence, $G=([n],E)$ with~$E:=\{\{u,v\}\mid D_{u,v} = 1 \land D_{v,u} = 1\}$ is the underlying graph of every realization of~$D$.
\iflong

\fi
We show the NP-hardness of both the strict and the non-strict variants.

\iflong
\begin{theorem}
\else
\begin{theorem}[$\star$]
\fi\label{hardness shortest}
\SHORT and \NSSHORT are NP-hard.
\end{theorem}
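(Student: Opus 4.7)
The plan is to reduce from a classical NP-complete problem such as 3-SAT, producing essentially the same matrix $D$ (with minor variant-specific tweaks) for both \SHORT and \NSSHORT. Recall from the preceding observation that the underlying graph of any realization is uniquely determined by $D$ via its entries equal to $1$, so the only real freedom in a realization lies in choosing the time labels on this fixed static graph. Intuitively, for $D_{uv} = k$ the labeling must realize a temporal $uv$-path of length exactly $k$ while precluding any shorter temporal $uv$-path, and both of these conditions are sensitive to the chosen labeling. This also stands in contrast with the trivial polynomial case in the periodic setting, where repetition makes shortest temporal paths coincide with shortest static paths.

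Given a 3-CNF formula $\phi$ with variables $x_1,\ldots,x_n$ and clauses $C_1,\ldots,C_m$, the underlying graph would contain, for each variable $x_i$, a \emph{variable gadget}: two internally disjoint paths of equal length between two distinguished vertices $s_i$ and $t_i$, one representing the assignment $x_i = \mathrm{True}$ and the other $x_i = \mathrm{False}$. Setting $D_{s_i,t_i}$ to the common path length $k$ forces at least one of the two paths to admit a suitable (strictly increasing or non-decreasing) label sequence, thereby encoding a truth value. For each clause $C_j$, a \emph{clause gadget} would connect a clause vertex $c_j$ to the vertices lying on the ``literal paths'' associated with the literals of $C_j$; a carefully chosen entry $D_{c_j,q}$ for some designated sink vertex $q$ would force the shortest temporal $c_jq$-path to traverse at least one literal path that is active in its variable gadget, which is only possible when $C_j$ is satisfied.

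The forward direction would translate a satisfying assignment into a labeling: in each variable gadget, the path corresponding to the chosen truth value receives consecutive (or constant-with-appropriate-adjustments) time labels, and in each clause gadget one uses a satisfied literal to route the required temporal path to $q$. Conversely, given a realization, one reads off a truth assignment by identifying, for each variable, which of the two paths is actually realized as a temporal path, and the clause entries in $D$ then force each clause to contain at least one satisfied literal. The construction should work uniformly for both \SHORT and \NSSHORT with minor parameter adjustments: in the strict case a path of length $k$ requires $k$ distinct strictly increasing labels, while in the non-strict case equal labels on consecutive edges are allowed but create additional potential shortcuts that have to be ruled out.

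The main obstacle is global consistency: ensuring that the $\Theta(n^2)$ entries of $D$ are mutually compatible, that is, that no spurious short temporal paths appear between vertices belonging to different gadgets. To handle this I would (i) route all inter-gadget shortest paths through a small number of auxiliary ``hub'' vertices whose incident labels are fully under our control, and (ii) pad all remaining entries of $D$ with values at least the static diameter of the underlying graph, so that these entries are automatically realized by any labeling. A secondary obstacle specific to the non-strict variant is that equal labels on adjacent edges create unintended simultaneous traversals which may shorten certain temporal paths; this can be addressed by separating the labeling of each gadget into disjoint time blocks, as in the fastest-path constructions earlier in the paper, so that the intra-gadget labelings cannot interfere across gadgets.
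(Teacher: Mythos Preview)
Your high-level plan (reduce from \textsc{SAT}, encode a variable by a choice between two routes, attach clause gadgets) is in the right spirit, but the proposal as written has a genuine gap at the step you yourself flag as the ``main obstacle''.

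The problematic step is the padding argument: you propose to ``pad all remaining entries of $D$ with values at least the static diameter of the underlying graph, so that these entries are automatically realized by any labeling''. This does not work. For \SHORT, the matrix entry $D_{u,v}$ must equal \emph{exactly} the length of a shortest temporal $uv$-path; if you set $D_{u,v}$ strictly larger than the static $uv$-distance, you now have to \emph{prevent} every shorter static $uv$-path from being a temporal path, which is a strong constraint, not a free one. Conversely, if you set $D_{u,v}$ equal to the static distance, then the lower bound is automatic but you still have to \emph{provide} a temporal $uv$-path of that exact length, and doing this simultaneously for all $\Theta(n^2)$ pairs is far from trivial once your variable and clause gadgets have imposed ordering constraints on the labels. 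In short, there is no value you can put in an entry that makes it ``automatically realized''; every entry imposes both an upper and a lower bound, and this is precisely what makes the problem hard. A second, smaller issue is that your variable gadget only guarantees that \emph{some} $s_i t_i$-path of length $k$ is temporal, but you have not argued why the clause gadget can only use a literal path that is temporal \emph{in the direction needed}; the entry $D_{t_i,s_i}$ could be realized by a different path than $D_{s_i,t_i}$, so ``which path is active'' is not well-defined from the labeling alone.

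The paper's construction avoids these difficulties by a different and rather clever device: it makes \emph{almost every} entry of $D$ equal to the static distance in the underlying graph, with the sole exceptions being $D_{c,\top}$ for each clause $c$ (and one auxiliary entry $D_{v^*,\top}$), which are set one larger than the static distance. Thus for all but these few entries only the existence direction needs to be verified, and the paper exhibits an explicit labeling doing this. The entire combinatorial content is then concentrated in forbidding the length-$2$ paths $(c,\ell,\top)$ while still allowing length-$3$ paths $(c,\ell,\overline{\ell},\top)$; the paper shows that if two clauses try to use the same variable from opposite sides, one of the forbidden shortcuts is forced to appear. This is quite different from your two-disjoint-paths gadget, and the reason it works is that it keeps the number of ``non-trivial'' entries (those strictly above the static distance) down to $O(m)$, so global consistency can actually be checked by hand.
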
 

\iflong
\begin{proof}
\else
\begin{proof}[Proof (sketch).]
\fi
\iflong
We reduce from the NP-hard~\SAT~\cite{K72}.
\else
We reduce from \SAT.
\fi
Let~$F$ be an instance of~\SAT where each variable occurs at least once positively and at least once negatively, and where no clause contains the same variable both positively and negatively.
\iflong
Under these restrictions, \SAT is NP-hard~\cite{K72}.
\fi

\textbf{Construction.}
Let~$X$ be the variable set of~$F$ and let~$C$ denote the clauses of~$F$.
To obtain an instance~$D$ of~\SHORT or~\NSSHORT, we first define the underlying graph~$G=(V,E)$ that contains an edge~$\{u,v\}$ if and only if~$D_{u,v}=D_{v,u}=1$ (see~\Cref{fig shortest}).
The graph~$G$ contains for each variable~$x\in X$  the vertices~$x$ and~$\overline{x}$ which are joined by an edge.
For each clause~$c\in C$, we also add a vertex~$c$, which we make adjacent to all vertices corresponding to literals that are contained in~$c$.
Additionally, we add three more vertices to~$G$.
A vertex~$v^*$ which is adjacent to all vertices of~$C$, and two vertices~$\top$ and~$\bot$ that are adjacent to all vertices representing literals, that is, to the vertices of~$\{x,\overline{x}\mid x\in X\}$.

Next, we describe the remaining entries of~$D$.
Let~$c$ be a clause of~$C$.
We set~$D_{c,\top} := 3$ and~$D_{\top,c} := D_{c,\bot} := D_{\bot,c} := 2$.
For each other clause~$c'$ of~$C$, we set~$D_{c,c'} := D_{c',c} := 2$.
For each positive literal~$x$ that occurs in~$c$, we set~$D_{c,\overline{x}} := D_{\overline{x},c} := 2$.
Similarly, for each negative literal~$\overline{x}$ that occurs in~$c$, we set~$D_{c,x} := D_{x,c} := 2$.
For each variable~$x$ for which neither~$x$ nor~$\overline{x}$ occurs in~$c$, we set~$D_{c,\overline{x}} := D_{\overline{x},c} := D_{c,x} := D_{x,c} := 3$.
This defines all entries regarding vertices of~$C$.

Let~$\ell_1$ and~$\ell_2$ be distinct literals, such that they are not the negation of each other.
We set~$D_{\ell_1,\ell_2} := D_{\ell_2,\ell_1} := 2$.
For each literal~$\ell$, we also set~$D_{\ell,v^*} := D_{v^*,\ell} := 2$.

Finally, we set~$D_{v^*,\bot} := D_{\bot,v^*} := 3$, $D_{\bot,\top} := D_{\top,\bot} := 2$, $D_{v^*,\top} := 4$, and~$D_{\top,v^*} := 3$.
This completes the definition of~$D$.

Note that nearly all defined entries are the exact distances between the vertices in the underlying graph~$G$.
The only exceptions are the entry~$D_{v^*,\top}$ and the entry~$D_{c,\top}$ for each clause~$c\in C$.
Hence, only for the vertex pairs~$(\{v^*\} \cup C) \times \{\top\}$, one could possibly create a temporal path that has length less than the respective entry of~$D$.
Based on this property, we can prove that a labeling~$\lambda \colon E \to 2^{\mathbb{N}}$ realizes~$D$ by showing the following two points:
\begin{itemize}
\item For distinct vertices~$a$ and~$b$ of~$V$, there is a temporal path of length~$D_{a,b}$ from~$a$ to~$b$.
\item For each~$a\in \{v^*\} \cup C$, there is no temporal path of length less than~$D_{a,\top}$ from~$a$ to~$\top$.
\end{itemize}

\begin{figure}
\centering
\scalebox{.75}{
\begin{tikzpicture}[scale=1]

\tikzstyle{k}=[circle,fill=white,draw=black,minimum size=10pt,inner sep=2pt]

\node[k] (t) at (0,3) {$\top$};
\node[k] (b) at (2,3) {$\bot$};

\iflong
\node[k] (v) at (0,-3) {$v^*$};
\else
\node[k] (v) at (0,-2) {$v^*$};
\fi

\node[k] (x)  at (-4.5,1) {$x$};
\node[k] (nx) at (-2.5,1) {$\overline{x}$};

\node[k] (y)  at (-1,1) {$y$};
\node[k] (ny) at (1,1) {$\overline{y}$};

\node[k] (z)  at (2.5,1) {$z$};
\node[k] (nz) at (4.5,1) {$\overline{z}$};

\node[k] (c1)  at (-3.5,-1) {$c_1$};
\node[k] (c2)  at (0,-1) {$c_2$};
\node[k] (c3)  at (3.5,-1) {$c_3$};

\draw[thick, -, dotted] (v) -- (c1);
\draw[thick, -, dotted] (v) -- (c2);
\draw[thick, -, dotted] (v) -- (c3);

\draw[thick, -, dotted] (x) -- (b);
\draw[thick, -, dotted] (nx) -- (b);
\draw[thick, -, dotted] (y) -- (b);
\draw[thick, -, dotted] (ny) -- (b);
\draw[thick, -, dotted] (z) -- (b);
\draw[thick, -, dotted] (nz) -- (b);

\draw[-] (x) -- (t) node [near start, fill=white] {$5$};
\draw[-] (nx) -- (t) node [near start, fill=white] {$2$};
\draw[-] (y) -- (t) node [near start, fill=white] {$2$};
\draw[-] (ny) -- (t) node [near start, fill=white] {$5$};
\draw[-] (z) -- (t) node [near start, fill=white] {$5$};
\draw[-] (nz) -- (t) node [near start, fill=white] {$2$};

\draw[-] (x) -- (nx) node [midway, fill=white] {$1,4,7$};
\draw[-] (y) -- (ny) node [midway, fill=white] {$1,4,7$};
\draw[-] (z) -- (nz) node [midway, fill=white] {$1,4,7$};

\draw[-] (c1) -- (x)  node [near start, fill=white] {$6$};
\draw[-] (c2) -- (nx) node [near start, fill=white] {$3$};
\draw[-] (c1) -- (y)  node [near start, fill=white] {$3$};
\draw[-] (c2) -- (y)  node [near start, fill=white] {$3$};
\draw[-] (c1) -- (z)  node [near start, fill=white] {$6$};
\draw[-] (c3) -- (ny)  node [near start, fill=white] {$6$};
\draw[-] (c3) -- (nz)  node [near start, fill=white] {$3$};
\end{tikzpicture}
}
\caption{An example of the reduction behind~\Cref{hardness shortest} for the formula~$(x \lor y \lor z) \land (\overline{x} \lor y) \land (\overline{y} \lor \overline{z})$.
A labeling that realizes the matrix~$D$ is depicted, where the dashed arcs receive the label set~$\{1,2,7,8\}$.
Moreover this labeling corresponds to a satisfying truth assignment ($x$ = False,  $y$ = True, $z$ = False).}
\label{fig shortest}
\end{figure}
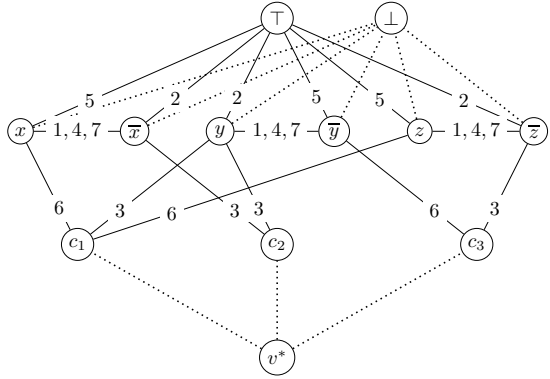

\textbf{Intuition.}
\iflong
Similar to the reduction behind~\Cref{range size 2 hard}, we
\else
We
\fi
have clause vertices that aim to reach~$\top$. 
For each such clause vertex~$c_i$, we want that the shortest temporal path to~$\top$ has length exactly~$3$.
By the structure of the implicit underlying graph, these paths must be of the form~$(c_i,\ell,\overline{\ell},\top)$ for some literal~$\ell$ that occurs in clause~$c_i$.
If two clauses try to use the same variable gadget from different sides to realize their entries, that is, if~$(c_i,\ell,\overline{\ell},\top)$ and~$(c_j,\overline{\ell},\ell,\top)$ are both temporal paths in our solution graph, then in fact at least one of~$(c_i,\ell,\top)$ or~$(c_j,\overline{\ell},\top)$ is also a temporal path, implying that for at least one of the clauses, the shortest temporal path has length~$2$, which is lower than the desired length of~$3$.
Intuitively, this means that in each solution, each variable gadget can only be used in one direction for paths between clauses and~$\top$, which then encodes a satisfying truth assignment.
\iflong

\textbf{Correctness.}
We now show that~$F$ is satisfiable if and only if~$D$ is realizable.

$(\Leftarrow)$
Let~$\mg := (G,\lambda)$ be a realization of~$D$.\footnote{Recall that a realization of~$D$ can only assign label to the edges of~$E$, as all other vertex pairs have a pairwise distance strictly greater than one according to~$D$.} 
We show that~$F$ is satisfiable.
To this end, we define the following truth assignment~$\pi$:
We assign the literal~$x$ to True if and only if~$\max \lambda(\{x,\top\}) < \max \lambda(\{\overline{x},\top\})$.

We now show that the so defined truth assignment satisfies~$F$.
To this end, let~$c$ be an arbitrary clause of~$C$.
We show that one literal contained in~$c$ is assigned to True by~$\pi$.
Since~$\mg$ realizes~$D$ and~$D_{c,\top} = 3$, the shortest temporal path in~$\mg$ from~$c$ to~$\top$ has length exactly~$3$.
Let~$P=(c,a,b,\top)$ be such a shortest temporal path from~$c$ to~$\top$ in~$\mg$ for vertices~$a$ and~$b$.
The vertex~$a$ is a literal~$\ell$ that is contained in the clause~$c$.
This is due to the fact that the only other neighbor of~$c$ in~$G$ is vertex~$v^*$ for which the shortest temporal path towards~$\top$ is supposed to have length at~$4 = D_{v^*,\top}$.
Let~$\overline{\ell}$ denote the complementary literal of~$\ell$.
Now consider the vertex~$b$.
Since~$b$ is the direct predecessor of~$\top$ in~$P$, $b$ is a neighbor of both~$\ell$ and~$\top$ in~$G$.
By definition, $\overline{\ell}$ is the only such vertex.
Thus, $P = (c,\ell,\overline{\ell},\top)$.
We show that~$\max \lambda(\{\ell,\top\}) < \max \lambda(\{\overline{\ell},\top\})$, which then implies that literal~$\ell$ is set to True under~$\pi$.
To this end, let~$\alpha := \min \lambda(\{c,\ell\})$.
We distinguish between the strict and the non-strict case:

In the strict case, $\max\lambda(\{\overline{\ell},\top\}) > \alpha$, since~$P$ is a temporal path in~$\mg$.
Moreover, $\max \lambda(\{\ell,\top\}) \leq \alpha$, as otherwise, $(c,\ell,\top)$ would be a temporal path of length~$2<D_{c,\top}$ in~$\mg$.
Thus, $\max \lambda(\{\ell,\top\}) \leq \alpha < \max\lambda(\{\overline{\ell},\top\})$.

In the non-strict case, $\max\lambda(\{\overline{\ell},\top\}) \geq \alpha$, since~$P$ is a temporal paths in~$\mg$.
Moreover, $\max \lambda(\{\ell,\top\}) < \alpha$, as otherwise, $(c,\ell,\top)$ would be a temporal path of length~$2<D_{c,\top}$ in~$\mg$.
Thus, $\max \lambda(\{\ell,\top\}) < \alpha \leq \max\lambda(\{\overline{\ell},\top\})$.

In both cases, $\max \lambda(\{\ell,\top\}) < \max\lambda(\{\overline{\ell},\top\})$.
This implies that the literal~$\ell$ is set to True under~$\pi$.
Thus, $\pi$ satisfies~$c$.
Since the clause~$c$ was chosen arbitrarily, this implies that~$\pi$ satisfies all clauses of~$C$ and thus the whole formula~$F$.

$(\Rightarrow)$
Assume that~$F$ is satisfied and let~$L$ denote the set of literals that are assigned to True in a fixed but arbitrary satisfying truth assignment for~$F$.
We define a labeling~$\lambda\colon E \to 2^{\mathbb{N}}$ as follows:
For each edge~$e$ incident with one of~$v^*$ or~$\bot$, we set~$\lambda(e) := \{1,2,7,8\}$.
Let~$x$ be a variable of~$X$.
We set~$\lambda(\{x,\overline{x}\}) := \{1,4,7\}$.
If~$x\in L$, that is, if~$x$ is assigned to True, we set~$\lambda(\{\overline{x},\top\}) := \{5\}$, $\lambda(\{x,\top\}) := \{2\}$, $\lambda(\{\overline{x},c\}) := \{6\}$ for each clause~$c\in C$ that contains the literal~$\overline{x}$, and $\lambda(\{x,c\}) := \{3\}$ for each clause~$c\in C$ that contains the literal~$x$.
Otherwise, that is, if~$\overline{x}\in L$, we set~$\lambda(\{\overline{x},\top\}) := \{2\}$, $\lambda(\{x,\top\}) := \{5\}$, $\lambda(\{\overline{x},c\}) := \{3\}$ for each clause~$c\in C$ that contains the literal~$\overline{x}$, and $\lambda(\{x,c\}) := \{6\}$ for each clause~$c\in C$ that contains the literal~$x$.
This completes the definition of~$\lambda$.

We now show that~$\mg := (G,\lambda)$ realizes~$D$.
As discussed previously, we can do this in two steps. 
\begin{itemize}
\item We show that for each two distinct vertices~$a$ and~$b$ of~$V$, there is a temporal path of length exactly~$D_{a,b}$ from~$a$ to~$b$, and
\item we show that for each~$a\in \{v^*\} \cup C$, there is no temporal path of length less than~$D_{a,\top}$ from~$a$ to~$\top$.
\end{itemize}
The latter property follows from the definition of~$\lambda$:
For each literal~$\ell$, $\max \lambda(\{\ell,\top\}) < \min \lambda(\{\ell,c\})$ for each clause~$c\in C$ that contains the literal~$\ell$.
Thus, for each such clause~$c$, $(c,\ell,\top)$ is not a temporal path, which implies that the shortest temporal path from~$c$ to~$\top$ has length at least~$3 = D_{c,\top}$, since each common neighbor of~$c$ and~$\top$ is a literal~$\ell$ that is contained in~$c$.
Note that this holds in both the strict and the non-strict setting.
Similarly, since the neighborhood of~$v^*$ in~$G$ are exactly the clauses of~$C$ and since no such vertex~$c\in C$ has a temporal path of length less than~$3$ towards~$\top$, there is no temporal path of length less than~$4 = D_{v^*,\top}$ from~$v^*$ to~$\top$.
Consequently, for any two vertices~$a$ and~$b$ of~$V$, there is no temporal path from~$a$ to~$b$ of length less than~$D_{a,b}$ in~$\mg$.

It thus remains to show that there always is at least one temporal path from~$a$ to~$b$ in~$\mg$ of length exactly~$D_{a,b}$.
First, we consider all entries involving vertices of~$C$.
\begin{itemize}
\item If~$a$ and~$b$ are from~$C$, the entry~$D_{a,b} = 2$ is realized via the path~$(a,v^*,b)$ which has labels~$(1,2)$.
\item Let~$a$ be from~$C$ and let~$b$ be a literal~$\ell$.
If~$\ell$ is contained in clause~$a$, the two vertices are adjacent and the entries~$D_{a,b}$ and~$D_{b,a}$ of value~$1$ are realized by the direct edge.
If the negation~$\overline{\ell}$ of~$\ell$ is contained in clause~$a$, then  the path~$(a,\overline{\ell},\ell)$ with labels~$(\alpha,7)$ and the path~$(\ell,\overline{\ell},a)$ with labels~$(1,\alpha)$ realize the entries~$D_{a,\ell}$ and~$D_{\ell,a}$ of value~$2$, where~$\alpha$ is the unique label of edge~$\{a, \overline{\ell}\}$ which is an element of~$\{3,6\}$.
If neither~$\ell$ nor the negation of~$\ell$ occurs in clause~$a$, let~$\ell'$ be an arbitrary literal which is contained in clause~$a$.
Then, the path~$(a,\ell',\bot,\ell)$ with labels~$(\alpha, 7,8)$ and the path~$(\ell,\bot,\ell',a)$ with labels~$(1,2, \alpha)$ realize the entries~$D_{a,\ell}$ and~$D_{\ell,a}$ of value~$3$, where~$\alpha$ is the unique label of edge~$\{a, \overline{\ell}\}$ which is an element of~$\{3,6\}$.
Note that the latter case also implies the existence of temporal paths of length~$2$ between~$a$ and~$\bot$.
\item Let~$a$ be from~$C$ and let~$b$ be~$\top$.
Then, since the clause~$c$ is satisfied by the truth assignment, there is some literal~$\ell$ that is contained in~$c$ and also assigned to True.
Let~$\overline{\ell}$ be the negation of~$\ell$.
By definition of~$\lambda$, the paths~$(a,\ell,\overline{\ell},\top)$ has labels~$(3,4,5)$ and the path~$(\top,\ell,a)$ has labels~$(2,3)$.
These two paths realize the entry~$D_{a,\top}$ of value~$3$ and~$D_{\top,a}$ of value~$2$. 
\end{itemize}
Thus, all entries involving at least one vertex of~$C$ are realized.
We continue with the entries involving vertices that correspond to literals.
\begin{itemize}
\item Let~$a$ and~$b$ be literals that are not adjacent in~$G$.
Then, the entries~$D_{a,b}$ and~$D_{b,a}$ of value~$2$ are realized by the paths~$(a,\bot,b)$ and~$(b,\bot,a)$ with labels~$(1,2)$ each.
\item Let~$a$ be a literal and let~$b$  be~$v^*$.
Then, there is some clause~$c$ such that~$c$ contains literal~$a$.
Thus, the entries~$D_{a,b}$ and~$D_{b,a}$ of value~$2$ are realized by the paths~$(a,c,v^*)$ and~$(v^*,c,a)$ with labels~$(\alpha,7)$ and~$(1,\alpha)$ respectively, where~$\alpha$ denotes the unique label of edge~$\{c,a\}$ which is an element of~$\{3,6\}$.
\end{itemize}
Note that these are all remaining entries of value at least~$2$ that involve vertices that correspond to literals.
It remains to consider the entries between the vertices~$v^*$, $\bot$ and~$\top$.
The entries~$D_{\bot,\top}$ and~$D_{\top,\bot}$ of value~$2$ are realized via the paths~$(\bot,\ell,\top)$ and~$(\top,\ell,\bot)$ with labels~$(1,2)$ and~$(2,7)$ respectively, where~$\ell$ is an arbitrary literal that is assigned to True.
For the remaining four entries involving~$v^*$ and a vertex~$b\in \{\top, \bot\}$, consider the previously discussed paths between~$b$ and an arbitrary clause vertex~$c\in C$.
These paths only started/ended at~$c$ with a label between~$3$ and~$6$. 
Thus, we can extend them by a preceding (resp. succeeding) edge from (resp. to) $v^*$ with label~$1$ (resp. $8$) to realize the remaining four entries involving vertex~$v^*$.

Thus, for each two vertices~$a$ and~$b$ of~$V$, the shortest temporal path in~$\mg$ from~$a$ to~$b$ has length exactly~$D_{a,b}$.
Consequently, $\mg$ realizes~$D$.
\else
The correctness is deferred to the appendix.
\fi
\end{proof}

Note that a realization question for shortest temporal paths in periodic temporal graphs is polynomial time solvable. 
Due to the periodicity, a shortest path in the underlying static graph will always be a shortest temporal path in the periodic temporal graph where each edge receives at least one label.
Since the latter is mandatory, the resulting problem is answered with yes if and only if the given matrix~$D$ is the distance matrix of the underlying graph.

\section{Conclusion}\label{sec conc}
Our work spawns several interesting future questions.
\begin{itemize}
\item We showed that \EARLY is polynomial-time solvable if we are allowed to assign up to~$n$ labels per edge but becomes NP-hard when allowing only a single label per edge. What is the smallest number of labels per edge for which this problem is still polynomial-time solvable?
For example, is there an efficient algorithm when we are allowed to assign only~$\frac{n}{2}$ labels per edge?
\item Our hardness results for \SHORT use multiple labels per edge. 
Does this problem become polynomial-time solvable, if we restrict the respective labeling? 
For example, what if we enforce that at most one label per edge is allowed or we require a proper labeling, that is, a labeling where no two adjacent edges share a label?
\item Are there structural parameters for which we can solve~\SHORT in FPT-time.
For example, can we solve the problem efficiently if the underlying graph has bounded treewidth?
\item One could consider approximation of the considered problems. For example under the measurement of fulfilling as many entries as possible.
Are there constant factor approximations for \SHORT or \FAST?
\end{itemize}

\iflong
\else
\clearpage
\fi
\ifhal
\bibliographystyle{plainurl}
\else
\bibliographystyle{plain}
\fi
\bibliography{bib}

\end{document}